\let\emptyset\varnothing
\newcommand\hcancel[2][black]{\setbox0=\hbox{$#2$}
\rlap{\raisebox{.45\ht0}{\textcolor{#1}{\rule{\wd0}{1pt}}}}#2}
\newtheorem{prop}{Proposition}
\newtheorem{lemma}{Lemma}
\newtheorem{corollary}{Corollary}
\newtheorem{definition}{Definition}
\renewcommand{\ni}{\noindent}
\newcommand{\nn}{\nonumber}
\newcommand{\bs}{\bigskip}%
\newcommand{\EA}{\scalebox{0.6}{E}}
\newcommand{\bean}{\begin{eqnarray*}}
\newcommand{\eean}{\end{eqnarray*}}
\newcommand{\bea}{\begin{eqnarray}}
\newcommand{\eea}{\end{eqnarray}}
\newcommand{\be}{\begin{equation}}
\newcommand{\ee}{\end{equation}}
\newcommand{\bi}{\begin{itemize}}
\newcommand{\ei}{\end{itemize}}
\newcommand{\bal}{\begin{align}}
\newcommand{\eal}{\end{align}}
\definecolor{dgreen}{RGB}{34,139,34}
\begin{document}

\renewcommand{\thefootnote}{\fnsymbol{footnote}}

\title{With a Grain of Salt: Uncertain Veracity of External News and Firm Disclosures\footnote{We thank Yaakov Amihud, Daniel Andrei, Snehal Banerjee, Sugato Bhattacharyya, Edwige Cheynel, Jesse Davis, Eti Einhorn, Felix Feng, Sivan Frenkel, Rachel Flam, Ilan Guttman, Valentin Haddad, Mirko Heinle,  Bernard Herskovic, Moritz Hiemann, Navin Kartik, Ilan Kremer, Bart Lipman, Dan Luo, Nadya Malenko, Jordan Martel, Assaf Romm, Jacob Sagi, Ulrich Sch\"affer, Avanidhar  Subrahmanyam, Tsahi Versano, Ran Weksler, Alex Zentefis as well as  participants at  2023 Spring Meeting of the Finance Theory Group, 2020 Tel Aviv University Accounting Conference, 2020 Early Insights in Accounting Conference,  2020 Accounting and Economics Society, 2019 Junior Accounting Theory Conference, 2023 Bocconi Accounting Conference, Columbia Business School, Hebrew University of Jerusalem, Texas A\&M, and Reichman University for helpful comments and suggestions on this and previous versions of this paper. Beatrice Michaeli acknowledges financial support from Laurence and Lori Fink Center at University of California in Los Angeles.
}
}

  \author{ 
Jonathan Libgober\thanks{Jonathan Libgober is at the Department of Economics, USC Dornsife College of Letters, Arts and Sciences, and can be reached at libgober@usc.edu.}
  \hspace{1cm}
  Beatrice Michaeli\thanks{Beatrice Michaeli is at the Anderson School of Management, University of California in Los Angeles, and can be reached at beatrice.michaeli@anderson.ucla.edu.} 
  	\hspace{1cm} 
		Elyashiv Wiedman\thanks{Elyashiv Wiedman is at the School of Business Administration, Hebrew University of Jerusalem, and can be reached at elyashiv.wiedman@mail.huji.ac.il.} 
}

\maketitle

\clearpage

\vspace{2cm}

\begin{center}
\textbf{\large With a Grain of Salt: Uncertain Veracity of External News and Firm Disclosures}
\end{center}

\singlespacing

\bs
\bs

\ni \textbf{Abstract:} We examine how uncertain veracity of external news influences investor beliefs, market prices and corporate disclosures. Despite assuming independence between the news' veracity and the firm’s endowment with private information, we find that favorable news is taken “with a grain of salt” in equilibrium—more precisely, perceived as less likely veracious—which reinforces investor beliefs that nondisclosing managers are hiding disadvantageous information. Hence more favorable external news could paradoxically lead to lower market valuation. That is,  amid management silence, stock prices  may be non-monotonic in the positivity of external news. In line with mounting empirical evidence, our analysis implies asymmetric price reactions to news and price declines following firm disclosures. We further predict that external news that is  more likely veracious may increase or decrease the probability of disclosure  and link these effects to empirically observable characteristics.


\bigskip

 \bs  \ni \textbf{Keywords:} uncertain veracity, voluntary disclosure, price non-monotonicity, asymmetric price reaction, probability of disclosure

\bigskip

\bs  \ni \textbf{JEL:} D82, D83, G12

\clearpage

\renewcommand{\thefootnote}{\arabic{footnote}}

\doublespacing

\section{Introduction}

\noindent In their assessment of firms, investors rely not only on information included in corporate disclosures but also on insights from external news  such as analyst forecasts, mainstream and social media, peer reports and restatements, announcements of fiscal policies, regulations or changes in macroeconomic factors. The veracity of such news is not always straightforward: all too often it stems from rumors, errors, or otherwise uninformed, unreliable, or fraudulent sources.\footnote{This issue is particularly alarming when it comes to  social media where posts are often ``fake, out of context and straight propaganda" (Kelly 2023). For a growing empirical evidence and discussion of rumors and false information in various settings see Clarkson,  Joyce and Tutticci (2006); Schindler (2007); Marett and Joshi (2009); Kapferer (2013); Kimmel (2013); Ahern and Sosyura (2015); Kohlbeck and Vakilzadeh (2020); Alperovich, Cumming, Czellar and Groh (2021); Davis, Khadivar and Walker (2021); Liu and Moss (2022); Cai, Quan and Zhu (2023).} For instance, in November 2022, a statement asserted that the pharmaceutical company Eli Lilly and Co. intended to offer its insulin at no cost, causing an estimated decline of 15 billion dollars in the firm's market valuation. This statement was subsequently proven to be  false (Barr 2022; Harwell 2022). Similarly, in September 2008, a report---which was later deemed erroneous---about United Airlines' bankruptcy led to a 73\% loss of firm value within a mere 10 minutes of its emergence (Maynard 2008; Schaper 2008). The takeaway here is that external news needs to be taken ``with a grain of salt"  if its \emph{veracity  is uncertain}. But how does such uncertainty impact the incentives of companies to disclose their private information to the public, in response to or in anticipation of external news that might not be veracious? And what are the implications for investors?


This paper provides answers to these questions in the context of a disclosure model. We study the implications of uncertainty regarding the veracity of external news and elucidate that it can influence investor beliefs, corporate disclosure behavior and market prices in subtle ways. Several of our results diverge significantly from prior work set in disclosure settings that does not provide as central a role for belief updating about veracity. Among these results, most notable are our observations that investors are more skeptical about favorable news and prices can be non-monotonic 
in the positivity of external news. Our contribution is to clarify why such phenomena naturally occur \emph{as a direct consequence of} market anticipation that a given piece of news might not be veracious.  

Our model features a manager (``she") who may be endowed with private information about her firm value and, if informed, can voluntarily and truthfully disclose it to investors. The manager's objective is to maximize her firm's market price. The investors evaluate the firm after considering the manager's disclosure and news from an external source (``signal") with uncertain and unobservable veracity. In the main part of the paper, we assume that if the signal is veracious, it reflects the firm value; otherwise it is uninformative. This 
``truth-or-noise'' signal structure 
allows for the uncertainty over the veracity to have the greatest impact possible so that we can study its implications most clearly. 
It also fits the examples we have in mind: social media ``tweets" and ``posts" as well as mainstream media broadcasts and articles may be substantiated or based on untrue rumors and deliberate misinformation; reports may be truthful or false; announcements may be correct or erroneous; 
independent third-party experiments and trials may or may not be performed based on the appropriate procedure thus yielding reliable or unreliable results. As we illustrate in an extension, the truth-or-noise structure is not crucial for our main results, as long as the  \emph{veracity is uncertain}. 

Even though the manager's endowment with private information and the veracity of the signal are  independent events, we find that the investor beliefs about them are \emph{endogenously intertwined} since both are updated jointly following nondisclosure. 
Specifically, external news that is relatively unfavorable strengthens the joint beliefs that the news is veracious and that the nondisclosing manager is informed; relatively favorable news has the opposite effect. This pattern is driven by the manager's (endogenous) strategic disclosure behavior and what can be inferred from it: In an equilibrium where good       information is disclosed and  bad is withheld, the investors understand that an informed manager who did not disclose must have observed sufficiently disadvantageous information. Thus, when faced with a silent manager, the investors perceive a relatively unfavorable signal as more likely veracious, which in turn strengthens their beliefs that the silent manager is informed. Conversely: after observing a favorable signal, the investors cannot rationally believe that the signal is veracious and that the silent manager is informed, at the same time.

Overall, our results imply that investors are more skeptical about favorable external news. 
 This shift in investor beliefs about veracity has two implications: First, the market price may be  \emph{non-monotonic}  in the sense that relatively more favorable external news may paradoxically lead to lower market prices---this is because such news is viewed as less likely veracious. Second, the price is more sensitive to external news that is less favorable---this is because such news is perceived as more likely veracious. This finding  is consistent with mounting evidence on asymmetric asset price reactions to external news in various settings (e.g., Aggarwal and Schirm 1998; Goldberg and Grisse 2013; Blot, Hubert and Labondance 2020; Capkun, Lou, Otto, and Wang 2022; Xu and You 2023).

 Moving to the manager's equilibrium behavior, we first focus on the case of early disclosure where the manager decides whether to reveal her information before the external signal (e.g., because the firm's conference call is scheduled prior to media broadcast or release of macroeconomic news; alternatively, because the manager can only disclose before the so called ``quiet period" in the lead-up to an Initial Public Offering or the closing of a business quarter). A manager who has no information has no choice but remain silent; an informed manager can reveal or withhold her firm's value. Because the external signal has not arrived yet, the informed manager considers the expected (future) market price, conditional on the value she observed. Even though her expectation of the nondisclosure price is non-monotonic in the firm value, we still obtain a uniqueness result due to a key property: the price in this case exhibits a steep decline around a specific signal realization, making disclosure of any value beyond this realization even more attractive.  We also find that the expected nondisclosure price exceeds low firm values in the region before the steep decline. In this region, the manager benefits from the uncertain veracity  and relies on the external source to reveal the same information that she observed. 
Overall, we find that the presence of external sources  discourages (early) corporate disclosures and this effect is exacerbated when the signal is ex ante more likely veracious.

We next consider late disclosure: that is, a case where the manager's decision succeeds the external news (e.g., because macroeconomic  announcements, mainstream
media articles and broadcasts, or social media posts are released unexpectedly or before the
scheduled manager’s conference call). In this case the manager observes the external signal and decides whether to react by disclosing her information or to remain silent.
We find that the positivity of the external news can be divided into two regions: In the lower region, external news is relatively unfavorable and encourages disclosure. This effect is exacerbated when the likelihood for veracity is higher. The opposite is true in the higher region where external news is relatively more favorable and a higher veracity likelihood discourages disclosure.
Our predictions are  consistent with empirical evidence that a firm's likelihood to disclose information is influenced by the positivity of the information revealed by peers (Capkun, Lou, Otto, and Wang 2022). 


Our third disclosure setting assumes that managers can choose the timing of their disclosure at some rescheduling cost (e.g., because of reduced
preparation time, reallocation of resources and manpower to meet new timeline, cancellation of previously scheduled announcement, or disruption
of internal planning, decision-making processes and investor relations). We note that if advancing disclosure is more costly than delaying it, the manager at least weakly prefers to delay because external news might be favorable and there is lower cost associated with waiting. However, if advancing is cheaper, managers observing advantageous information prefer to disclose early and avoid the (higher) cost of delaying---thus, we predict that disclosures of good information occur before the arrival of external news. 

Prior to concluding the paper, we consider more general signal structures and incorporate frequent adjustment of market prices. We find that the latter may encourage disclosures. Our results in this extension may offer an explanation of a finding in Sletten (2012) where a firm disclosing in response to a peer's restatement experiences a decrease in its stock price. Our empirical predictions are formally summarized at the end of the study.

\section{Related Literature}

Our paper is related to  the literature on voluntary disclosure, initiated by Grossman (1981) and Milgrom (1981), and especially to the work about the impact of external news on  disclosure when the market is uncertain whether firms have private information.\footnote{Uncertain information endowment is one of the  frictions that prior literature finds to prevent information unraveling (Dye 1985). In the absence of any known friction, Einhorn (2018) finds that unraveling may also be prevented when there are competing sources of information in the market.}
Several papers (Frenkel, Guttman, and Kremer 2020; Dye and Sridhar 1995) consider settings where the arrival of external news depends on the firm's information endowment.
In these studies the market updates beliefs about the manager's information endowment because of the assumed correlation---without it, there is no effect on beliefs. In contrast, we posit that the arrival of external news is uncorrelated with the manager's endowment but the news' veracity is uncertain. This substantially alters how investors update beliefs as well as how market prices react in our equilibrium, and opens the door for non-monotonicity. Furthermore, in the above-mentioned studies, the content of external news is either orthogonal to firm value with certainty (as in Dye and Sridhar 1995)  or perfectly informative (as in Frenkel, Guttman and Kremer 2020). Under specific assumptions about the  correlation between external news and firm value, 
equilibria may only exist in mixed strategies (a possibility explored in recent work by Frenkel, Guttman, and Kremer 2023).

In several papers (e.g., Acharya, DeMarzo, and Kremer 2011; Menon 2020), the firm value and the external signal are imperfectly and positively correlated, with the correlation \emph{known and fixed}. Thus, while the arrival of the  signal affects the (conditional) distribution of the firm value, the \emph{investor beliefs about this correlation remain constant regardless of the  signal realization}. As a result, market prices in these models are \emph{monotonic} (see Section \ref{structuredisc} for in-depth discussion). In contrast, we study a setting where investors are \emph{uncertain about the veracity} of the external signal, i.e., about the correlation between external news and firm value. The arrival of the signal in our model leads to a non-trivial ``grain of salt"  pattern of beliefs' updating and, as a result, non-monotonicity in market price.
To our knowledge, this result is a novel implication of the form of endogenous updating that we study.

Some of our results---that  market prices can be non-monotonic and their reactions to news can be asymmetric---are reminiscent of those identified in several prior studies: however, \emph{the economic forces driving the results are substantially different}. In Veronesi (1999) markets react asymmetrically because risk-averse investors hedge against changes in their uncertainty about the fundamental and overreact (underreact) to bad (good) news when times are good (bad). In Banerjee and Green (2015) asymmetric price reactions emerge because the traders have mean-variance preferences: 
favorable news reduces the expected fundamental value and amplifies the negative risk effect, whereas unfavorable news increases it and counteracts the risk. In their model non-monotonicity occurs because for extremely favorable news the risk effect outweighs the mean effect thereby resulting in a price decrease. 

In Bond, Goldstein and Prescott (2010),  discontinuity  in  the fundamental arises due to corrective actions that firms undertake  after inferring information reflected in stock prices. Our main setting also features discontinuity: it is microfounded using a model of disclosure with external signals. In our case this imposes constraints in equilibrium: for instance, ruling out the possibility of a steep increase in the price, whereas in their model a jump upwards can occur depending on the impact of the intervention. In addition, the discontinuity we obtain is in the case of \emph{anticipatory} disclosure, and we do not obtain a discontinuity in the case where the manager's decision is corrective---i.e., when disclosure may be late. Lastly,  in Genotte and Leland (1990) and Barlevy and Veronesi (2003) prices can be discontinuous due to exogenous insurance of investment portfolio or presence of uninformed investors who are willing to buy at relatively high prices and endogenously act as insurance.

\section{Model and Benchmark}

\label{setting}

We extend the voluntary disclosure framework with uncertain information endowment (Dye 1985) by considering an additional source of public information with uncertain veracity.

\textbf{Players and payoffs.} The model entails a manager (``she") who runs a firm with  value $v\in[v_{min},v_{max}]$ and is a price-maximizer. 
A group of risk-neutral investors observes all publicly available information $\Omega$, forms beliefs, and prices the firm at $P(\Omega) = \mathbb{E}[v \lvert \Omega]$. In the main part of the paper we assume that the price is set once, after $\Omega$ is realized. In Section \ref{pref} we relax this assumption and allow investors to adjust the market price more frequently.

\textbf{Information structure.} The common prior belief is that $v$ is drawn from a cumulative differentiable distribution function $G$ (and probability distribution function $g$), which we assume is log-concave with prior expectation $\mathbb{E}[v]=\mu$.\footnote{The log-concave family includes multiple common distributions: normal, uniform, exponential, chi, beta, gamma, logistic etc. Many of our results do not require log-concavity but, to facilitate consistency, we invoke this assumption throughout.} The publicly available  $\Omega$ consists of an external signal (on occasion, just ``signal")  and the manager's voluntary disclosure: 

\begin{itemize}
 
\item[1.] \emph{External signal.} There is a  public signal $s$ with unobservable veracity $\phi\in \{V,N\}$. 
Throughout the main analysis we assume that with probability $q\in(0,1)$, the signal is veracious ($\phi=V$) and reflects the firm value, $s=v$. Otherwise, the signal is non-veracious ($\phi=N$) in a sense that  $s=x$ where $x\in[v_{min},v_{max}]$ is drawn from the same probability distribution as the firm value.\footnote{The support of a non-veracious external signal is the same as that of a veracious one---if this were not the case, the investors could learn whether the external signal is veracious for some signal values.} Such ``truth-or-noise" information structures are ubiquitous  in various settings.\footnote{For example, in Lewis and Sappington (1994) product buyers receive a signal that is either fully informative or uninformative about their taste; in Kanodia, Singh and Sperro (2005, Section 5) the accounting measurement
provides a truth-or-noise signal of the firm's investment; in Ottaviani and Sorensen (2006) an expert's signal is either informative or uninformative---later communicated by a cheap talk to the sender; in Banerjee and Green (2015) uninformed traders face informed traders whose information is either truth or noise; in Guttman and Marinovic (2018) a borrower privately observes a truth-or-noise signal and reports it with bias to avoid a covenant violation; in Banerjee, Davis and Gondhi (2020) a manager's signal either perfectly reflects an investment profitability or is pure noise; in Goncalves, Libgober, and Willis (2023) the retracted information could have been correct or false. See also the application examples in Johnson and Myatt (2006) and Ganuza and Penalva (2010).}
In our disclosure model this structure fits well the  examples  we have in mind (e.g., social media ``tweets" and ``posts" as well as mainstream media broadcasts and articles may be substantiated or based on untrue rumors and deliberate misinformation; reports may be truthful or false; announcements may be correct or erroneous; 
independent third-party experiments and trials may or may not be performed based on the appropriate procedure thus yielding reliable or unreliable results). 
Furthermore, the truth-or-noise structure streamlines our analysis and allows us to convey a more straightforward  intuition but, as we show in Section \ref{structuredisc}, is not crucial for our main results---they hold as long the signal \emph{veracity is uncertain}. 

  \item[2.] \emph{Manager's information endowment and voluntary disclosure.}  With probability $p\in (0,1)$, the manager is informed ($\kappa=I$) about the firm value, and otherwise is uninformed ($\kappa=U$). 
  We assume that $\kappa$ is independent of $v$ and $\phi$ and unknown to investors. An uninformed manager cannot credibly communicate the lack of information and has no choice but remain silent ($d=\emptyset$). An informed manager can voluntarily disclose the firm value  ($d=v$) to investors before the arrival of the external signal (which we refer to as ``early disclosure'')   or after (which we refer to as ``late disclosure''). 
  Following the voluntary disclosure literature, any disclosed value is verifiable and thus truthful.   We further assume that in equilibrium, a manager who is indifferent between disclosure decisions chooses to remain silent.\footnote{Assuming a particular tie-breaking rule is common in the literature as a way to avoid equilibrium multiplicity. Our results characterize all equilibria, even without this assumption.} We describe the outcomes without assumptions on how indifference is broken in Section \ref{first}. In this section we also show that, except for the indifference point, the manager always has strict preferences. Thus, \emph{while we allow for mixed strategies in our analysis, these do not arise in equilibrium}.

\end{itemize}

\noindent To avoid confusion, we refer to relatively high/low $v$ as ``\emph{good}/\emph{bad} (manager's) information or value" and to relatively high/low $s$ as ``\emph{favorable}/\emph{unfavorable} (external) news or signal."

\textbf{Timeline.} The timeline of events is illustrated in Figure 1. 
At date 1, the manager observes the firm value $v$ with probability $p$. At date 3, the external signal $s$ is realized. Depending on the specific setting (early case in Section \ref{first}, late case in Section \ref{second} or dynamic case in Section \ref{freqd}) the manager can disclose the observed value at date 2 
and/or at date 4. 
At date 5, the investors price the firm.

\begin{figure}[t] 
\setlength{\unitlength}{1.cm}
\begin {picture}(14,4)\thicklines
\put(0.,2){\vector(1,0){17}}

\put(0,1.75){\line(0,1){0.5}}
\put(0,2.5){1}
\put(0,1.25){Manager's}
\put(0,0.75){information}
\put(0,0.25){endowment}

\put(3.6,1.75){\line(0,1){0.5}}
\put(3.6,2.5){2}
\put(2.,3.1){(``early disclosure")}

\put(7.2,1.75){\line(0,1){0.5}}
\put(7.2,2.5){3}
\put(7.2,1.25){External signal}
\put(7.2,0.75){realization}

\put(10.8,1.75){\line(0,1){0.5}}
\put(10.8,2.5){4}
\put(9.5,3.1){(``late disclosure")}

\put(5.2,0){\textcolor{blue}{\textbf{Manager's disclosure}}}

\put(3.6,0.1){\color{blue}\line(0,0){0.3}}
\put(3.6,0.5){\color{blue}\vector(0,0){0.5}}

\put(10.8,0.1){\color{blue}\line(0,0){0.3}}
\put(10.8,0.5){\color{blue}\vector(0,0){0.5}}

\put(3.6,0.1){\color{blue}\line(1,0){0.3}}
\put(4,0.1){\color{blue}\line(1,0){0.3}}
\put(4.4,0.1){\color{blue}\line(1,0){0.3}}
\put(4.8,0.1){\color{blue}\line(1,0){0.3}}

\put(9.6,0.1){\color{blue}\line(1,0){0.3}}
\put(10,0.1){\color{blue}\line(1,0){0.3}}
\put(10.4,0.1){\color{blue}\line(1,0){0.4}}

\put(14.4,1.75){\line(0,1){0.5}}
\put(14.4,2.5){5}
\put(14.4,1.25){Investors}
\put(14.4,0.75){price} 
\put(14.4,0.25){the firm}

\end {picture}
\begin{center}
\bs
\textbf{Figure 1}: Timeline of events
\end{center}
\end{figure}

\textbf{Benchmark.} As a benchmark (superscript ``B"), suppose there is no signal (as in Dye 1985; Jung and Kwon 1988) or the signal is non-veracious with certainty ($q=0$ so that investors ignore it). 
The  price following disclosure is $P(v)=v$. The price following nondisclosure is
\bea
P(\emptyset) &= &\Pr(I \lvert \emptyset) \cdot \mathbb{E}[v \lvert v \leq v^B] + \Pr(U \lvert \emptyset)\cdot \mu
\label{PNDD}
\eea
when the investors conjecture that an informed manager discloses all values above a threshold $v^B$.\footnote{Here, $\Pr(I \lvert \emptyset) =  \frac{p \cdot G(v^B)}{1-p+p\cdot G(v^B)}$ and $\Pr(U \lvert \emptyset)=1-\Pr(I \lvert \emptyset)$. Note that these conditional probabilities also depend on the disclosure threshold---but we suppress it to avoid clutter.}  In equilibrium, when the manager observes $v=v^B$, she is indifferent between disclosing and withholding her information. That is, $P(\emptyset)=P(v=v^B)$.

\setcounter{lemma}{-1}

\begin{lemma}\emph{\textbf{(Dye 1985; Jung and Kwon 1988)}}
\label{lm1}
When  $q=0$, 
there exists a unique threshold $v^B\in(v_{min},\mu)$, such that the manager discloses if $v > v^B$ and withholds otherwise. The threshold $v^B$ 
 is decreasing in the probability of information endowment, $p$.
\end{lemma}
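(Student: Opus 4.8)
The plan is to reduce the fixed-point condition that defines $v^B$ to a single scalar equation and analyze that equation directly. The nondisclosure price in \eqref{PNDD} depends on the conjectured cutoff, so the equilibrium threshold is the value $t$ at which a manager observing $v=t$ is indifferent, i.e.\ $P(\emptyset)=t$. Substituting the conditional probabilities from the footnote and clearing the common denominator $1-p+pG(t)$, this indifference condition becomes
\[
\bigl(1-p+pG(t)\bigr)\,t \;=\; p\!\int_{v_{min}}^{t} v\,g(v)\,dv \;+\;(1-p)\mu .
\]
The key simplification I would use is integration by parts, $\int_{v_{min}}^{t} v\,g(v)\,dv = tG(t)-\int_{v_{min}}^{t} G(v)\,dv$, which cancels the $p\,G(t)\,t$ terms on both sides and leaves the clean expression
\[
F(t)\;:=\;(1-p)(t-\mu)\;+\;p\!\int_{v_{min}}^{t} G(v)\,dv\;=\;0 .
\]
Finding $v^B$ thus amounts to locating the root of $F$.

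For existence and uniqueness I would evaluate $F$ at the endpoints and check its monotonicity. At $t=v_{min}$ the integral vanishes, so $F(v_{min})=(1-p)(v_{min}-\mu)<0$ (using $p<1$ and $v_{min}<\mu$ for a nondegenerate $G$); at $t=\mu$ the first term vanishes, so $F(\mu)=p\int_{v_{min}}^{\mu}G(v)\,dv>0$. Continuity of $F$ and the intermediate value theorem then deliver a root in $(v_{min},\mu)$. For uniqueness I would differentiate, $F'(t)=(1-p)+pG(t)\ge 1-p>0$, so $F$ is strictly increasing and the root is unique; this also pins $v^B$ strictly inside $(v_{min},\mu)$.

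To confirm this root is an equilibrium rather than merely a solution of the indifference equation, I would note that the nondisclosure price $P(\emptyset)=v^B$ is a constant that does not depend on the manager's realized value. Hence an informed manager strictly prefers to disclose whenever $v>v^B$ (she secures $v>P(\emptyset)$) and to withhold whenever $v<v^B$, with indifference only at the single point $v=v^B$, where the tie-breaking rule prescribes silence. The conjectured cutoff strategy is therefore self-fulfilling.

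Finally, for the comparative static in $p$ I would apply the implicit function theorem to $F(v^B;p)=0$. Since $\partial F/\partial t=(1-p)+pG(t)>0$ and $\partial F/\partial p = (\mu-t)+\int_{v_{min}}^{t}G(v)\,dv$, which is strictly positive at $t=v^B<\mu$, we get $dv^B/dp=-(\partial_p F)/(\partial_t F)<0$, so the threshold decreases in $p$. I expect the only real obstacle to be the algebraic reduction in the first step: one must recognize that integration by parts turns the self-referential pricing condition into a manifestly monotone function, after which existence, uniqueness, interiority, and the comparative static all follow mechanically.
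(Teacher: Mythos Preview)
Your argument is correct. The paper itself omits the proof entirely, simply citing Dye (1985) and Jung and Kwon (1988); your integration-by-parts reduction to the monotone function $F(t)=(1-p)(t-\mu)+p\int_{v_{min}}^{t}G(v)\,dv$ is a clean and self-contained way to establish existence, uniqueness, interiority in $(v_{min},\mu)$, and the comparative static, so nothing is missing.
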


The key insight in Dye (1985) is that informed managers can pretend to be uninformed because the market is unaware about their information endowment. 
In this paper, we show that external signals with uncertain veracity affect the ability of managers to pretend to be uninformed in two ways. First, because these signals may provide  information about firm value, they directly affect the market price and the managers' disclosure decisions. Second,  the signals also influence the investor beliefs regarding the managers' information endowment and the signal veracity. As a result, the signals indirectly affect the market price and corporate disclosures. This second channel drives many of our observations, such as non-trivial beliefs' updating and prices' non-monotonicity. 

\section{Investor Beliefs and Market Prices} \label{sect:updating}

We solve the model by backward induction. First, we consider the investor beliefs and market price that arise  at date 5, when the signal realization and the manager's disclosure or nondisclosure are given. To proceed with the analysis, for now we conjecture that the disclosure follows a threshold rule:
in particular, the manager withholds values that are lower than some---\emph{fixed at that point of time}---threshold $\widehat{v} \in [v_{min},v_{max}]$ and discloses otherwise.\footnote{In our late setting the manager's date-4 choice of disclosure threshold depends on the realized at date 3 signal. However, when the beliefs and price are formed at date 5 the signal had already been realized and the manager had already disclosed all values above that---\emph{fixed at date 5}---threshold. } 
In Section \ref{main}, when we solve the date-2 and date-4 disclosure decisions, we formally prove that the equilibrium is indeed of threshold type and is unique.

Similar to the benchmark case, the market price when the manager discloses  is determined only by the disclosed value, 
\bea
P(s, v) = \mathbb{E}[v\lvert v,s]=v.
\label{PD}
\eea
The investors disregard the external signal because the manager observes the firm value precisely and her disclosure is truthful.  
When the firm is silent, the price depends on the investor beliefs about the manager's information endowment and the signal veracity. Specifically, the investors consider four possible events: (i) the manager is uninformed and the signal is non-veracious; (ii) the manager is uninformed and the signal is veracious; (iii) the manager is informed and the signal is non-veracious; (iv) the manager is informed and the signal is veracious. Therefore the market price can be expressed as:
\bea
P(s, \emptyset) = \mathbb{E}[v \lvert s, \emptyset] 
&=& \Pr(U, N \lvert s, \emptyset) \cdot \mathbb{E}[v \lvert U, N, s, \emptyset] 
+  \Pr(U, V \lvert s, \emptyset) \cdot \mathbb{E}[v \lvert U, V, s, \emptyset]
\nn
\\
&&+\Pr(I , N \lvert s, \emptyset)\cdot  \mathbb{E}[v\lvert I, N, s, \emptyset ]  
+  \Pr(I, V \lvert s, \emptyset) \cdot \mathbb{E}[v \lvert I, V, s, \emptyset].  \qquad 
\label{PNDgeneral0}
\eea

To simplify \eqref{PNDgeneral0}, consider the case where the investors believe that the signal is non-veracious and the manager is uninformed. Then, because there is nothing to be learned from the signal and the manager's silence, the market expectation about firm value   is simply the prior, $\mathbb{E}[v \lvert U, N, s, \emptyset]=\mathbb{E}[v]=\mu$.
If investors believe the signal is veracious, their expectation of the firm value is simply the signal, 
$\mathbb{E}[v \lvert \kappa, V, s, \emptyset]=s$, regardless of whether they believe the manager is informed ($\kappa=I$) or not ($\kappa=U$). This is because a veracious signal perfectly reflects the firm value. 
The last case---when investors believe the manager is informed and the signal is non-veracious---introduces sensitivity of the price function to the disclosure threshold. Note that, in this case, there is nothing to be learned from the signal, and so the investors disregard it. The  nondisclosure decision, however, indicates that the manager prefers to withhold the observed value. 
Thus the market expectation about the firm value in this case is $\mathbb{E}[v\lvert I, N, s, \emptyset ]=\mathbb{E}[v\lvert v \leq \widehat{v} ]$.
We can simplify the nondisclosure  price in \eqref{PNDgeneral0}, 
\bea
P(s, \emptyset) &= &\Pr(U, N \lvert s, \emptyset) \cdot \mu 
+  \Pr(U, V \lvert s, \emptyset) \cdot s
\nn
\\
&&+ \Pr(I, N \lvert s, \emptyset) \cdot  \mathbb{E}[v\lvert v \leq \widehat{v} ]  
+  \Pr(I, V \lvert s, \emptyset) \cdot s. \quad  \quad 
\label{PNDgeneral5}
\eea

Our next results describe the probabilities the market assigns to each event $(\phi,\kappa)$, and illustrate how the beliefs of the investors about the occurrence of these events are affected by the observed signal in a nontrivial way (despite the fact that they are initially independent).

\begin{lemma}
\label{beliefs}
    Let $\gamma(s,\widehat{v}) \equiv q(1-\mathbbm{1}_{s>\widehat{v}}\cdot p)+(1-q)\big(1-p+p G(\widehat{v}))$.     For a given threshold $\widehat{v}$, the investor beliefs conditional on $d=\emptyset$ and $s$ are given by 
\bean
\Pr(I, V \lvert s,\emptyset)&=& \frac{q p \cdot \mathbbm{1}_{s\leq\widehat{v}}  }{\gamma(s, \widehat{v})},        \quad \quad
\Pr(I,N\lvert s,\emptyset)= \frac{(1-q) p G(\widehat{v})}{ \gamma(s,\widehat{v})}, \\
\Pr(U,V \lvert s,\emptyset)&=& \frac{q (1-p) }{ \gamma(s,\widehat{v})}, \quad \quad
\Pr(U,N\lvert s,\emptyset)= \frac{(1-q) (1-p) }{ \gamma(s,\widehat{v})}.
   \eean
\end{lemma}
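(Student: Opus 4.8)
The plan is to obtain all four posteriors through a single application of Bayes' rule: for each event $(\kappa,\phi)\in\{I,U\}\times\{V,N\}$ I would compute the joint ``density'' of observing the signal realization $s$ together with nondisclosure $d=\emptyset$, weight it by the prior $\Pr(\kappa,\phi)$, and then normalize by the marginal density of $(s,\emptyset)$. Since $\kappa$ and $\phi$ are independent with $\Pr(\kappa=I)=p$ and $\Pr(\phi=V)=q$, the four priors factor as $pq,\ p(1-q),\ (1-p)q,\ (1-p)(1-q)$. The only event-dependent quantities I then need are (i) the probability that silence arises and (ii) the density that the signal equals $s$, conditional on each event.

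First I would dispatch the two uninformed events. An uninformed manager is silent with probability one, so nondisclosure contributes no likelihood weight here. When the signal is non-veracious, $s$ is an independent draw from $g$, giving signal density $g(s)$; when it is veracious, $s=v$ with $v\sim g$, so the signal density is again $g(s)$. Hence the $(U,N)$ and $(U,V)$ contributions are $(1-p)(1-q)\,g(s)$ and $(1-p)q\,g(s)$. Next the informed, non-veracious event: the signal is independent of $v$ (density $g(s)$), while silence requires $v\le\widehat{v}$, an event of probability $G(\widehat{v})$; marginalizing over $v$ yields the contribution $p(1-q)\,G(\widehat{v})\,g(s)$.

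The step I expect to carry the real content is the informed, veracious event, because there the signal and the disclosure decision are no longer independent---both are governed by the single realization $v$. Here $s=v$, so silence (the condition $v\le\widehat{v}$) is equivalent to $s\le\widehat{v}$; the joint density of $(s,\emptyset)$ on this event is therefore $g(s)\,\mathbbm{1}_{s\le\widehat{v}}$, and the contribution is $pq\,g(s)\,\mathbbm{1}_{s\le\widehat{v}}$. This is precisely what injects the indicator into the final formulas and is the source of the favorable/unfavorable asymmetry the paper exploits later.

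Finally I would sum the four contributions to obtain the marginal density of $(s,\emptyset)$. The factor $g(s)$ is common to every term, so I would pull it out; using $(1-p)(1-q)+(1-p)q=1-p$ and rewriting $\mathbbm{1}_{s\le\widehat{v}}=1-\mathbbm{1}_{s>\widehat{v}}$, the bracketed remainder collapses to exactly $\gamma(s,\widehat{v})=q\big(1-\mathbbm{1}_{s>\widehat{v}}\,p\big)+(1-q)\big(1-p+pG(\widehat{v})\big)$. Dividing each contribution by $g(s)\,\gamma(s,\widehat{v})$ cancels $g(s)$ and delivers the four claimed expressions. The only points requiring genuine care are the bookkeeping that $g(s)$ really is common to all four events---so that it cancels rather than distorting the posteriors---and the short algebraic identity matching the summed bracket to $\gamma(s,\widehat{v})$.
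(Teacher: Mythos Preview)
Your proposal is correct and follows essentially the same Bayes-rule computation as the paper; the only differences are presentational. The paper treats the two regimes $s<\widehat{v}$ and $s>\widehat{v}$ as separate cases and formalizes the density argument via limits of shrinking intervals $[s',s'+\Delta]$ (so that $\Pr(s\in A')$ plays the role of your $g(s)$ and cancels), whereas you work directly with densities and carry the indicator $\mathbbm{1}_{s\le\widehat{v}}$ through a single unified computation---but the substance of the argument is identical.
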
  

\begin{prop} \emph{\textbf{(Non-constant joint beliefs)}}
\label{joint}
For given threshold $\widehat{v}$ the joint investor beliefs about $\phi$ and $\kappa$ depend on the realization of the external signal $s$: 
\bean
\Pr(\kappa, \phi\lvert s\leq\widehat{v}, \varnothing)&>& 
\Pr(\kappa, \phi\lvert s>\widehat{v}, \varnothing) \text{ if $\phi=V$ and $\kappa=I$,}
\\
\Pr(\kappa, \phi\lvert s\leq\widehat{v}, \varnothing)&<& 
\Pr(\kappa, \phi\lvert s>\widehat{v}, \varnothing) \text{ otherwise.}
\eean
\end{prop}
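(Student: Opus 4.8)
The plan is to exploit the single structural feature that makes this proposition almost immediate: all four joint beliefs delivered by Lemma~\ref{beliefs} are \emph{ratios sharing the common denominator} $\gamma(s,\widehat v)$, and the only way $s$ enters any of these expressions is through the indicators $\mathbbm{1}_{s\le\widehat v}$ and $\mathbbm{1}_{s>\widehat v}$. Since those indicators depend on $s$ only through the binary event $\{s\le\widehat v\}$ versus $\{s>\widehat v\}$, each of the four probabilities is constant on each of the two regions, and the claim reduces to comparing two numbers per event. So the first step is to record the two values of the denominator. Writing $\gamma_{\le}$ for its value when $s\le\widehat v$ (so $\mathbbm{1}_{s>\widehat v}=0$) and $\gamma_{>}$ for its value when $s>\widehat v$ (so $\mathbbm{1}_{s>\widehat v}=1$), one gets
\[
\gamma_{\le}=q+(1-q)\big(1-p+pG(\widehat v)\big),\qquad \gamma_{>}=q(1-p)+(1-q)\big(1-p+pG(\widehat v)\big),
\]
so that $\gamma_{\le}-\gamma_{>}=qp>0$: the common denominator is strictly larger on the unfavorable region $\{s\le\widehat v\}$.

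The second step handles the three ``otherwise'' events $(I,N)$, $(U,V)$, $(U,N)$ simultaneously. For each of these the numerator in Lemma~\ref{beliefs}---respectively $(1-q)pG(\widehat v)$, $q(1-p)$, and $(1-q)(1-p)$---\emph{does not depend on $s$}. Hence each of these three probabilities is a fixed positive numerator divided by the denominator, and is therefore a strictly decreasing function of the denominator. Because $\gamma_{\le}>\gamma_{>}$, every one of them is strictly smaller on $\{s\le\widehat v\}$ than on $\{s>\widehat v\}$, which is exactly the claimed inequality in the ``otherwise'' case. Strictness uses $p,q\in(0,1)$ (and, for the $(I,N)$ event, an interior threshold so that $G(\widehat v)>0$; if $\widehat v=v_{min}$ this event has zero probability on both regions and only the weak inequality can be stated).

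The third step is the event $(I,V)$, which behaves in the opposite direction precisely because its numerator \emph{does} carry an indicator. On $\{s\le\widehat v\}$ its value is $qp/\gamma_{\le}>0$, while on $\{s>\widehat v\}$ the factor $\mathbbm{1}_{s\le\widehat v}$ vanishes and the probability is exactly $0$; hence $\Pr(I,V\mid s\le\widehat v,\varnothing)>\Pr(I,V\mid s>\widehat v,\varnothing)$, the first displayed inequality. The economic reading, worth stating alongside the algebra, is that a favorable signal $s>\widehat v$ together with nondisclosure is incompatible with the manager being informed \emph{and} the signal veracious: such a manager would have observed $v=s>\widehat v$ and disclosed. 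Ruling this combination out on the favorable region forces its posterior weight to zero and redistributes mass onto the other three events---the very redistribution responsible for their increase in the second step. There is no genuine obstacle here; the only points demanding care are (a) recognizing that $\gamma$ is a two-valued step function of $s$, so that no calculus or monotonicity property of $G$ is needed, and (b) tracking the boundary degeneracies that determine whether each inequality is strict.
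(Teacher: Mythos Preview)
Your proof is correct and is exactly the approach the paper has in mind: the paper's own proof is the one-liner ``Follows from Lemma~\ref{beliefs},'' and you have simply spelled out the comparison of the four ratios across the two regions that this entails. Your observation that everything reduces to $\gamma_{\le}-\gamma_{>}=qp>0$ together with which numerators carry the indicator is precisely the computation the paper leaves implicit.
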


\begin{figure}[t]
\setlength{\unitlength}{.5cm}
\vspace{-0.6cm}
\begin {picture}(11.5,11.5)\thicklines
\put(0,0){\includegraphics[width=.41\textwidth]{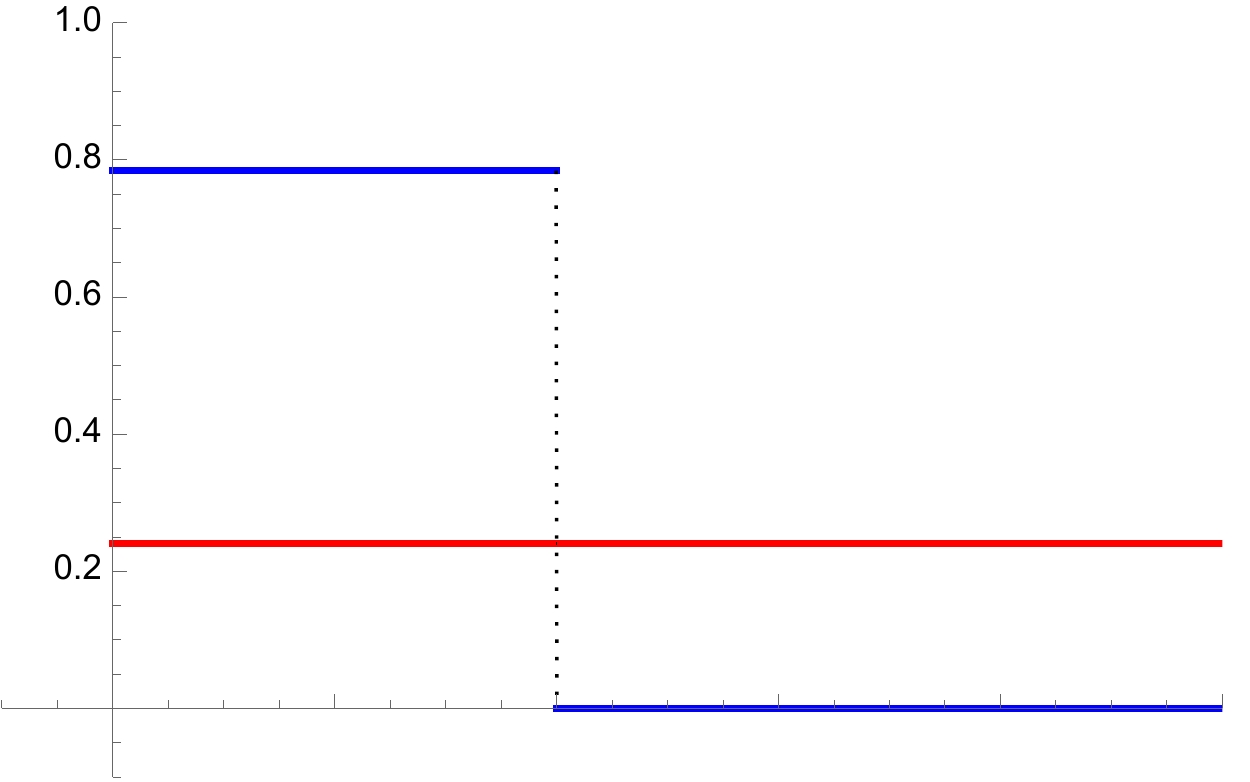}}
\put(18,0){\includegraphics[width=.41\textwidth]{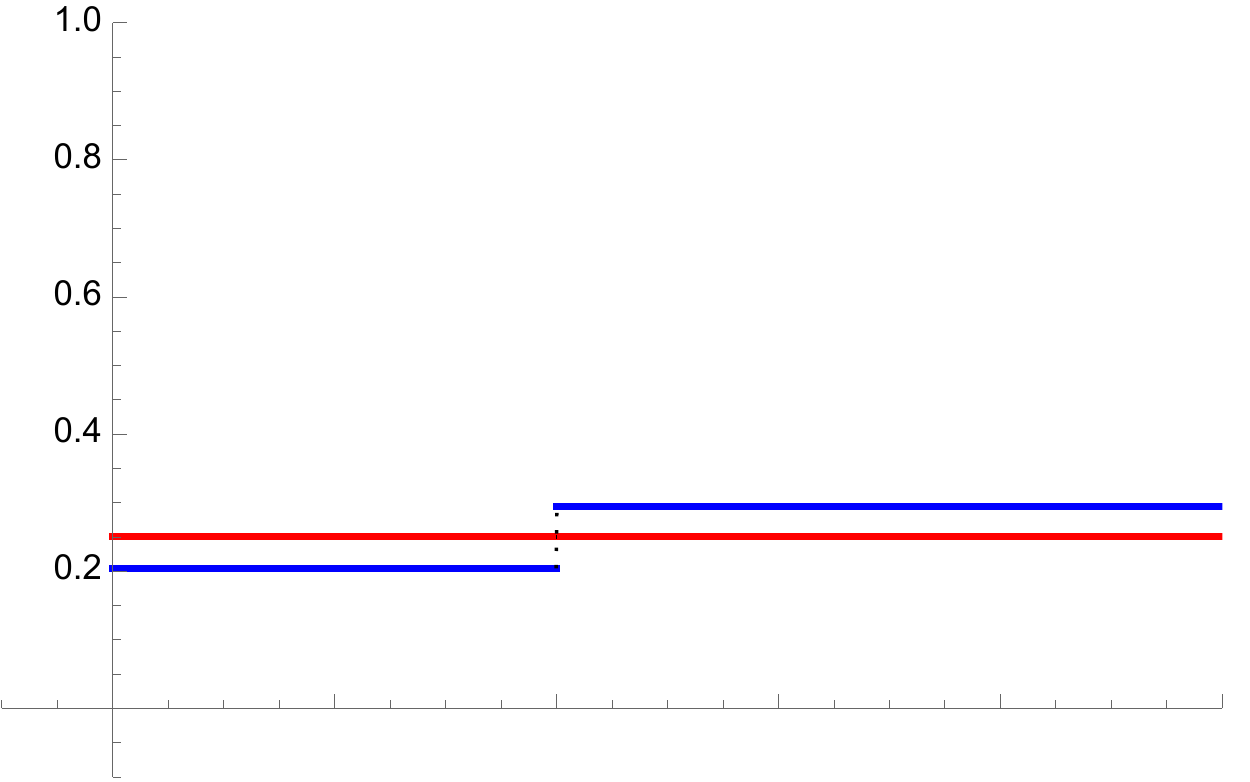}}
\put(0,-12){\includegraphics[width=.41\textwidth]{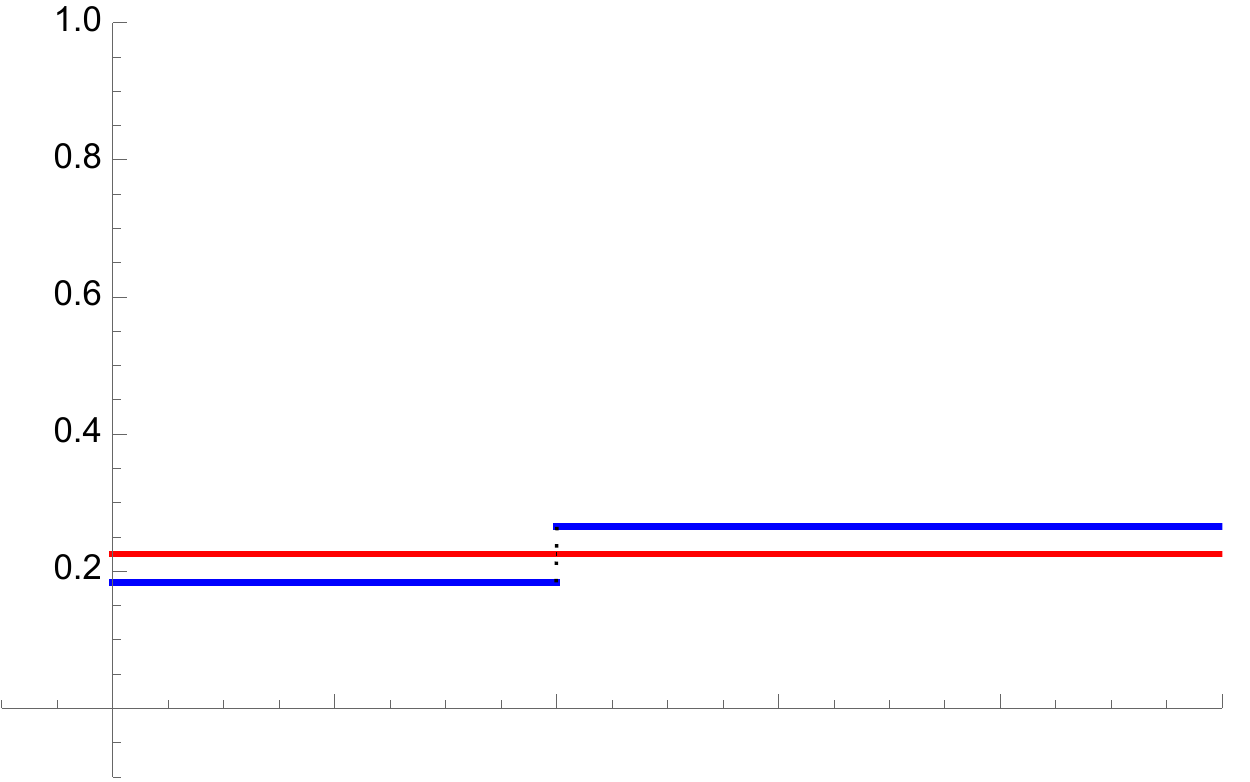}}
\put(18,-12){\includegraphics[width=.41\textwidth]{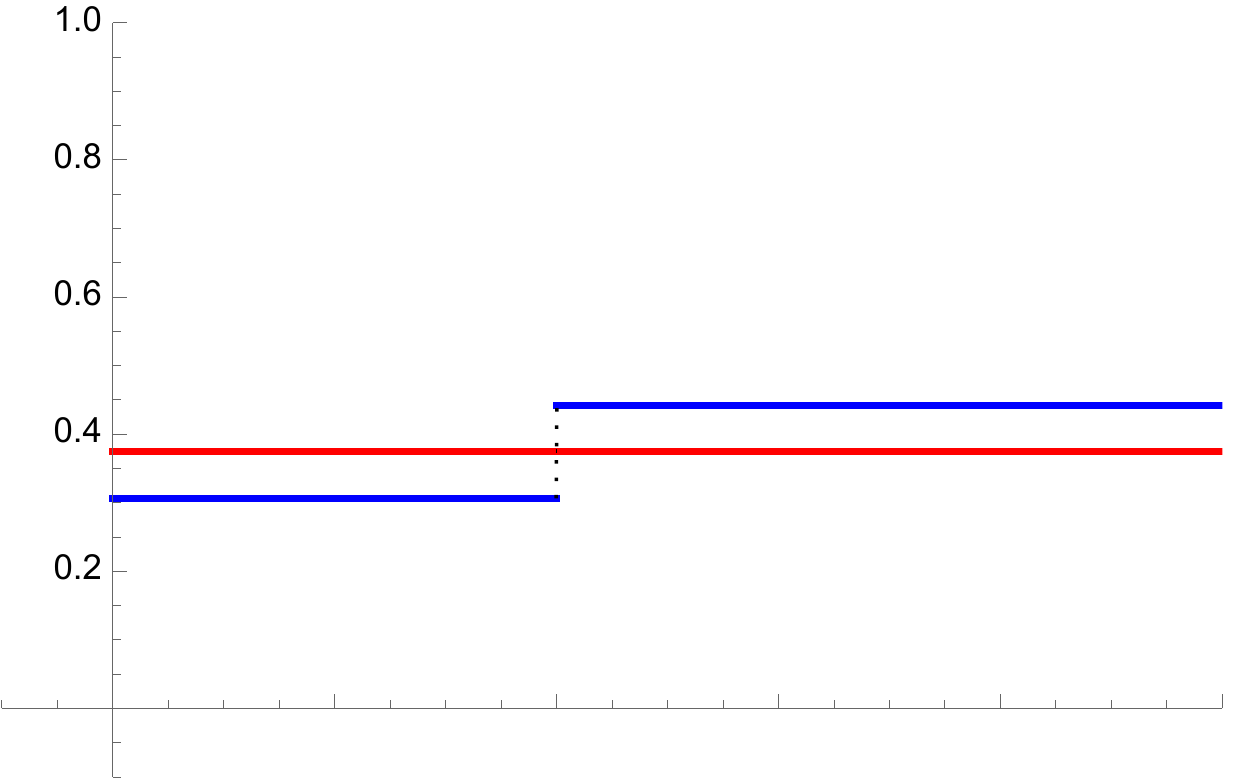}}
\put(13,-0.1){\footnotesize{$s$}}
\put(13,-12.1){\footnotesize{$s$}}
\put(31,-0.1){\footnotesize{$s$}}
\put(31,-12.1){\footnotesize{$s$}}
\put(0,9){\footnotesize{Beliefs}}
\put(18,9){\footnotesize{Beliefs}}
\put(0,-3){\footnotesize{Beliefs}}
\put(18,-3){\footnotesize{Beliefs}}
\put(6,-0.1){\footnotesize{$s=\widehat{v}$}}
\put(6,-12.1){\footnotesize{$s=\widehat{v}$}}
\put(24,-0.1){\footnotesize{$s=\widehat{v}$}}
\put(24,-12.1){\footnotesize{$s=\widehat{v}$}}
\put(13.5,0.7){\color{blue}\scriptsize{$\Pr(V,I\lvert s, \emptyset)$}}
\put(13.5,2.5){\color{red}\scriptsize{$\Pr(V)\Pr(I\lvert \emptyset)$}}
\put(31.5,3){\color{blue}\scriptsize{$\Pr(V,U\lvert s, \emptyset)$}}
\put(31.5,2.4){\color{red}\scriptsize{$\Pr(V)\Pr(U\lvert \emptyset)$}}
\put(13.5,-9.2){\color{blue}\scriptsize{$\Pr(N,I\lvert s, \emptyset)$}}
\put(13.5,-9.7){\color{red}\scriptsize{$\Pr(N)\Pr(I\lvert \emptyset)$}}
\put(31.5,-8){\color{blue}\scriptsize{$\Pr(N,U\lvert s, \emptyset)$}}
\put(31.5,-8.7){\color{red}\scriptsize{$\Pr(N)\Pr(U\lvert \emptyset)$}}
\put(6,-1){\small{\textbf{panel (a)}}}
\put(6,-13){\small{\textbf{panel (c)}}}
\put(24,-1){\small{\textbf{panel (b)}}}
\put(24,-13){\small{\textbf{panel (d)}}}
\end{picture}
\bigskip
\vspace{6.5cm}
\begin{center}
\textbf{Figure 2}: Joint investor beliefs about $\phi$ and $\kappa$ at $s=\widehat{v}$ \\ {\footnotesize{Numerical example with uniform distribution, $v_{min}=0$, $v_{max}=1$, $\Pr(I)=p=0.6$, $\Pr(V)=q=0.4$. \\ Under these parameter values, $\widehat{v}=0.4$, $\Pr(U\lvert \emptyset)=0.625$ and $\Pr(I\lvert \emptyset)=0.375$.}}
\end{center}

\end{figure}

At the heart of our results, especially the price non-monotonicity, is how the market makes inferences about the signal's veracity based on the manager's disclosure behavior.
The intuition behind the steep decrease in joint beliefs about the event $(\phi=V,\kappa=I)$ at $s=\widehat{v}$ is particularly insightful. In an equilibrium  with disclosure threshold $\widehat{v}$, the investors understand that an informed manager who did not disclose must have observed $v \leq \widehat{v}$.\footnote{We formally prove the existence of unique threshold equilibrium in Section \ref{main}.} Thus, if the external signal exceeds the disclosure threshold ($s>\widehat{v}$), it will be rationally inconsistent for the investors to believe that the signal is veracious and that the manager is informed at the same time. Thus, they infer that $\Pr(I, V \lvert s >\widehat{v}, \emptyset)=0< \Pr(I\lvert \emptyset)\Pr(V)$. 
In contrast, investors perceive a signal below the disclosure threshold ($s\leq \widehat{v}$) as more likely to be veracious, which strengthens their joint beliefs that the nondisclosing manager is also informed---in this case, $\Pr(I, V \lvert s \leq \widehat{v}, \emptyset)> \Pr(I\lvert \emptyset)\Pr(V)$.\footnote{It can also be show that $\Pr(V \lvert  s\leq \widehat{v},\emptyset)>\Pr(V)> \Pr(V \lvert  s>\widehat{v},\emptyset)$ and $\Pr(I\lvert s\leq \widehat{v},\emptyset)>\Pr(I \lvert \emptyset)>\Pr(I\lvert s>\widehat{v},\emptyset)$.}  To summarize, unfavorable $s$ strengthens the beliefs of investors that the external news is veracious and that the nondisclosing manager is informed. Conversely, favorable $s$ weakens the investor beliefs: such news is less likely to accurately describe the value of a firm run by an informed, nondisclosing manager. This result is graphically illustrated in panel (a) of Figure 2.
The rest of the beliefs, i.e., those about events other than $(\phi=V, \kappa=I)$,  are illustrated in panels (b), (c) and (d): in all of these cases more favorable external signals strengthen the joint beliefs.\footnote{The non-trivial effect of external signals on beliefs arises \emph{only} in the presence of strategic disclosure. To see why, note that if the manager cannot disclose for exogenous reasons (or, equivalently, has no information with certainty, $p=0$), the lack of disclosure and the signal realization carry no information about $\phi$ and $\kappa$ and so
beliefs in this hypothetical scenario remain constant: $\Pr(\phi,\kappa\lvert s,\emptyset)=\Pr(\kappa \lvert \emptyset)\Pr(\phi\lvert s)=\Pr(\kappa)\Pr(\phi)$, for any $s$, $\phi$ and $\kappa$.}

Putting our observations together, it is straightforward to provide a formal description of the market price following manager's silence $d=\emptyset$ and external signal $s$. 
\begin{lemma}
\label{price}
Let $\gamma(s,\widehat{v}) \equiv q(1-\mathbbm{1}_{s>\widehat{v}}\cdot p)+(1-q)\big(1-p+p G(\widehat{v}))$. For a given disclosure threshold $\widehat{v}$, the nondisclosure price is:
\bea
P(s, \emptyset) = \frac{q(1-\mathbbm{1}_{s>\widehat{v}}\cdot p) s+(1-q)((1-p)\mu+p G(\widehat{v}) \mathbb{E}[v\lvert v\leq \widehat{v}])}{\gamma(s,v)}.
 \label{priceformula}
\eea
\end{lemma}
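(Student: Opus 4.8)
The plan is to derive \eqref{priceformula} by direct substitution into the already-simplified price expression \eqref{PNDgeneral5}. The four conditional expectations appearing there---namely $\mathbb{E}[v\lvert U,N,s,\emptyset]=\mu$, $\mathbb{E}[v\lvert \kappa,V,s,\emptyset]=s$, and $\mathbb{E}[v\lvert I,N,s,\emptyset]=\mathbb{E}[v\lvert v\leq\widehat{v}]$---were established in the preceding discussion, so the only inputs I still need are the four joint posterior probabilities. These are supplied by Lemma \ref{beliefs}, and crucially all four share the common denominator $\gamma(s,\widehat{v})$. Substituting them into \eqref{PNDgeneral5} therefore lets me factor out $1/\gamma(s,\widehat{v})$ and reduces the claim to an identity between numerators.

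First I would collect the two terms that multiply $s$, namely $\Pr(U,V\lvert s,\emptyset)\,s$ and $\Pr(I,V\lvert s,\emptyset)\,s$, whose numerators from Lemma \ref{beliefs} are $q(1-p)$ and $qp\,\mathbbm{1}_{s\leq\widehat{v}}$. Their sum is $q\big[(1-p)+p\,\mathbbm{1}_{s\leq\widehat{v}}\big]s$. The key manipulation is the indicator identity $\mathbbm{1}_{s\leq\widehat{v}}=1-\mathbbm{1}_{s>\widehat{v}}$, which rewrites the bracket as $1-p\,\mathbbm{1}_{s>\widehat{v}}$; this is exactly the coefficient $q(1-\mathbbm{1}_{s>\widehat{v}}\cdot p)$ that multiplies $s$ in the target numerator of \eqref{priceformula}.

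Next I would collect the remaining two terms, $\Pr(U,N\lvert s,\emptyset)\,\mu$ and $\Pr(I,N\lvert s,\emptyset)\,\mathbb{E}[v\lvert v\leq\widehat{v}]$, whose numerators $(1-q)(1-p)$ and $(1-q)pG(\widehat{v})$ share the factor $(1-q)$; pulling it out yields $(1-q)\big((1-p)\mu+pG(\widehat{v})\,\mathbb{E}[v\lvert v\leq\widehat{v}]\big)$, which is precisely the second block of the target numerator. Adding the two collected pieces over the common denominator $\gamma(s,\widehat{v})$ then reproduces \eqref{priceformula} verbatim (up to the evident typo $\gamma(s,v)$ for $\gamma(s,\widehat{v})$ in the displayed denominator).

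There is no substantive obstacle here: the statement is a bookkeeping consequence of Lemma \ref{beliefs} together with the conditional expectations, and the only place demanding any care is the indicator algebra---keeping $\mathbbm{1}_{s\leq\widehat{v}}$ and $\mathbbm{1}_{s>\widehat{v}}$ consistently complementary so that the coefficient of $s$ lines up with the leading factor of $\gamma(s,\widehat{v})$. If anything, the more delicate content lives upstream, in the derivation of the posteriors in Lemma \ref{beliefs} (a four-event application of Bayes' rule to the product of the independent priors on $\kappa$ and $\phi$, restricted to the nondisclosure event), which I am entitled to take as given.
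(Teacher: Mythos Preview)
Your proposal is correct and matches the paper's own approach: the paper simply states that the lemma follows directly from equation \eqref{PNDgeneral5} and Lemma \ref{beliefs}, which is precisely the substitution-and-collection argument you outline. Your explicit handling of the indicator identity $\mathbbm{1}_{s\leq\widehat{v}}=1-\mathbbm{1}_{s>\widehat{v}}$ to align the coefficient of $s$ with the leading term of $\gamma(s,\widehat{v})$ is the only non-mechanical step, and you have it right.
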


\begin{figure}[t]
\setlength{\unitlength}{.5cm}
\vspace{-0.6cm}
\begin {picture}(11.5,11.5)\thicklines
\put(0,0){\includegraphics[width=.3\textwidth]{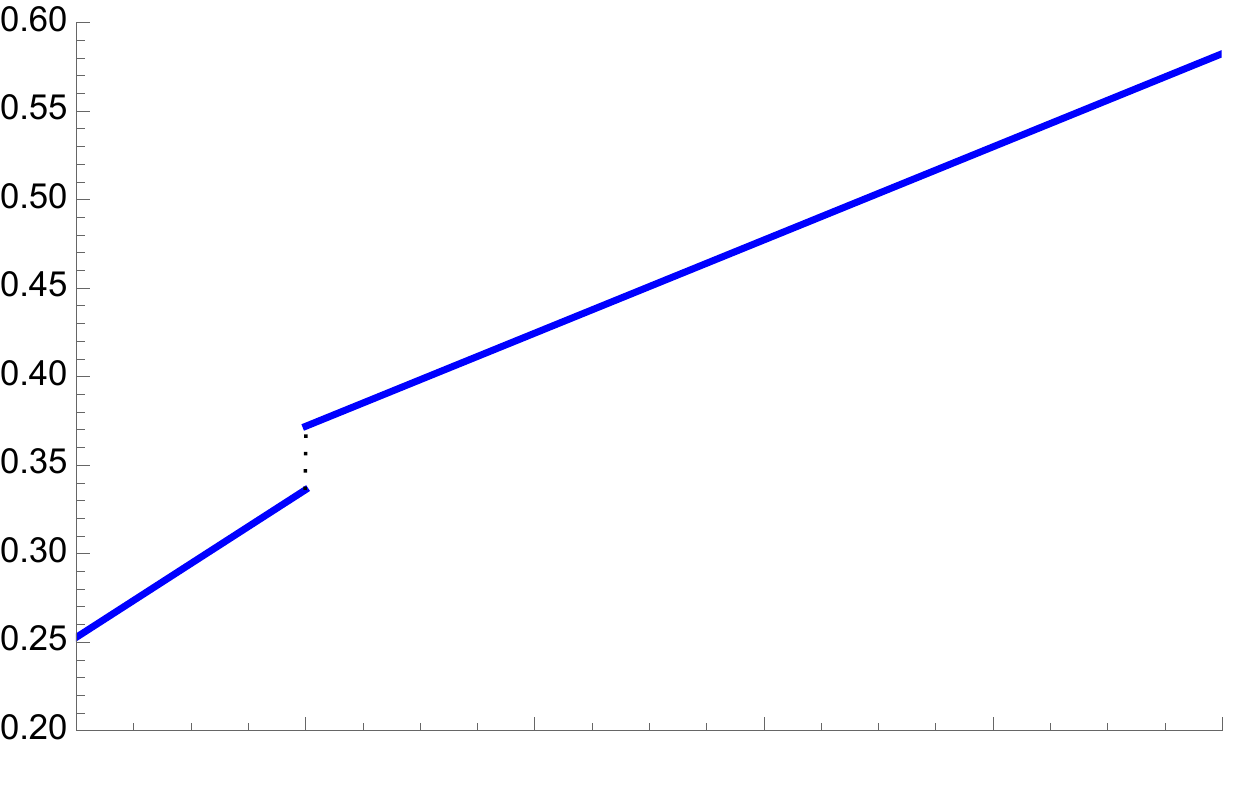}}
\put(12,0){\includegraphics[width=.3\textwidth]{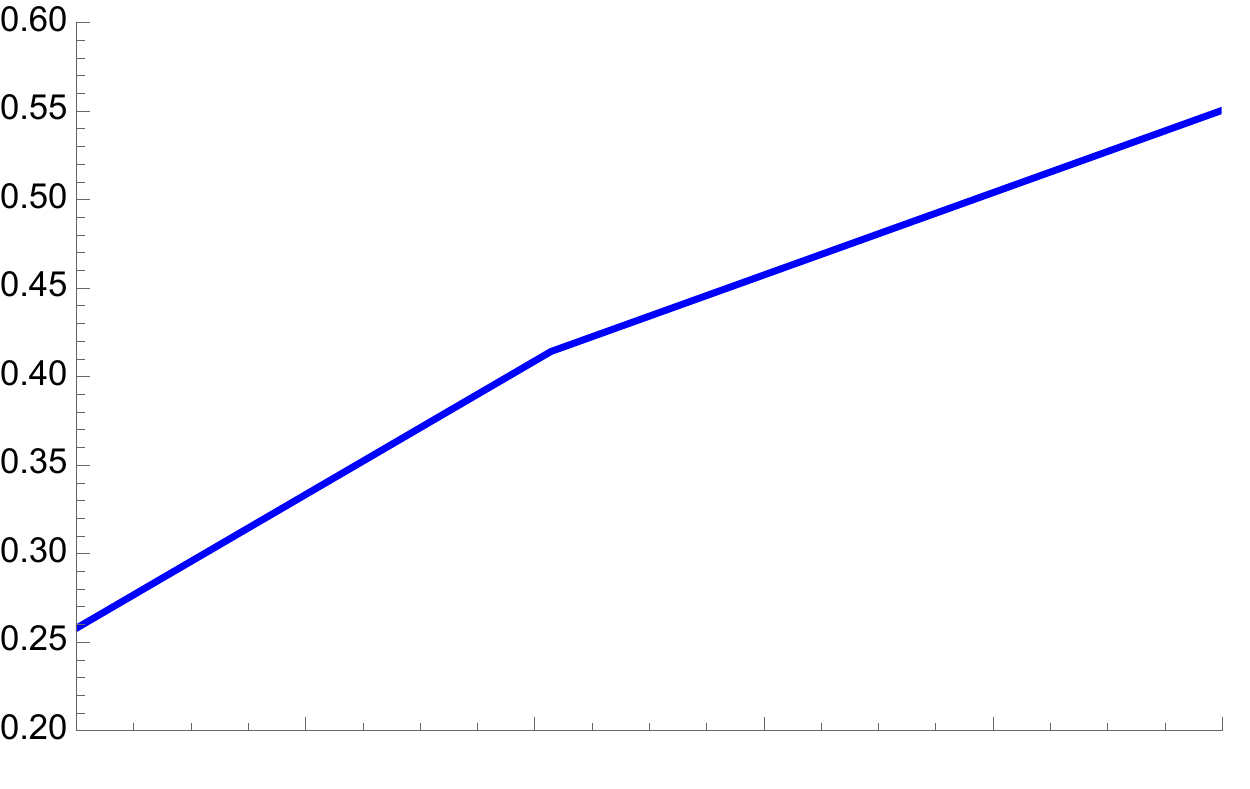}}
\put(24,0){\includegraphics[width=.3\textwidth]{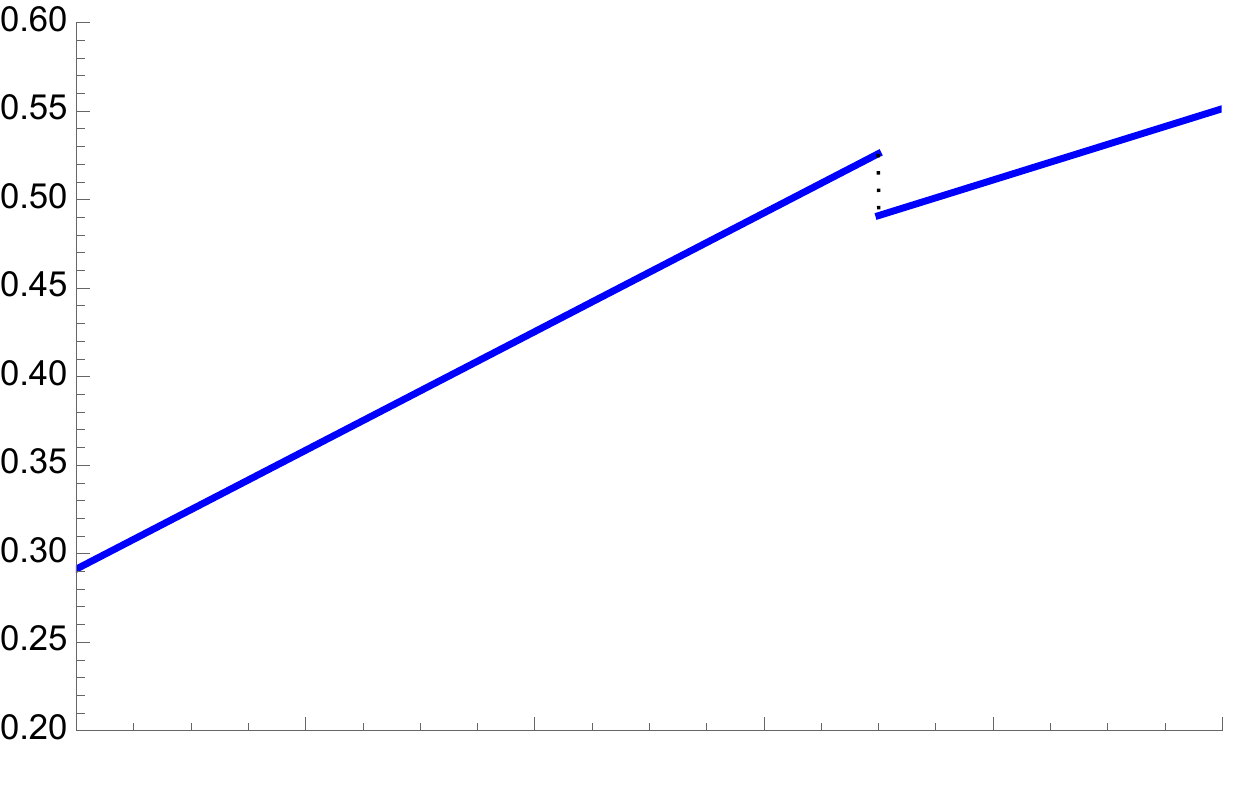}}
\put(10,-0.1){\footnotesize{$s$}}
\put(22,-0.1){\footnotesize{$s$}}
\put(34,-0.1){\footnotesize{$s$}}
\put(0,7){\footnotesize{Price}}
\put(12,7){\footnotesize{Price}}
\put(24,7){\footnotesize{Price}}
\put(3,-0.5){\footnotesize{$\widehat{v}=0.2<v^B$}}
\put(15,-0.5){\footnotesize{$\widehat{v}=v^B$}}
\put(27,-0.55){\footnotesize{$\widehat{v}=0.7>v^B$}}
\put(3,-1.5){\small{\textbf{panel (a)}}}
\put(15,-1.5){\small{\textbf{panel (b)}}}
\put(27,-1.5){\small{\textbf{panel (c)}}}
\end{picture}
\bigskip
\vspace{0.5cm}
\begin{center}
\textbf{Figure 3}: 
Nondisclosure price for given $\widehat{v}$ as a function of the external signal $s$\\ {\footnotesize{Numerical example with uniform distribution, $v_{min}=0$, $v_{max}=1$, $p=0.5$, $q=0.3$. Here, $v^B=0.42$.}}
\end{center}

\end{figure}

\noindent These observations drive the following pair of results, which clarify how the market's inference about veracity substantively influences the features of the market price: 

\begin{prop} \emph{\textbf{(Asymmetric price sensitivity)}}
\label{sensitivity}
For given threshold $\widehat{v}$ the nondisclosure market price is more sensitive to unfavorable external news, $\frac{\partial P(s, \emptyset \lvert  s\leq \widehat{v})}{\partial s}> \frac{\partial P(s,\emptyset \lvert  s> \widehat{v})}{\partial s}>0$.
\end{prop}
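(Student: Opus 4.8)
The plan is to read off the two one-sided sensitivities directly from the closed form in Lemma \ref{price}. The key structural observation is that, once $\widehat v$ is held fixed, the indicator $\mathbbm{1}_{s>\widehat v}$ is constant on each of the two regions $\{s\le\widehat v\}$ and $\{s>\widehat v\}$, so within each region the denominator $\gamma(s,\widehat v)$ in \eqref{priceformula} does not depend on $s$ at all, while the numerator is affine in $s$. Consequently $P(s,\emptyset)$ is piecewise affine in $s$ and the two one-sided derivatives are genuine constants, which is what makes the comparison clean. To organize the algebra I would abbreviate the $s$-independent blocks by $C \equiv (1-q)\big((1-p)\mu + pG(\widehat v)\,\mathbb{E}[v\mid v\le\widehat v]\big)$ and $D \equiv (1-q)\big(1-p+pG(\widehat v)\big)$, noting $D>0$ because $q<1$ and $1-p+pG(\widehat v)\ge 1-p>0$.

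First I would treat the unfavorable region. For $s\le\widehat v$ the indicator vanishes, so \eqref{priceformula} collapses to $P(s,\emptyset)=\frac{qs+C}{q+D}$, giving the constant slope $\frac{\partial P(s,\emptyset\mid s\le\widehat v)}{\partial s}=\frac{q}{q+D}$. For $s>\widehat v$ the indicator equals one, so the price becomes $P(s,\emptyset)=\frac{q(1-p)s+C}{q(1-p)+D}$, with slope $\frac{\partial P(s,\emptyset\mid s>\widehat v)}{\partial s}=\frac{q(1-p)}{q(1-p)+D}$. As a consistency check, each slope is exactly the posterior veracity weight $\Pr(V\mid s,\emptyset)=\Pr(U,V\mid s,\emptyset)+\Pr(I,V\mid s,\emptyset)$ from Lemma \ref{beliefs}, since only the veracious events contribute an $s$-dependent term $s$ to the price; this ties the sensitivity directly to the belief pattern in Proposition \ref{joint}.

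It then remains to establish the numeric chain $\frac{q}{q+D}>\frac{q(1-p)}{q(1-p)+D}>0$. I would do this by monotonicity of the map $f(a)=\frac{a}{a+D}$ on $a>0$: since $f'(a)=\frac{D}{(a+D)^2}>0$ (using $D>0$), $f$ is strictly increasing, and because $p\in(0,1)$ and $q\in(0,1)$ we have $q>q(1-p)>0$, whence $f(q)>f(q(1-p))>0$. Positivity of the right-hand slope is immediate from $q(1-p)>0$ and $D>0$. This is exactly the asserted inequality.

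There is essentially no hard step here: the entire content is the piecewise-affine recognition, after which the claim is a one-line application of monotonicity of $a\mapsto a/(a+D)$. The only point that warrants care is verifying $D>0$ (so that $f$ is strictly increasing and the slopes are well defined and strictly ordered), and that strictness of the first inequality hinges on $p>0$, which is guaranteed by $p\in(0,1)$. Economically, the jump in slope at $\widehat v$ is the analytic shadow of the ``grain of salt'' result: below the threshold the market loads the full veracity weight $q$ onto the signal, above it only the reduced weight $q(1-p)$, so the price tracks unfavorable news more responsively.
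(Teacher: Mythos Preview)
Your proof is correct and takes essentially the same approach as the paper, which simply says the result ``follows from Lemma \ref{price}''; you have spelled out the straightforward derivative computation and monotonicity comparison that the paper leaves implicit.
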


\begin{prop} \emph{\textbf{(Price non-monotonicity)}}
\label{pricenon}
For given threshold the nondisclosure price is non-monotonic in $s$ if $\widehat{v}> v^B$.
\end{prop}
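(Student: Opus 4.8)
The plan is to exploit the fact that the nondisclosure price in \eqref{priceformula} is a ratio that is piecewise linear in $s$, with a single potential discontinuity at $s=\widehat{v}$ generated by the indicator $\mathbbm{1}_{s>\widehat{v}}$. First I would split the analysis into the two regions. For $s\le\widehat{v}$ the indicator vanishes, so $P(s,\emptyset)=\frac{qs+K}{q+b}$, and for $s>\widehat{v}$ it is $P(s,\emptyset)=\frac{q(1-p)s+K}{q(1-p)+b}$, where I abbreviate $K\equiv(1-q)\big((1-p)\mu+pG(\widehat{v})\,\mathbb{E}[v\mid v\le\widehat{v}]\big)$ and $b\equiv(1-q)(1-p+pG(\widehat{v}))>0$. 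On each piece the denominator is a positive constant and the coefficient of $s$ is positive, so $P(s,\emptyset)$ is strictly increasing within each region (this is exactly the content already recorded in Proposition \ref{sensitivity}). Consequently the only way the price can fail to be monotone in $s$ is through a strictly downward jump at $s=\widehat{v}$, so the whole argument reduces to signing that jump.

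Next I would write the left and right limits at the threshold as values of the single auxiliary map $f(t)\equiv\frac{t\widehat{v}+K}{t+b}$, namely $P(\widehat{v}^-,\emptyset)=f(q)$ and $P(\widehat{v}^+,\emptyset)=f\big(q(1-p)\big)$, and note $q>q(1-p)$. A one-line derivative computation gives $f'(t)=\frac{\widehat{v}\,b-K}{(t+b)^2}$, so the sign of $f'$ — hence the direction in which $f$ moves as $t$ decreases from $q$ to $q(1-p)$ — is governed entirely by the sign of $\widehat{v}-K/b$. The key algebraic observation is that $K/b$ is precisely the benchmark nondisclosure price evaluated at the conjectured threshold $\widehat{v}$; writing $\Psi(\widehat{v})\equiv\frac{(1-p)\mu+pG(\widehat{v})\mathbb{E}[v\mid v\le\widehat{v}]}{1-p+pG(\widehat{v})}$ one has $K/b=\Psi(\widehat{v})$, which is exactly the right-hand side of \eqref{PNDD} with threshold $\widehat{v}$. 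Thus a downward jump, and hence non-monotonicity, occurs if and only if $\widehat{v}>\Psi(\widehat{v})$.

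It then remains to show the equivalence $\widehat{v}>v^B\iff\widehat{v}>\Psi(\widehat{v})$. I would establish this by a single-crossing argument for the benchmark fixed-point map. Differentiating $\Psi$ yields the clean identity $\Psi'(\widehat{v})=\frac{p\,g(\widehat{v})}{1-p+pG(\widehat{v})}\big(\widehat{v}-\Psi(\widehat{v})\big)$, so the gap $h(\widehat{v})\equiv\widehat{v}-\Psi(\widehat{v})$ satisfies $h'(\widehat{v})=1-\Psi'(\widehat{v})$ and in particular $h'=1$ at every zero of $h$. By Lemma \ref{lm1} the benchmark threshold is the point with $h(v^B)=0$, and since $h$ can only cross zero upward, $h$ stays strictly positive to the right of $v^B$; that is, $\widehat{v}>v^B$ forces $\widehat{v}>\Psi(\widehat{v})$. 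Combining this with the jump analysis delivers the claim (and incidentally reproduces the three regimes in Figure 3: an upward jump for $\widehat{v}<v^B$, continuity at $\widehat{v}=v^B$, and the downward jump for $\widehat{v}>v^B$). I expect the main obstacle to be precisely this last step: securing $\widehat{v}>\Psi(\widehat{v})$ for \emph{all} $\widehat{v}>v^B$, not merely just above $v^B$, requires ruling out that $h$ returns to zero, which is where the derivative identity and the single-crossing logic do the real work; the remaining computations are routine.
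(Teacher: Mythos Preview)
Your argument is correct and proceeds exactly as the paper intends: the paper's proof is the single line ``Follows from Lemma \ref{price},'' and your proposal simply unpacks that formula explicitly---splitting at $s=\widehat{v}$, identifying the jump direction via $f(t)=(t\widehat{v}+K)/(t+b)$, recognizing $K/b$ as the benchmark nondisclosure expectation $\Psi(\widehat{v})$, and invoking the single-crossing of $\widehat{v}-\Psi(\widehat{v})$ at $v^{B}$. The only remark is that your ``main obstacle'' is not really an obstacle: the derivative identity $\Psi'(\widehat{v})=\tfrac{pg(\widehat{v})}{1-p+pG(\widehat{v})}(\widehat{v}-\Psi(\widehat{v}))$ forces $h'=1$ at every zero of $h$, which immediately rules out a second zero (the smallest zero above $v^{B}$ would have $h'\le 0$, a contradiction), so the single-crossing step is clean without any appeal to log-concavity.
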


\ni Three observations are worth emphasizing. First, because investors are uncertain about the veracity of the signal, they never price the firm at $s$ (except for a knife-edge case). Second, the investors' reaction is more sensitive to sufficiently unfavorable signals ($s<\widehat{v}$) than to sufficiently favorable ones ($s>\widehat{v}$).
In our setting, the reason is that unfavorable signals are perceived to be more likely veracious than  favorable signals (Proposition \ref{joint}). Put differently, investors are  more skeptical about favorable external news.  

Third, \emph{for a given threshold}, the nondisclosure price might exhibit a steep increase (Figure 3 panel a) or steep decrease (Figure 3 panel c). In particular, when 
the manager remains silent, the market price  is increasing in the signal but (except in the knife-edge case of $\widehat{v}=v^B$ illustrated in panel b) there is a steep change at $s=\widehat{v}$. This is because the investor beliefs are shifted when the signal reaches the disclosure threshold $\widehat{v}$.
As we show in Section 4, the steep change \emph{can never be upwards in equilibrium} (i.e., panel a cannot happen).\footnote{For some intuition, note that if the price were to sharply increase at a value where the manager were indifferent between disclosure decisions, then those to the right of this value would find it optimal to not disclose, just as the managers to the left of it, eliminating the possibility of a steep increase in equilibrium.} With a steep decline, favorable external news does not necessarily increase stock prices, i.e., the nondisclosure price is \emph{non-monotonic} in $s$.  As a result, in the neighborhood of the disclosure threshold, better external news leads to a lower nondisclosure price  if $\widehat{v}>v^B$ (as is the case of panel c). 
The potential for price non-monotonicity  arises because of the rational way in which the market makes inferences about the signal veracity and the manager's information endowment (Lemma \ref{beliefs} and Proposition \ref{joint}). To appreciate the importance of this updating, consider naive investors who, regardless of the external signal value and the manager silence, fix their joint beliefs at $\Pr(\phi)\Pr(\kappa\lvert \emptyset)$. Then, the nondisclosure price in \eqref{PNDgeneral5} would simply be given by 
 $
 \Pr(V )\Pr(U\lvert \emptyset) \cdot s + \Pr(V )\Pr(I\lvert \emptyset) \cdot s + \Pr(N )\Pr(U\lvert \emptyset) \cdot \mu +\Pr( N )\Pr(I\lvert \emptyset)  \cdot  \mathbb{E}[v\lvert v \leq \widehat{v}], 
$
 which is monotonic in $s$.\footnote{Within the confines of our model, nondisclosure prices  are discontinuous when $\widehat{v}\neq v^B$. This emerges because the CDF of $s$ conditional on $\phi=V$ is discontinuous at $s=v$. While other specifications of uncertain veracity may  ``smooth out'' the discreteness of the price function, they will nevertheless still involve a steep decrease or increase at $\widehat{v}\neq v^B$, with non-monotonicity when $\widehat{v}>v^B$ (Section  \ref{structuredisc}), which is one of the main insights of our work. Aside from being economically substantive, discontinuity and non-monotonicity usually  present technical difficulties with showing equilibrium existence (or fully characterizing equilibria). As we show in Section \ref{main}, a unique equilibrium in our game exists.}

\section{Disclosure Incentives} 
\label{main}

In Section \ref{sect:updating} we described the mechanics of belief updating and price characteristics for a given arbitrary threshold $\widehat{v}$. To do so, we conjectured that the manager's disclosure disclosure strategy follows a threshold rule. As we will show below, a unique threshold equilibrium indeed exists in each of our disclosure settings: early, late, and dynamic. 


\subsection{Early Disclosure}
\label{first}

We first consider the  case of early disclosure (superscript ``E") where the manager has to decide whether to disclose her private information at date 2, before the arrival of the external signal, and cannot delay the decision to date 4. Such case might arise, for instance, when the manager is slated for a prearranged conference call ahead of a media broadcast or the release of macroeconomic news. Similarly, it may emerge wherein external news is expected to arrive during a ``quiet period" in the lead-up to an Initial Public Offering (IPO) or to the close of a business quarter---under these circumstances awaiting to disclose after the external news  would breach the manager's legal obligation to remain silent during the ``quiet period."



A manager who is not endowed with information has no choice but to remain silent. An informed manager has to consider the market price following nondisclosure, $P(s, \emptyset)$.  Because this price depends on a signal that is not yet available at the time of the early disclosure decision, the manager has to form an expectation about it, conditional on the observed $v$. Note that $v$ carries information about the signal because the latter may, with some probability,  accurately reflect the firm value.
From the manager's perspective at date 2, the future external signal will be either veracious or non-veracious. Thus a straightforward way to think about the expected price is to present it as 
\[ \mathbb{E}[P(s,\emptyset)\lvert v] = \Pr(V \lvert   v) \cdot \mathbb{E}[P(s, \emptyset)\lvert V,v]+\Pr(N\lvert  v) \cdot \mathbb{E}[P(s, \emptyset)\lvert N,v].\]
 Because the signal is not realized yet, the observed  value $v$  carries no information about the signal veracity; thus $\Pr(\phi \lvert  v)=\Pr(\phi), \ \phi \in\{V,N\}$.  The manager expects that a veracious signal equals the observed value $v$. Hence, with probability $\Pr(V)=q$, the expected nondisclosure price is $\mathbb{E}[P(s, \emptyset)\lvert V,v]=P(s=v, \emptyset)$. A non-veracious signal equals $x$ so that, with probability $\Pr(N)=1-q$, the expected price is  $\mathbb{E}[P(s, \emptyset)\lvert N,v]=\mathbb{E}[P(s=x, \emptyset)]$.

When deciding whether to disclose, an informed manager compares her expectation of the nondisclosure price  with the price in case of disclosure. Because both prices depend on $v$, it is not immediately obvious that a threshold equilibrium exists. 
However, despite this endogeneity, the increasing difference property holds, conditional on the signal veracity. 

The expected nondisclosure price faced by the manager is increasing in the observed $v$ (via the expectation $\mathbb{E}[s\lvert v]$), up until the point at which the market expects disclosure to become more beneficial than keeping silent---at this point, the price sharply decreases, making disclosure even more attractive.

\begin{prop}\label{pr2} \emph{\textbf{(Early disclosure equilibrium)}}
In the early case there exists a unique disclosure equilibrium where the manager discloses at date 2 all values above a threshold $v^E\in (v^B, \mu)$. In the equilibrium, the date-5 nondisclosure price is non-monotonic in $s$ with a steep decline at $s=v^E$. 
\end{prop}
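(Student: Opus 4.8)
The plan is to reduce the equilibrium to a one-dimensional fixed-point problem and exploit the discontinuous but piecewise-monotone structure of the nondisclosure price established in Lemma~\ref{price}. Fix a conjectured threshold $\widehat{v}$. An informed manager with value $v$ discloses iff $v > \mathbb{E}[P(s,\emptyset)\lvert v]$, where, because $v$ is uninformative about veracity at date~2,
\[
\Pi(v;\widehat{v}) \;\equiv\; \mathbb{E}[P(s,\emptyset)\lvert v] \;=\; q\,P(s=v,\emptyset;\widehat{v}) \;+\; (1-q)\!\int P(s=x,\emptyset;\widehat{v})\,g(x)\,dx,
\]
with $P(\cdot,\emptyset;\widehat{v})$ given by \eqref{priceformula}. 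The second term is constant in $v$, so the $v$-dependence enters only through the veracious term $P(s=v,\emptyset;\widehat{v})$.

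First I would show the best response is of threshold type. Differentiating \eqref{priceformula} in $s$ on each side of $\widehat{v}$, the slope of $P(s=v,\emptyset;\widehat{v})$ in $v$ equals $q/\gamma(v,\widehat{v})$ below $\widehat{v}$ and $q(1-p)/\gamma(v,\widehat{v})$ above it; since $\gamma$ strictly exceeds the numerator's coefficient of $s$ in each region, the slope of $\Pi$ in $v$ is strictly below $q<1$ on both pieces. Hence $\Delta(v)\equiv v-\Pi(v;\widehat{v})$ is strictly increasing on $[v_{min},\widehat{v}]$ and on $(\widehat{v},v_{max}]$. Moreover, the steep decline of $P(s,\emptyset)$ at $s=\widehat{v}$ (Lemma~\ref{price}, Proposition~\ref{pricenon}) makes $\Pi$ jump \emph{down}, so $\Delta$ jumps \emph{up}, at $v=\widehat{v}$. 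A function that is increasing on each piece with a single upward jump crosses zero at most once, so the manager discloses iff $v$ exceeds a unique cutoff; this is the ``increasing difference'' property, and it confirms the best response is a threshold.

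Next I would impose consistency. At a threshold equilibrium the marginal type $v^E$ is indifferent, and since its veracious signal realizes at $s=v^E\le\widehat{v}=v^E$ the indifference uses the left branch of \eqref{priceformula}, giving $v^E=\Pi(v^E;v^E)$. Define $H(\widehat{v})\equiv \Pi(\widehat{v};\widehat{v})-\widehat{v}$; equilibria are exactly the zeros of $H$, and by the structure above each zero induces a consistent threshold best response. For existence and location I would evaluate $H$ at the endpoints of $(v^B,\mu)$. At $\widehat{v}=\mu$ every posterior mean entering $\Pi(\mu;\mu)$ is bounded above by $\mu$—the informed/non-veracious event contributes $\mathbb{E}[v\lvert v\le\mu]<\mu$, and the veracious term at $s=\mu$ is a weighted average of $\mu$ and that smaller quantity—so $\Pi(\mu;\mu)<\mu$ and $H(\mu)<0$. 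At $\widehat{v}=v^B$ I would compare with the benchmark: when $q=0$, $\Pi(\widehat{v};\widehat{v})$ collapses to the benchmark nondisclosure price \eqref{PNDD}, whose indifference point is $v^B$; turning on $q>0$ strictly raises the expected nondisclosure price at the fixed threshold $v^B$, because the manager free-rides on a veracious signal that, for the marginal type, lands on the high-price left branch. Hence $H(v^B)>0$, and by continuity and the intermediate value theorem a zero exists in $(v^B,\mu)$, delivering $v^E\in(v^B,\mu)$.

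Uniqueness follows once $H$ is shown to be strictly decreasing, i.e. $\frac{d}{d\widehat{v}}\Pi(\widehat{v};\widehat{v})<1$; here the total derivative must account for the threshold appearing both in the evaluation point $s=\widehat{v}$ and in $G(\widehat{v})$, $\mathbb{E}[v\lvert v\le\widehat{v}]$, and the noise integral, and this is where I expect the main obstacle to lie. I would sign this derivative using log-concavity of $G$, which controls the monotonicity of the truncated mean $\mathbb{E}[v\lvert v\le\widehat{v}]$ and keeps the combined sensitivity below one—the same force that already bounded the $v$-slope of $\Pi$ below $q$. Finally, the claimed shape of the date-5 price is immediate: since $v^E>v^B$, Proposition~\ref{pricenon} yields non-monotonicity of $P(s,\emptyset)$ in $s$, with the steep decline located exactly at $s=\widehat{v}=v^E$, as read off from \eqref{priceformula}.
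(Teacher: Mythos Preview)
Your overall architecture---reduce to a one-dimensional fixed point $H(\widehat v)=\Pi(\widehat v;\widehat v)-\widehat v$, pin down the endpoints, and invoke increasing differences---matches the paper's route. The argument at $\widehat v=\mu$ is fine and essentially what the paper does. But the step establishing $H(v^B)>0$ has a genuine gap.

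Your stated reason---``the manager free-rides on a veracious signal that, for the marginal type, lands on the high-price left branch''---does not deliver a strict inequality. At $\widehat v=v^B$ the price function has \emph{no} jump (this is exactly the knife-edge panel~(b) of Figure~3), and the left-branch formula at $s=v^B$ evaluates to
\[
P(s{=}v^B,\emptyset;\widehat v{=}v^B)=\frac{q\,v^B+(1-q)\big((1-p)\mu+pG(v^B)\,\mathbb{E}[v\mid v\le v^B]\big)}{q+(1-q)(1-p+pG(v^B))}=v^B,
\]
using the benchmark indifference condition defining $v^B$. So the veracious term contributes exactly $q\,v^B$, not more. What you actually need is that the \emph{noise} integral satisfies $\int P(x,\emptyset;\widehat v{=}v^B)\,g(x)\,dx>v^B$, and this is not obvious: $P(\cdot,\emptyset;v^B)$ is continuous, piecewise linear through $(v^B,v^B)$, with a \emph{smaller} slope on the right than on the left (Proposition~\ref{sensitivity}), so a naive Jensen/mean argument does not settle the sign. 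The paper handles this by showing that, viewed as a function of $q$, the price $P^q(s,\widehat v{=}v^B)$ is strictly convex for every $s\neq v^B$ and that $\frac{d}{dq}\big|_{q=0}\int P^q(s,v^B)g(s)\,ds=0$; hence the noise integral strictly exceeds its $q{=}0$ value $v^B$ for every $q>0$. You need an argument of this kind (or an equivalent direct computation) to get the strict inequality.

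A smaller point: your ``upward-jump'' argument for threshold best responses relies on Proposition~\ref{pricenon}, which only gives a downward jump in $P(\cdot,\emptyset)$ when $\widehat v>v^B$. For conjectured thresholds below $v^B$ the jump in $\Pi$ goes the other way and $\Delta$ can cross zero more than once. This is harmless for existence once the fixed point is located in $(v^B,\mu)$, but it means your proof does not yet rule out equilibria outside that interval; the paper closes this by a separate contradiction argument (no manager with $v<v^B$ can be willing to disclose early) before invoking the increasing-differences step.
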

\begin{figure}[t]
\setlength{\unitlength}{.5cm}
\begin {picture}(11.5,11.5)\thicklines
\put(5,0){\includegraphics[width=0.6\textwidth]{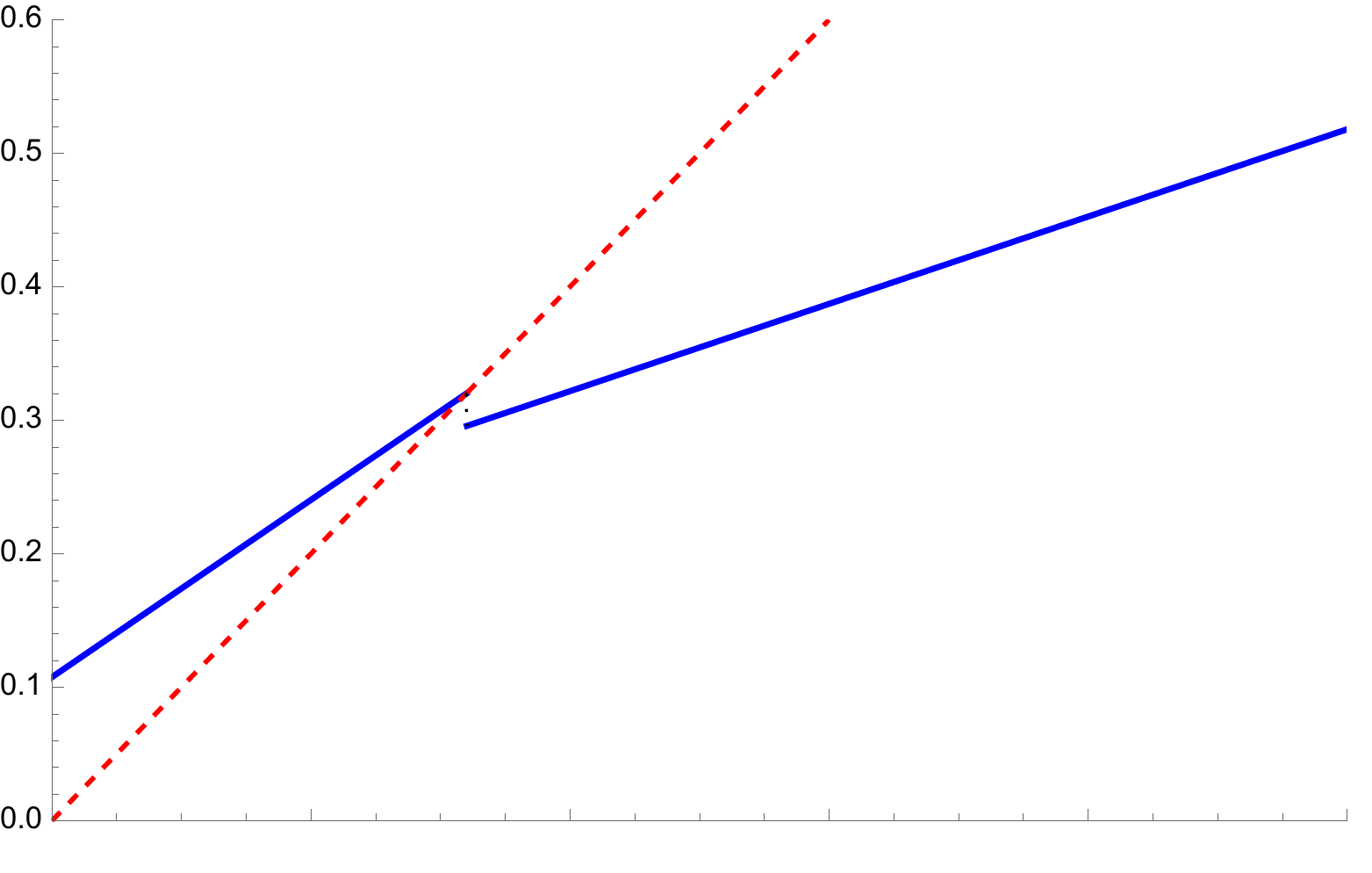}}
\put(24.5,-0.3){\small{Value $v$}}
\put(4,12.9){\small{Price}}
\put(11.1,-0.3){\small{$v^E>v^B$}} 
\put(14.1,13.1){\small{\color{red}$P(s,v)=v$}}
\put(25,11.1){\small{\color{blue}$\mathbb{E}[P(s,\emptyset)\lvert v ]$}}
\put(11.8,0.9){\color{gray}\dashbox{0.1}(0,6)}
\end{picture}
\begin{center}
\textbf{Figure 4}: Market price and equilibrium (early) disclosure threshold \\ {\footnotesize{Numerical example with uniform distributions, $v_{min}=0$, $v_{max}=1$, $p=0.90$ and $q=0.75$.
The disclosure price is illustrated with the dashed red line, and the expected nondisclosure price with the solid blue line.}}
\end{center}
\end{figure}

\ni 
 Managers  who withhold under the benchmark continue to withhold in the presence of external news. In addition, managers observing $v \in  [v^B, v^E]$ also withhold their information: we predict that external news of uncertain veracity crowds out voluntary disclosure compared with the benchmark. 
   Another way to interpret our result is to say that some managers are better off when their information is revealed by a potentially non-veracious external source rather than when they directly disclose it. 
This is driven by the investors' uncertainty about the veracity of external news and the manager's information endowment.
Specifically, if investors believe that the external signal is veracious, they react the same way as they would have if the manager had disclosed the information. However,  if they believe  that the external source is non-veracious,
the investors assign a higher likelihood that the manager is uninformed and hence put more weight on the prior expectation. 
Thus some managers  observing relatively low values (such as $v=v^B<\mu$) benefit from relying on the  external source to reveal those values. 

Further extending our observation that $v^E>v^B$, we confirm numerically for the uniform distribution that the early threshold is increasing in the prior probability that the signal is veracious (see Figure 5 for graphical illustration). Put differently, external signals that are more likely veracious crowd out managerial disclosures more intensely.


Our result that $v^E>v^B$, together with Proposition \ref{pricenon}, drives the non-monotonicity of the nondisclosure market price \emph{in equilibrium}. This non-monotonicity reflects the higher skepticism that favorable external signals are viewed with---that is, that they are taken ``with a grain of salt"---as the fact that the manager did not disclose means that the signals are more likely to reflect noise. 

\begin{figure}[t]
\setlength{\unitlength}{.5cm}
\begin {picture}(11.5,11.5)\thicklines
\put(5,0){\includegraphics[width=0.6\textwidth]{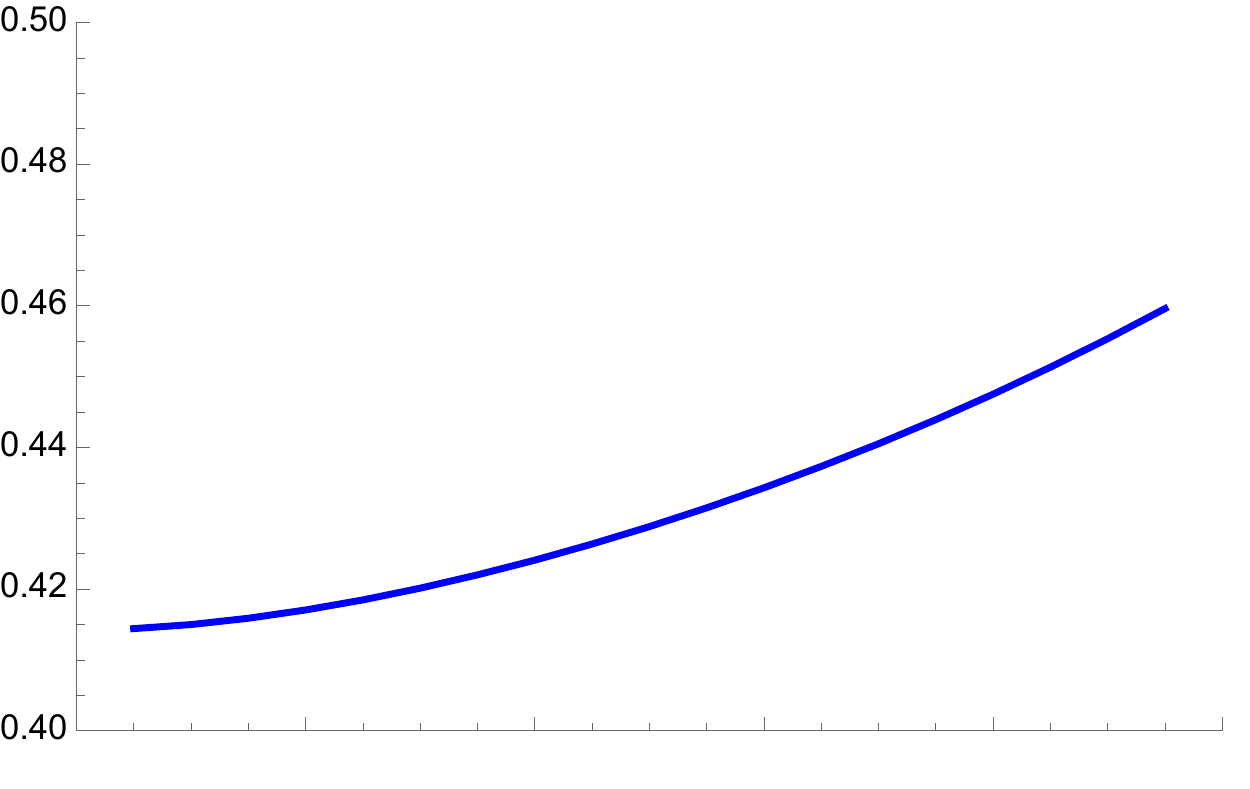}}
\put(24.5,0.3){\small{$q$}}
\put(3,13.3){\small{Early threshold}}
\put(24.4,8){\color{blue}\small{$v^E$}}

\end{picture}
\begin{center}
\textbf{Figure 5}: Early disclosure threshold as a function of $q$ \\ 
{\footnotesize{Numerical example with uniform distribution, $v_{min}=0$, $v_{max}=1$, $p=0.5$}}
\end{center}
\end{figure}

Before concluding this subsection, we comment on the assumption that an indifferent manager remains silent. This assumption plays no role in the proof of our results, but avoids the possibility of multiplicity. Notice that in Figure 4, when the manager is indifferent, the expected price function is left continuous because the market assumes indifferent managers do not disclose (as per our solution concept), so that the same disclosure behavior emerges at $v^{E}$, as well as for all $v < v^{E}$. If the indifferent manager \emph{discloses}, the expected price would be \emph{right}-continuous, and only intersect the 45 degree line \emph{after} the steep decline. This would yield a different price function. More generally, varying the tiebreaking probability (mixing strategy) would result in the 45-degree line being somewhere between the left limit of the function and the right limit of the function at $v^{E}$. Notably, however, the set of equilibria can be fully determined by assuming different tiebreaking rules, i.e., the equilibrium is unique as a function of the probability the indifferent manager discloses, and every equilibrium only involves one indifferent manager.\footnote{The following  simple argument suggests the left-continuous equilibrium is more compelling: Suppose the manager could, at time 2, change the value from $v$ to $v- \xi$, for some arbitrarily small $\xi$. Insisting that such arbitrarily small devaluations are not uniformly strictly profitable essentially amounts to a requirement that the price function is left continuous. This singles out the equilibrium  satisfying the property that an indifferent manager remains silent, as imposed by our model.}

\subsection{Late Disclosure}
\label{second}

We continue with late disclosure scenario (superscript ``L") in which the manager observes the external signal and only then decides whether to disclose her information (at date 4). This could happen for example when macroeconomic news, announcements, 
mainstream media articles and broadcasts, or social media posts are released unexpectedly or before the scheduled manager's conference call.\footnote{The manager may also observe the firm value only after the arrival of external news. The results in Section \ref{second} hold qualitatively under such alternative timeline.}

An informed manager can respond at date 4  by revealing the observed $v$ if the anticipated date-5 disclosure price  in \eqref{PD} exceeds the nondisclosure one  in Lemma \ref{price}.

\begin{prop}\label{lm4} \emph{\textbf{(Late disclosure equilibrium)}}
In the late case there exists a unique equilibrium where for any signal realization $s$ the manager discloses at date 4 if the observed value exceeds a threshold $v^{L}(s)<1$. This threshold is increasing in $q$ if $s > v^B$ but decreasing if $s < v^B$. Furthermore, if $s \lesseqgtr v^B$ then $v^L(s)\gtreqless s$ and $v^L(s)\lesseqgtr v^B$.
\end{prop}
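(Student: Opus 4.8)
The plan is to exploit a simplification specific to the late case: since the manager observes $s$ before deciding, the nondisclosure price $P(s,\emptyset)$ in Lemma \ref{price} is a number that does \emph{not} depend on her own value. Disclosure yields $v$, so an informed manager discloses exactly when $v>P(s,\emptyset)$, the realized cutoff is $P(s,\emptyset)$ itself, and a threshold equilibrium is precisely a fixed point of $t\mapsto P(s,\emptyset\,\lvert\,\widehat{v}=t)$. Because the indicator $\mathbbm{1}_{s>\widehat{v}}$ in \eqref{priceformula} flips at $\widehat{v}=s$, I would split the analysis into the branch $t\ge s$ (indicator $0$) and the branch $t<s$ (indicator $1$), solve the fixed point on each, and then check which root is self-consistent with its defining inequality.

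On each branch I would clear the denominator $\gamma$ in $t=P(s,\emptyset\,\lvert\,\widehat{v}=t)$ and collect terms using $G(t)\mathbb{E}[v\lvert v\le t]=H(t):=\int_{v_{min}}^{t} v\,g(v)\,dv$, so that $tG(t)-H(t)=K(t):=\int_{v_{min}}^{t}(t-v)g(v)\,dv\ge 0$. This reduces the $t\ge s$ branch to $F_1(t,s,q):=q(t-s)+(1-q)B(t)=0$ and the $t<s$ branch to $F_2(t,s,q):=(1-p)\big(t-qs-(1-q)\mu\big)+(1-q)pK(t)=0$, where $B(t):=(1-p)(t-\mu)+pK(t)$ is exactly the benchmark function whose unique root is $v^B$ (this recovers Lemma \ref{lm1}: $B'(t)=(1-p)+pG(t)>0$ and $B(v_{min})<0<B(\mu)$). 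Both $F_1$ and $F_2$ are strictly increasing in $t$, which delivers uniqueness of the root on each branch.

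To localize the root I would evaluate $F_i$ at the endpoints $t=s$ and $t=v^B$: a short computation gives $F_i(s,s,q)=(1-q)B(s)$ for both $i$, while $F_1(v^B,s,q)=q(v^B-s)$ and $F_2(v^B,s,q)=(1-p)q(v^B-s)$, using $pK(v^B)=(1-p)(\mu-v^B)$ from $B(v^B)=0$. Since $B$ is increasing with root $v^B$, for $s>v^B$ we get $F_2(s,s,q)>0>F_2(v^B,s,q)$, so monotonicity plus the intermediate value theorem place the unique root strictly in $(v^B,s)$; it satisfies $t<s$, confirming the correct branch, and on the other branch $F_1>0$ throughout $t\ge s$, so there is no competing equilibrium. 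The symmetric argument for $s<v^B$ puts the root in $(s,v^B)$ on the $t\ge s$ branch, and at $s=v^B$ both branches collapse to $t=v^B$. This yields existence, global uniqueness, the stated $v^L(s)\gtreqless v^B$ and $v^L(s)\lesseqgtr s$ according to $s\lesseqgtr v^B$, and $v^L(s)<v_{max}$ (since $v^L(s)\le v^B<v_{max}$ when $s\le v^B$ and $v^L(s)<s\le v_{max}$ when $s>v^B$).

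Finally, for the comparative statics in $q$ I would use the implicit function theorem, $\frac{d v^L}{dq}=-\frac{\partial F_i/\partial q}{\partial F_i/\partial t}$, so the sign of $dv^L/dq$ is opposite to that of $\partial F_i/\partial q$. The clean step is to substitute the equilibrium condition $F_i=0$ back into $\partial F_i/\partial q$, which collapses the expressions to $\partial F_1/\partial q=\frac{t-s}{1-q}$ and $\partial F_2/\partial q=\frac{1-p}{1-q}(t-s)$. Because the localization step already pins down the sign of $t-s$ ($t>s$ when $s<v^B$, $t<s$ when $s>v^B$), this immediately gives $v^L$ decreasing in $q$ for $s<v^B$ and increasing in $q$ for $s>v^B$. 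The main obstacle is not any single estimate but the bookkeeping around the discontinuity induced by $\mathbbm{1}_{s>\widehat{v}}$: one must verify that the root obtained on each branch actually satisfies the inequality defining that branch, so that the piecewise fixed-point problem has a single genuine solution rather than a spurious root of an extrapolated formula. Everything else is monotone algebra and, notably, does not require the log-concavity assumption.
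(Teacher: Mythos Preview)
Your proof is correct and complete; the algebra checks out on both branches, the localization via $F_i(s,s,q)=(1-q)B(s)$ and $F_i(v^B,s,q)\propto v^B-s$ is clean, and the substitution of the equilibrium condition into $\partial F_i/\partial q$ to obtain $\frac{t-s}{1-q}$ (up to the factor $1-p$ on the second branch) is exactly right.

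The route is genuinely different from the paper's. The paper works at a higher level: it writes $v^L(s)=s\Pr(V\mid s,\emptyset)+\mathbb{E}[v\mid N,\emptyset]\Pr(N\mid s,\emptyset)$ and argues that raising $q$ shifts weight toward $s$, so the sign of the change depends on whether $s$ exceeds $\mathbb{E}[v\mid N,\emptyset]$; the inequalities $v^L(s)\gtreqless s$ and $v^L(s)\lesseqgtr v^B$ are then pulled from the technical appendix (Lemmas on the auxiliary game showing the threshold increases in $s$ at a rate less than one). Your approach instead clears the denominator in the explicit price formula, reduces the fixed-point problem to two scalar equations $F_1,F_2$ that are strictly increasing in $t$, and obtains everything---existence, uniqueness, localization, and the $q$-comparative static---from elementary monotonicity and the intermediate value theorem. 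The paper's argument is more modular (it reuses machinery built for the dynamic case), while yours is more self-contained and makes transparent that log-concavity is unnecessary here, a point the paper only notes in passing in its appendix discussion.
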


\begin{figure}[t]
\setlength{\unitlength}{.5cm}
\begin {picture}(11.5,11.5)\thicklines
\put(5,0){\includegraphics[width=0.6\textwidth]{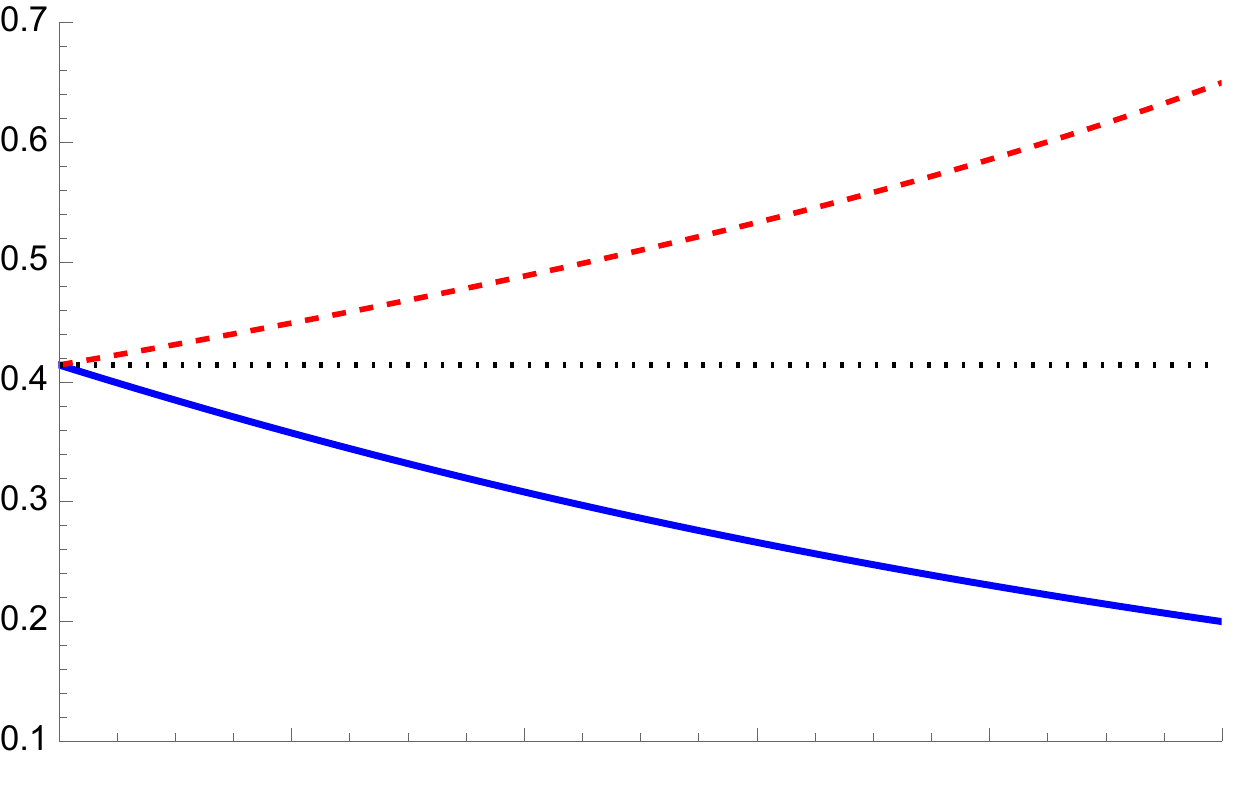}}
\put(24.5,-0.1){\small{$q$}}
\put(4,13.3){\small{Late threshold}}
\put(25.2,2.8){\color{blue}\small{$v^L(s=0.20)$}}
\put(25.2,6.8){\small{$v^L(s=v^B)$}}

\put(25.2,12){\color{red}\small{$v^L(s=0.65)$}}

\end{picture}
\begin{center}
\textbf{Figure 6}: Late disclosure threshold as a function of $q$ \\ 
{\footnotesize{Numerical example with uniform distribution, $v_{min}=0$, $v_{max}=1$,  $p=0.5$. Here, $v^B=0.42$.}}
\end{center}
\end{figure}

\ni To understand the role of the benchmark threshold $v^B$ in our result, note that with probability $q\in(0,1)$ the market expectation is a convex combination of the price when $s$ is veracious for sure, $\lim_{q\to 1} P(s,\emptyset)=s$, and when it is non-veracious for sure, $\lim_{q\to 0}  P(s,\emptyset)=P(\emptyset)=v^B$. This implies that when $s=v^B$, the  nondisclosure price (and thereby the late disclosure threshold) is \emph{independent} of the signal's perceived veracity, since it is equal to $v^B$ no matter what this conjecture is. It turns out that $s=v^B$ is the only signal with this property.

For any $s\neq v^B$ the effect of $q$ is non-trivial as graphically illustrated in Figure 6. In particular, if the signal is sufficiently favorable, $s>v^B$, higher $q$ implies that investors believe the signal is more likely veracious and put a higher weight on this more favorable signal, which increases the nondisclosure price and strengthens the incentives of the manager to remain silent, i.e., $v^L(s)$ increases. The opposite is true when the signal is sufficiently unfavorable, $s<v^B$. Then, higher $q$ means investors put a higher weight on this unfavorable signal which reduces the nondisclosure price and stimulates the manager to respond. That is, $v^L(s)$ decreases. Put differently, whether  external news encourages or discourages disclosure depends on $s$. A manager facing a sufficiently unfavorable (favorable) signal is more (less) likely to disclose compared with the benchmark case.\footnote{Only a signal $s=v^B$ has no effect on the probability of disclosure.} 

It seems intuitive that managers respond and ``correct" unfavorable external signals. Our result, however, shows that this intuition is not always true.  If the external signal is sufficiently low ($s < v^B$),  the disclosure threshold exceeds it ($v^L(s)>s$); i.e., the manager withholds values that are \emph{more favorable} than the ones revealed by the external signal.  Conversely, if the external signal is sufficiently high ($s > v^B$), the disclosure threshold falls short of it ($v^L(s)<s$); i.e., the manager discloses values that are \emph{less favorable} than the ones revealed by the external signal.  This result, graphically illustrated in Figure 7, may at first seem perplexing: why would anyone choose not to correct unfavorable news? The answer is simple: the manager discloses when the observed value exceeds the nondisclosure price. Because the latter is not identical to $s$ (recall that investors are skeptical about the signal's veracity and under-react to it), remaining silent in the face of some unfavorable $s$ and ``speaking up" to correct downwards some favorable external news  may be beneficial. 

\begin{figure}[t]
\setlength{\unitlength}{.5cm}
\begin {picture}(11.5,11.5)\thicklines
\put(5,0){\includegraphics[width=0.6\textwidth]{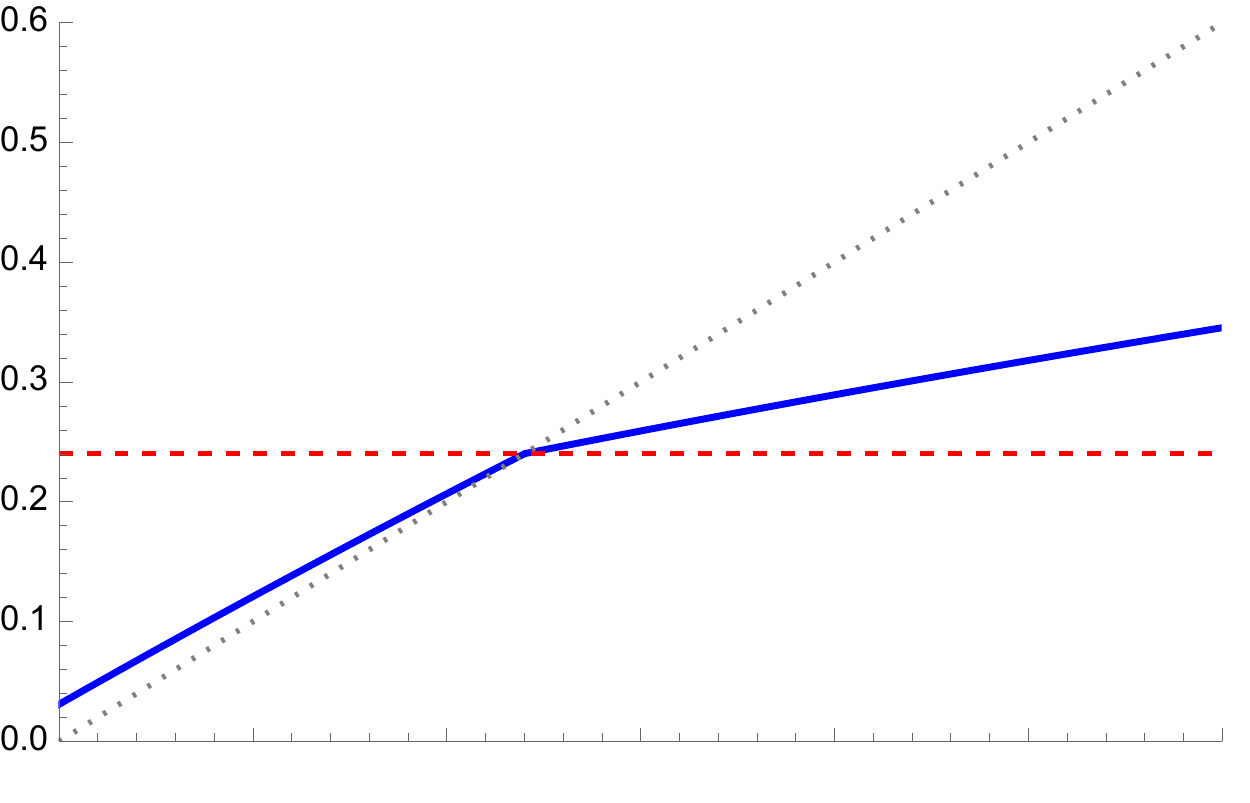}}
\put(24.5,-0.3){\small{Signal $s$}}
\put(4,13.3){\small{Threshold}}
\put(24.8,12){\color{gray}\small{$s$}}
\put(24.8,7.5){\color{blue}\small{$v^L(s)$}} 
\put(24.8,5.5){\color{red}\small{$v^B$}}

\end{picture}
\begin{center}
\textbf{Figure 7}: Late disclosure threshold as a function of $s$ \\ 
{\footnotesize{Numerical example with uniform distribution, $v_{min}=0$, $v_{max}=1$,  $p=0.9$ and $q=0.6$.}}
\end{center}
\end{figure}


\begin{prop} \emph{\textbf{(Late threshold)}}
\label{ssss}
In the late disclosure equilibrium, the disclosure threshold  has a kink at $s=v^B$ such that
$1>\frac{\partial}{\partial s}v^L(s \lvert s\leq v^B)>\frac{\partial}{\partial s}v^L(s \lvert s> v^B)>0$.  
\end{prop}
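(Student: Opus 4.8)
The plan is to derive $v^L(s)$ from the indifference condition and then differentiate it implicitly, treating the two regions $s \le v^B$ and $s > v^B$ separately. By definition, the late threshold is the value at which the manager is indifferent between disclosing and staying silent, i.e.\ $v^L(s) = P(s,\emptyset)$, where $P(s,\emptyset)$ is the nondisclosure price from Lemma~\ref{price} evaluated at the equilibrium threshold $\widehat v = v^L(s)$. Writing out \eqref{priceformula} with $\widehat v = v^L(s)$ gives an implicit equation $F(v^L(s),s)=0$ that must hold separately on each side of $v^B$, because the indicator $\mathbbm{1}_{s>\widehat v}$ takes different values there. First I would record that, by Proposition~\ref{lm4}, $s\le v^B \Rightarrow v^L(s)\ge s$ so $\mathbbm{1}_{s>v^L(s)}=0$, whereas $s> v^B \Rightarrow v^L(s)< s$ so $\mathbbm{1}_{s>v^L(s)}=1$. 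This lets me replace the indicator by a constant on each region and obtain two smooth defining equations.

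Next I would apply the implicit function theorem on each region. On $s\le v^B$ the indicator vanishes, so the numerator of $P(s,\emptyset)$ carries the full weight $q$ on $s$; differentiating $F(v^L(s),s)=0$ yields $\tfrac{\partial}{\partial s}v^L(s\mid s\le v^B) = -F_s/F_{v^L}$, an expression whose value at a generic $s<v^B$ I would show lies strictly between $0$ and $1$. The lower bound $>0$ follows from Proposition~\ref{lm4} (the threshold is increasing in $s$ on this branch once combined with monotonicity in $q$), and the upper bound $<1$ follows because the firm only partially incorporates $s$ into the price (skepticism/under-reaction), so the threshold cannot move one-for-one with $s$. On $s> v^B$ the indicator equals $1$, which replaces $q$ by $q(1-p)<q$ as the coefficient on $s$ in the numerator; repeating the implicit differentiation gives a strictly smaller slope, reflecting that favorable signals are discounted more heavily.

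The heart of the argument is the strict inequality $\tfrac{\partial}{\partial s}v^L(s\mid s\le v^B) > \tfrac{\partial}{\partial s}v^L(s\mid s> v^B)$, i.e.\ that the kink bends downward. I would establish this by comparing the two slope formulas and isolating the effect of the indicator switch: the coefficient on $s$ drops from $q$ (noise-case signal weighted fully) to $q(1-p)$ (informed-and-veracious case excluded), which is precisely the mechanism behind Proposition~\ref{sensitivity} (asymmetric price sensitivity). Since $P(s,\emptyset)$ is more sensitive to $s$ just below $\widehat v$ than just above it, and $v^L(s)$ tracks $P(s,\emptyset)$, the threshold inherits the same asymmetry, delivering the strict drop in slope across $s=v^B$. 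I expect the main obstacle to be the bookkeeping of the implicit differentiation: because $\widehat v=v^L(s)$ appears both inside $G(\widehat v)$ and $\mathbb{E}[v\mid v\le\widehat v]$ as well as in $\gamma$, the derivative $F_{v^L}$ is a nontrivial expression, and I must verify it has a definite sign (which should follow from log-concavity of $g$, guaranteeing that the Dye-style nondisclosure price is well-behaved and that $F_{v^L}\neq 0$ so the implicit function theorem applies). Tying the sign of $F_{v^L}$ to log-concavity, and confirming continuity of $v^L$ at $v^B$ (so that it is genuinely a kink rather than a jump), are the two technical checks I would carry out most carefully.
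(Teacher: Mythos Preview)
Your proposal is correct and follows essentially the same approach as the paper: the paper's route through the auxiliary-game thresholds $\widehat{v}^{L}_{0}(s)$ and $\widehat{v}^{L}_{1}(s)$ (Lemmas \ref{lemm:thresholdcompstat} and \ref{lemm:uniquethresh}) is precisely your two-branch analysis with the indicator set to $0$ versus $1$, and the slope comparison there is exactly your observation that the weight on $s$ drops from $q$ to $q(1-p)$. One simplification you may not have anticipated: at the equilibrium threshold $\widehat v = v^{L}(s) = P(s,\emptyset)$, the partial $\partial P/\partial \widehat v$ vanishes (an envelope-type cancellation), so the implicit-function bookkeeping you flagged collapses and $\tfrac{d}{ds}v^{L}(s)$ equals $\partial P/\partial s$ directly, making the branch comparison immediate.
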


We find that $v^L(s)$ is more sensitive to unfavorable external news, $s<v^B$ than to favorable news (Figure 7). This is driven by the investors' skepticism about favorable signals that are discounted in price formation.\footnote{The equilibrium threshold is continuous in $s$ even though the nondisclosure price differs for favorable and unfavorable external news. The intuition for this result can be found in Proposition \ref{lm4}. The exact signal realization that distinguishes between the two pricing options is $s=v^B$. If the external signal is below this value, it is also above the late disclosure threshold. And vice versa, if $s$ is above this value it is below $v^L(s)$. At the critical value, it holds that $s=v^B=v^L(s)$ at which point the two pricing options coincide.}

\subsection{Dynamic Disclosure}
\label{freqd}

Sections \ref{first} and \ref{second} assumed that corporate disclosures are either early or late for exogenous reasons. We now consider a dynamic model in which the manager can choose the timing of her disclosure. Furthermore, we assume that advancing disclosure to an earlier date comes at rescheduling cost $c^E \geq 0$ (e.g., due to schedule constraints, reduced preparation time, reallocation of resources and manpower to meet accelerated timeline) and delaying it to a later date comes at rescheduling cost $c^L\geq 0$ (e.g., due to reputation loss for withholding the information early on, cancellation of previously scheduled announcement, or disruption of internal planning, decision-making processes and investor relations activities).
We first observe that the equilibrium disclosure continues to follow a threshold rule. 

\begin{definition} 
\label{def:threshold}
A \textbf{threshold equilibrium} in the dynamic game is a number $v^{E} \in (v^{B}, v_{max}]$ and an increasing, continuous function $v^{L}(s)$ such that the manager (i) discloses early if $v > v^{E}$ and (ii) discloses late if disclosure was not early and $v >v^{L}(s)$; otherwise, she remains silent.
\end{definition}

\begin{prop}  \emph{\textbf{(Dynamic disclosure equilibrium)}}
\label{thm:overallthreshold}
There exists a unique threshold equilibrium of the dynamic game.
\end{prop}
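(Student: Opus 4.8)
\ni The plan is to solve the dynamic game by backward induction and reduce the equilibrium to a single one-dimensional fixed-point condition for the early threshold $v^E$. The organizing observation is that, in any threshold profile $(v^E,v^L(\cdot))$, an informed manager stays silent through date 5 exactly when $v\le v^E$ (did not disclose early) and $v\le v^L(s)$ (did not disclose late); hence the informed-silent set given signal $s$ is $\{v\le\widehat v(s)\}$ with \emph{effective threshold} $\widehat v(s)=\min\{v^E,v^L(s)\}$. Consequently the date-5 nondisclosure price is exactly the object in Lemma \ref{price} evaluated at $\widehat v(s)$, so all the machinery of Section \ref{sect:updating} applies verbatim once $\widehat v(s)$ is pinned down.

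I would first settle the date-4 (late) subgame for a fixed conjectured $v^E$. A manager who has reached date 4 with value $v\le v^E$ discloses iff $v-c^L>P(s,\emptyset;\widehat v(s))$, so the late cutoff solves $v^L(s)=P(s,\emptyset;\widehat v(s))+c^L$. The key technical input—which I expect to be the crux of the whole argument—is that the nondisclosure price \emph{underreacts} to its threshold, $0\le \partial P/\partial\widehat v<1$; this makes $x\mapsto P(s,\emptyset;x)+c^L-x$ strictly decreasing, so the fixed-point equation has a unique root. That root is precisely the pure-late threshold $\bar v^L(s)$ (the $c^L$-analog of Proposition \ref{lm4}), and monotonicity of the same map shows the cap binds exactly when $\bar v^L(s)\ge v^E$. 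Therefore $\widehat v(s;v^E)=\min\{v^E,\bar v^L(s)\}$, where $\bar v^L$ is \emph{independent of $v^E$}, continuous and increasing with the kink of Proposition \ref{ssss}. This both delivers the increasing, continuous $v^L(\cdot)$ required by Definition \ref{def:threshold} and decouples the late problem from $v^E$ up to the cap.

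Next I would reduce the date-2 (early) decision to a threshold. Writing the continuation value of waiting as $W(v;v^E)=\mathbb{E}_s[\max\{v-c^L,\,P(s,\emptyset;\widehat v(s;v^E))\}\mid v]$ and the net early benefit as $D(v;v^E)=v-c^E-W(v;v^E)$, an envelope argument gives $\partial W/\partial v\le 1$ (the $v-c^L$ branch contributes at most one-for-one, while on the veracious branch $s=v$ the price contributes $\partial P/\partial s<1$ by Proposition \ref{sensitivity}). Hence $D$ is nondecreasing in $v$, the early best response is a cutoff, and every equilibrium is of threshold type. For the cutoff I would split on the costs. If $c^E\ge c^L$, waiting weakly dominates early disclosure for every $v$ (since $W\ge v-c^L\ge v-c^E$), so $v^E=v_{max}$ and the game collapses to the unique pure-late equilibrium. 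If $c^E<c^L$, then $D(v;v^E)\to c^L-c^E>0$ as $v\to v_{max}$ while $D<0$ for low $v$ (silence preserves a pooling price above such $v$), so the indifference condition $v^E-c^E=W(v^E;v^E)$ has an interior solution; continuity of the best-response map and the intermediate value theorem give existence, and I would verify the solution lies in $(v^B,v_{max}]$ as Definition \ref{def:threshold} requires.

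Uniqueness of the outer fixed point is the remaining obstacle, and it again rests on the sub-unit sensitivity bound. Implicitly differentiating the indifference condition yields $BR'(v^E)=\dfrac{\partial W/\partial v^E}{1-\partial W/\partial v}$, so the monotone best-response map $BR$ has slope below one—and crosses the $45^\circ$ line only once—precisely when the \emph{total} derivative $\frac{d}{dv^E}W(v^E;v^E)<1$. Because $v^E$ enters $W$ only through $\widehat v(s;v^E)=\min\{v^E,\bar v^L(s)\}$, whose derivative in $v^E$ is at most one and vanishes on the signals where the cap slacks, this total derivative is a probability-weighted average of terms each bounded by $\max\{1,\partial P/\partial\widehat v\}$, and the strict bound $\partial P/\partial\widehat v<1$ closes the argument. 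Establishing that bound is where the work concentrates: differentiating the price in Lemma \ref{price} reduces it to controlling $\frac{d}{d\widehat v}\mathbb{E}[v\mid v\le\widehat v]\in(0,1)$, which is exactly where log-concavity of $G$ is used; the only additional care needed is the non-differentiability of the $\min$ at the kink signal where $\bar v^L(s)=v^E$, which I would handle with one-sided derivatives.
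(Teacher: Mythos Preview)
Your backward-induction strategy and the reduction to a one-dimensional fixed point for $v^E$ via an effective threshold $\widehat v(s)=\min\{v^E,v^L(s)\}$ is essentially the paper's route. The substantive gap is in the step you yourself flag as ``the crux'': you assert that $x\mapsto P(s,\emptyset;x)+c^L-x$ is strictly decreasing because $0\le\partial P/\partial\widehat v<1$. That derivative bound holds only on each smooth branch; the indicator $\mathbbm{1}_{s>\widehat v}$ in Lemma \ref{price} makes $P$ discontinuous in the conjectured threshold at $\widehat v=s$, and for $s>v^B$ that jump is \emph{upward} (it is the same discontinuity that drives Proposition \ref{pricenon}, read in the $\widehat v$-direction rather than the $s$-direction). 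A piecewise-decreasing map with an upward jump can admit a root on each branch, so the sub-unit slope on the smooth parts does not by itself deliver a unique late threshold; and without that, your ``pure-late $\bar v^L(s)$'' is not yet well-defined, so the decoupling claim $\widehat v(s;v^E)=\min\{v^E,\bar v^L(s)\}$ with $\bar v^L$ independent of $v^E$ does not go through. The paper handles exactly this by passing to an auxiliary game indexed by the market's conjecture $\theta\in\{0,1\}$ about whether an informed manager would have disclosed a veracious signal, obtaining two branch thresholds $\widehat v^L_0(s)$ and $\widehat v^L_1(s)$ separately (each now uniquely defined via your monotonicity argument), and then proving a \emph{selection} result (Lemma \ref{lemm:uniquethresh}): only $\widehat v^L_1(s)$ is self-consistent for $s>s^B$ and only $\widehat v^L_0(s)$ for $s<s^B$. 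That selection step is the missing ingredient in your late-stage argument.

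A smaller issue concerns the outer uniqueness. Writing the total-derivative bound as ``terms each bounded by $\max\{1,\partial P/\partial\widehat v\}$'' yields a bound of at least one, not strictly less than one. The paper's argument (Lemma \ref{lemm:finalpart}) separates the veracious- and non-veracious-signal contributions and shows each is at most one with the non-veracious piece \emph{strictly} below one on a positive-probability event (namely, whenever the manager would remain silent late). Your envelope sketch is pointed in the right direction but needs that decomposition made explicit, in particular because at $v=v^E$ on the veracious branch both the $\partial P/\partial s$ and the $\partial P/\partial\widehat v$ channels are live simultaneously, and one must check that they do not combine to exceed one.
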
 


\noindent The existence of the equilibrium is enabled by our observation that in the dynamic case the market price can never sharply increase in $s$:\footnote{As in prior work, the simultaneous presence of costs  and uncertain information endowment may result in multiplicity of equilibria (e.g., Bagnoli and Bergstrom 2005; Kartik, Lee and Suen 2019). This possibility is ruled out in our case because $g(\cdot)$ is from the log-concave family of distributions.} 

\begin{corollary}
\label{noup}
For threshold equilibria (as in Definition \ref{def:threshold}), the nondisclosure price may exhibit a steep decline at $s=\min\{v^E,v^L(s)\}$ but never a steep increase. 
\end{corollary}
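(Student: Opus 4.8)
The plan is to isolate the single point at which $P(s,\emptyset)$ can be discontinuous, read off the direction of the jump from the closed form in Lemma \ref{price}, and then use the equilibrium conditions to show the relevant threshold sits at or above $v^B$ — precisely the borderline separating an upward from a downward jump (cf.\ Proposition \ref{pricenon}).

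First I would note that in any threshold equilibrium the date-5 silent set given $s$ is $\{v\le \widehat{v}(s)\}$ with effective threshold $\widehat{v}(s)=\min\{v^E,v^L(s)\}$, since a silent manager disclosed neither early ($v\le v^E$) nor late ($v\le v^L(s)$). As $v^L(\cdot)$ is continuous (Definition \ref{def:threshold}) and $v^E$ is constant, $\widehat{v}(\cdot)$ is continuous, so in \eqref{priceformula} the only channel for a discontinuity in $s$ is the indicator $\mathbbm{1}_{s>\widehat{v}(s)}$. Hence $P(\cdot,\emptyset)$ can break only at a point $s_0$ with $s_0=\widehat{v}(s_0)=\min\{v^E,v^L(s_0)\}$, which is exactly the location asserted in the statement.

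Writing $t:=s_0=\widehat{v}(s_0)$ and $B(t):=\frac{(1-p)\mu+pG(t)\,\mathbb{E}[v\mid v\le t]}{1-p+pG(t)}$ for the benchmark nondisclosure price at threshold $t$ (cf.\ \eqref{PNDD}), I would evaluate the one-sided limits of \eqref{priceformula} and observe that both are weighted averages of $t$ and $B(t)$: the left limit places weight $q$ on $t$ while the right limit places weight $q(1-p)<q$ on $t$ (the common remaining weight $(1-q)(1-p+pG(t))$ falling on $B(t)$). Thus $P^-\!-P^+$ inherits the sign of $t-B(t)$, so the jump is downward exactly when $t>B(t)$. To convert this into a statement about $v^B$ I would use that $D(t):=B(t)-t$ obeys $D'(t)=-\frac{pg(t)}{1-p+pG(t)}D(t)-1$, so $D'=-1<0$ at every zero of $D$; this forces a unique sign change, at $t=v^B$ (the threshold of Lemma \ref{lm1}), giving $t>B(t)\iff t>v^B$. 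Hence the break at $s_0$ is downward, flat, or upward according as $\widehat{v}(s_0)$ is above, equal to, or below $v^B$.

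It then remains to exclude $\widehat{v}(s_0)<v^B$. If the minimum is the early threshold, $\widehat{v}(s_0)=v^E>v^B$ by Definition \ref{def:threshold}, so the break is downward. If instead $\widehat{v}(s_0)=v^L(s_0)<v^E$, I would invoke the late indifference condition: the marginal type $v=t$ facing $s_0=t$ is indifferent, so $t=P(s_0,\emptyset)$, and since $\mathbbm{1}_{s_0>\widehat{v}(s_0)}=\mathbbm{1}_{t>t}=0$ the relevant value is the left limit, whence $P^-=t$ forces $B(t)=t$, i.e.\ $t=v^B$ — the continuous knife-edge (matching Proposition \ref{lm4}, where $v^L(s)=s$ only at $s=v^B$). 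Either way $\widehat{v}(s_0)\ge v^B$, ruling out an upward jump. I expect the late-binding case to be the crux: the work is in showing that indifference pins the threshold exactly to $v^B$ rather than letting $v^L(s_0)$ fall below it, which is where the convention $\mathbbm{1}_{t>t}=0$ and the identity $P^-=t\Rightarrow B(t)=t$ are essential.
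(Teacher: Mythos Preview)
Your argument is correct and is, in fact, more explicit than the paper's. The paper's proof simply cites the technical appendix (Section~B.2, Lemmas establishing that any early-disclosing type satisfies $v\ge s^B$) together with Proposition~\ref{pricenon}, and does not spell out the late-binding case separately. You instead work directly from the closed form \eqref{priceformula}: you identify the unique channel for a discontinuity (the indicator), reduce the sign of the jump to the sign of $t-B(t)$, pin the crossover at $v^B$ via a clean ODE-type argument for $D(t)=B(t)-t$, and then split on which of $v^E$ or $v^L(s_0)$ binds. This is more self-contained and makes the role of $v^B$ as the exact borderline transparent, whereas the paper's route leans on the machinery of the auxiliary game and the $s^B$ construction developed in Appendix~B.

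One small omission: in the dynamic setting of Section~\ref{freqd} the late indifference condition is $v^L(s_0)-c^L=P(s_0,\emptyset)$, not $v^L(s_0)=P(s_0,\emptyset)$. With $c^L\ge 0$ your computation gives $B(t)=t-\dfrac{c^L}{1-w}\le t$ (where $w=\tfrac{q}{q+(1-q)(1-p+pG(t))}$), hence $t\ge v^B$ rather than $t=v^B$. This only strengthens the conclusion---the jump is strictly downward when $c^L>0$ and the knife-edge continuity arises only at $c^L=0$---so your argument goes through unchanged once the cost is inserted.
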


The equilibrium in the dynamic case is more involved for several reasons. First, the manager must keep track of the second-period decision, which may be uncertain. Second, the manager has the option to remain silent and delay her decision. 
Importantly, this option value is non-constant in the manager’s value and itself is a function of the market’s equilibrium inference. As a result, an endogenous feedback between the disclosure decision and the relative benefit from nondisclosure arises. One complication this introduces is that we are not guaranteed any differentiability properties of the price function given an arbitrary conjecture regarding the manager's first period strategy, even though this does emerge in equilibrium (except for knife-edge cases). Thus, we use primitive arguments to show that the intuitions outlined in Section \ref{first} and Section \ref{second} hold generally, and that they then imply the increasing differences property.\footnote{In the proofs we show that the manager's payoff will satisfy an increasing difference property both in the case where the signal is veracious as well as in the case where it is not. This uses the fact that the \emph{expected} price given a non-veracious signal is independent of the manager's value. Despite the steep decline, the option value implied at this history does not increase at a rate faster than the value itself.}

In what follows, we briefly discuss several straightforward corollaries that consider the relative magnitude of rescheduling costs. We begin by 
 briefly noting how rescheduling costs affect disclosure incentives.
\begin{corollary}
\label{cost}
Suppose the thresholds $v^E$ and $v^L(s)$ are uniquely defined (as described in Proposition \ref{thm:overallthreshold}); then they are increasing in $c^{E}$ and $c^{L}$, respectively. There exist cutoffs $\overline{c}^j \in (0,v_{max}]$, $j=\{E,L\}$ such that, if the  costs exceed these cutoffs, disclosing early or late, respectively, is never beneficial. Cutoff $\overline{c}^j$ is increasing in $c^j$, $j=\{E,L\}$. 
\end{corollary}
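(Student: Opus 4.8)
The plan is to work from the manager's two indifference conditions, whose existence and uniqueness are guaranteed by Proposition~\ref{thm:overallthreshold}. At date~4 an informed manager with value $v$ facing signal $s$ discloses iff $v-c^L \ge P(s,\emptyset)$, so $v^L(s)$ solves $v^L(s)-c^L = P(s,\emptyset)$ with the price evaluated at the equilibrium threshold; at date~2 she discloses early iff $v-c^E \ge W(v)$, where $W(v)=\mathbb{E}_s[\max\{v-c^L,\,P(s,\emptyset)\}\mid v]$ is the value of waiting, and this defines $v^E$. For monotonicity in the own cost I would use a monotone-comparative-statics argument: the net late-disclosure benefit $(v-c^L)-P(s,\emptyset)$ is increasing in $v$ (the increasing-differences property from the proof of Proposition~\ref{thm:overallthreshold}) and falls pointwise as $c^L$ rises, which pushes the indifference value up; the analogous statement for $(v-c^E)-W(v)$ gives $v^E$ increasing in $c^E$.

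The one real subtlety, and the main obstacle, is the equilibrium feedback running through $P(s,\emptyset)$: by Lemma~\ref{price} the price depends on the very threshold $\widehat v$ whose monotonicity I am trying to establish (through $G(\widehat v)$ and $\mathbb{E}[v\mid v\le\widehat v]$), and it is increasing in $\widehat v$, so a naive partial-derivative argument is circular. I would handle this exactly as in Proposition~\ref{thm:overallthreshold}: recast each threshold as the fixed point of a best-response map that is monotone in the relevant cost, use log-concavity to secure a unique crossing, and then invoke that the unique fixed point of a parametrically monotone map inherits the monotonicity. Because the feedback is amplifying (a higher threshold raises $P(s,\emptyset)$, which further discourages disclosure) but the fixed point is unique, it moves up with the cost, delivering $\partial v^L/\partial c^L\ge 0$ and $\partial v^E/\partial c^E\ge 0$.

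For the cutoffs I would bound the gross benefit of disclosure. Since $P(s,\emptyset)=\mathbb{E}[v\mid s,\emptyset]\in[v_{min},v_{max}]$, the net late gain obeys $(v-c^L)-P(s,\emptyset)\le v_{max}-v_{min}-c^L\le v_{max}-c^L$, which is negative for every $(v,s)$ once $c^L$ is large enough; combined with the monotonicity above, $v^L(s)$ rises to $v_{max}$ at a finite cost, and I define $\overline c^L$ as the least cost at which no type discloses late. The bound $\overline c^L\le v_{max}$ is then immediate, and $\overline c^L>0$ because the top type strictly gains when $c^L$ is small. The identical argument on the date-2 problem (bounding $v-c^E$ against $W(v)$) gives $\overline c^E\in(0,v_{max}]$.

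Finally, the comparative statics of the cutoffs are cleanest at the extreme type. Evaluating the waiting value at $v=v_{max}$, where the top type discloses late for sure, gives $W(v_{max})=v_{max}-c^L$, so early disclosure of the top type beats waiting iff $c^E\le c^L$; this pins $\overline c^E$ essentially at $c^L$ and makes it increasing in the complementary cost $c^L$. Symmetrically, the relevant candidate for late disclosure is the highest type that did not disclose early, namely $v^E$; since $v^E$ is increasing in $c^E$ by the first part, raising $c^E$ leaves a higher-valued candidate in the late stage and hence enlarges the range of $c^L$ over which late disclosure remains beneficial, so $\overline c^L$ is increasing in $c^E$. The only care needed here is that $v^L(s)$ has a kink at $s=v^B$ (Proposition~\ref{ssss}) and $W(v)$ integrates across it, so I would phrase the envelope and monotonicity steps as integrated inequalities rather than pointwise derivatives.
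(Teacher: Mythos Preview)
The paper's own proof is simply ``follows similar steps to Verrecchia (1983) and is omitted.'' Your proposal is a correct and considerably more detailed elaboration of that standard costly-disclosure argument; in particular, you correctly flag the one ingredient absent from Verrecchia---the equilibrium feedback from the threshold into $P(s,\emptyset)$---and handle it via the uniqueness and increasing-differences machinery already established for Proposition~\ref{thm:overallthreshold}, which is exactly the right move here.

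One small looseness in your last paragraph: the identity $W(v_{max})=v_{max}-c^L$ holds only when $c^L$ is small enough that the top type strictly prefers late disclosure after every signal realization. For larger $c^L$ there are signals with $P(s,\emptyset)>v_{max}-c^L$, so $W(v_{max})>v_{max}-c^L$. The conclusion you want---that $\overline c^E=v_{max}-W(v_{max})$ is increasing in $c^L$---still goes through, but via the monotonicity of $W(v_{max})$ in $c^L$ (each term in the max is weakly decreasing in $c^L$, and the induced change in $P(s,\emptyset)$ through $v^L(s)$ is controlled by the same fixed-point argument you invoke earlier) rather than via an equality. With that adjustment your sketch is complete and goes well beyond what the paper itself provides.
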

\noindent The higher the rescheduling cost, the larger the set of values for which disclosure is not beneficial and the less negatively the market reacts to the manager's silence (see Verrecchia 1983 for similar intuition with disclosure costs). 
For the reminder of the paper we assume none of the costs is prohibitively large, $c^j<\overline{c}^j$ for $j=\{E,L\}$.

Consider a scenario in which delaying disclosure is  cheaper than advancing it, $c^{E}> c^{L}$ (e.g., because rescheduling earnings announcement to an earlier date  creates significant schedule complications, immensely reduces preparation time, requires very costly reallocation of resources and manpower to meet the accelerated timeline). 
 A manager who observes $v<v^{s=v_{max}}\equiv \min\{1, c^{E}+ P(s=v_{max},d=\emptyset)\}$ may benefit (but not lose) from delaying her decision: she obtains a market price of $v$ net of $c^{E}$ by disclosing immediately and a weakly higher price (as the external signal may be favorable and prompt the manager to remain silent) net of $c^{E}> c^{L}$ if she delays. A manager who observes $v > v^{s=v_{max}}$ knows that, if she does not disclose at date 2, she will certainly disclose at date 4, regardless of the external signal. Thus the manager may benefit from delaying as she faces a lower cost.
 
 \begin{corollary}
\label{Emore costly}
When 
$c^{E}> c^{L}$ the manager remains silent at date 2 and discloses at date 4 if $v>v^L(s)$. 
\end{corollary}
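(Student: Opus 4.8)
The plan is to analyze the informed manager's date-2 choice directly and show that, when $c^{E} > c^{L}$, disclosing early is strictly dominated by waiting, irrespective of the market's conjectured nondisclosure price. First I would write down the three relevant payoffs for a manager who has observed $v$: disclosing early yields the deterministic payoff $v - c^{E}$, since by \eqref{PD} the date-5 disclosure price equals $v$ regardless of the signal, net of the advancing cost; waiting and then disclosing late yields $v - c^{L}$; and waiting and remaining silent yields $P(s,\emptyset)$. Hence the value of \emph{not} disclosing early, evaluated at date 2 before $s$ is realized, is the option value
\[
W(v) \;=\; \mathbb{E}_s\big[\max\{\,v - c^{L},\; P(s,\emptyset)\,\}\ \big|\ v\big],
\]
where the conditional law of $s$ given $v$ is the truth-or-noise mixture (mass $q$ at $s=v$, and $1-q$ drawn from $G$).

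The key step is the pointwise bound $\max\{v - c^{L},\, P(s,\emptyset)\} \ge v - c^{L}$, which holds for every realization of $s$ and for \emph{any} price function whatsoever. Taking expectations gives $W(v) \ge v - c^{L}$, and since $c^{E} > c^{L}$ we obtain
\[
W(v) \;\ge\; v - c^{L} \;>\; v - c^{E}.
\]
Thus waiting strictly beats early disclosure for every value $v$, so no informed type ever discloses at date 2. Because this comparison does not invoke the equilibrium value of $P(s,\emptyset)$, it holds off path as well, which forces the early threshold in the unique dynamic equilibrium of Proposition~\ref{thm:overallthreshold} to its degenerate upper bound $v^{E} = v_{max}$; equivalently, $v > v^{E}$ never occurs and the manager is silent at date 2.

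Given that every type pools on silence at date 2, date-2 silence is uninformative, so the market's date-4/date-5 inference conditional on $s$ and continued silence coincides with that of the late model of Section~\ref{second}. I would then invoke Proposition~\ref{lm4} (with the relevant threshold now incorporating the delaying cost $c^{L}$) to conclude that the remaining informed manager discloses at date 4 precisely when $v - c^{L} > P(s,\emptyset)$, i.e.\ when $v > v^{L}(s)$. The main (and only mild) obstacle is this reduction: one must verify that setting $v^{E} = v_{max}$ is consistent with the belief-updating behind $P(s,\emptyset)$ and $v^{L}(s)$—namely that, with all types silent at date 2, the conditional beliefs of Lemma~\ref{beliefs} and hence the price of Lemma~\ref{price} are unchanged from the late benchmark, so that the candidate ``never early, late threshold $v^{L}$'' strategy is indeed the unique threshold equilibrium. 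The domination inequality itself is immediate and requires no differentiability or log-concavity arguments.
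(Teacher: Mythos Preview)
Your proposal is correct and follows essentially the same approach as the paper: the text preceding the corollary argues, as you do, that waiting weakly dominates because the manager can always secure at least $v-c^{L}$ by disclosing late, and $v-c^{L}>v-c^{E}$; the paper's own proof simply refers back to this discussion. Your option-value formulation $W(v)=\mathbb{E}_s[\max\{v-c^{L},P(s,\emptyset)\}\mid v]\geq v-c^{L}$ is a cleaner packaging of the same pointwise domination, and your reduction to the late case via $v^{E}=v_{max}$ is exactly what the paper's dynamic framework (Definition~\ref{def:threshold} and Proposition~\ref{thm:overallthreshold}) accommodates.
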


Now consider the case where delaying is costlier, $c^{E}< c^{L}$ (e.g., due to significant costs for cancellation of previously scheduled announcement, serious disruption of internal planning, decision-making processes, and investor relations activities, or damaging reputation loss for withholding the information early on). At date 2, when deciding whether to delay, an informed manager compares her payoff from sharing her information immediately, $P(s, v)-c^{E}=v-c^{E}$, with her expected payoff from delaying the decision to date 4, $\mathbb{E}[P(s, d^{E}=\emptyset, d^{L})\lvert v] - \Pr(d^{L}=v)c^{L}$, where  $d^{L}$ represents late disclosure and $d^{E}$ early disclosure.
\begin{corollary}
\label{Lmore costly}
When $c^{E}< c^{L}$ 
the manager discloses at date 2 if $v> v^E$ and, if the signal realization is such that $v^L(s)<v^E$,  discloses  $v\in (v^L(s), v^E)$ at date 4. 
\end{corollary}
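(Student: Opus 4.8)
The plan is to prove Corollary \ref{Lmore costly} as a direct consequence of the comparison the informed manager makes at date 2 between immediate disclosure and delay, using the already-established thresholds $v^E$ and $v^L(s)$ from Proposition \ref{thm:overallthreshold}. First I would set up the manager's date-2 decision problem precisely. An informed manager observing value $v$ compares the payoff from early disclosure, $P(s,v) - c^E = v - c^E$, against the expected payoff from delaying, $\mathbb{E}[P(s, d^E=\emptyset, d^L)\lvert v] - \Pr(d^L=v)\,c^L$, exactly as stated in the text preceding the corollary. The key economic observation to exploit is that, since $c^E < c^L$, the cost structure now favors committing early rather than waiting; the threshold characterization from Proposition \ref{thm:overallthreshold} already guarantees that the optimal early-disclosure region is of the form $\{v > v^E\}$ for some $v^E \in (v^B, v_{max}]$.

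Next I would establish the early-disclosure part: the manager discloses at date 2 precisely when $v > v^E$. Here I would appeal to the increasing-differences property that underlies Proposition \ref{thm:overallthreshold} — the gain from early disclosure relative to delay is increasing in $v$ — so that the set of values for which advancing is optimal is an upper interval, pinned down by the indifference point $v^E$. Because $c^E < c^L$, high-value managers strictly prefer to lock in the price $v$ while paying the smaller cost $c^E$ rather than risk either an unfavorable resolution or the larger cost $c^L$ at date 4; this is what forces $v^E$ to be finite (below $v_{max}$) under the maintained assumption that costs are not prohibitive, $c^j < \overline{c}^j$ (Corollary \ref{cost}).

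Then I would handle the late-disclosure part for managers who did not disclose early, i.e.\ those observing $v \le v^E$. Conditional on reaching date 4 having remained silent, the manager has already incurred the option of delay and now faces the static late-disclosure comparison analyzed in Proposition \ref{lm4}, whose solution is the threshold $v^L(s)$. Thus, for a signal realization $s$ with $v^L(s) < v^E$, the managers observing $v \in (v^L(s), v^E)$ are exactly those who (i) did not find early disclosure worthwhile, since $v \le v^E$, yet (ii) find late disclosure worthwhile, since $v > v^L(s)$. This yields the claimed late-disclosure region $(v^L(s), v^E)$. The qualifier ``if $v^L(s) < v^E$'' simply ensures this interval is nonempty; when $v^L(s) \ge v^E$ the two regions do not leave a gap and no additional late disclosure occurs beyond what the early threshold already captures.

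The main obstacle I anticipate is \emph{not} the region-splitting bookkeeping but rather verifying internal consistency of the conjectured equilibrium thresholds: the nondisclosure price entering both comparisons is itself a function of the market's conjecture about $v^E$ and $v^L(s)$, so I must confirm that the indifference conditions defining $v^E$ and $v^L(s)$ are mutually consistent and that the resulting price function exhibits no steep upward jump (which Corollary \ref{noup} rules out, and which would otherwise break the threshold structure). Since Proposition \ref{thm:overallthreshold} already delivers existence and uniqueness of the threshold pair, the crux is to invoke that result correctly and then argue that under $c^E < c^L$ the comparison at date 2 genuinely produces a nontrivial early-disclosure cutoff $v^E < v_{max}$ rather than collapsing to ``never disclose early'' — the opposite of the $c^E > c^L$ case in Corollary \ref{Emore costly}. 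I would close by noting that log-concavity of $g$ secures the increasing-differences property needed to rule out multiplicity, so the characterization is the unique equilibrium outcome.
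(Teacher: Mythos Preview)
Your proposal is correct and mirrors the paper's approach: the paper's own proof is simply ``follows from the discussion in the text and is omitted,'' and you have spelled out precisely that discussion, invoking Proposition~\ref{thm:overallthreshold} for the threshold structure and the date-2 comparison between $v-c^{E}$ and the expected delayed payoff. One small refinement: in the dynamic case the late threshold $v^{L}(s)$ you appeal to at date~4 is the one delivered by Proposition~\ref{thm:overallthreshold} (which conditions on early nondisclosure), not the standalone late-case threshold of Proposition~\ref{lm4}; the logic is the same but the conditioning distribution differs.
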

\noindent Managers observing sufficiently high firm values prefer to disclose early and avoid the higher cost of delaying; the rest prefer to delay 
until after the arrival of the external signal.

The comparative statics of $v^L(s)$ and $v^E$ with respect to $q$ in Sections \ref{second} and \ref{first}, respectively, can shed light on how prior beliefs that external news is veracious impact the timing of corporate disclosures. In particular, external news that is ex ante more likely veracious (higher $q$) increases the managers' incentives to delay disclosure. The withheld values will be revealed later, unless the external news turns out to be sufficiently favorable.

\section{Extensions}

\subsection{Signal Structure}
\label{structuredisc}

We now show that the truth-or-noise signal structure is not crucial for our main results: prices continue to be non-monotonic when the veracious signal is not perfectly informative. 
Ultimately, what drives our key insights and sets them apart from those in prior literature 
is the \emph{players' uncertainty} about $\phi$. We elaborate on this point below.

\textbf{Non-monotonicity with uncertain veracity.}
Let $\phi \in \{\phi_{1}, \phi_{2}\}$, with $\Pr(\phi=\phi_{1})=q$ and $\Pr(\phi=\phi_{2})=1-q$.\footnote{A more general specification could allow for $\Phi$ to be the set of \emph{veracity states}. Given $\phi \in \Phi$ and $v$, the signal $s$ is distributed according to $H(\cdot \mid v, \phi)$ on the real line. 
Proof of price non-monotonicity under this more general structure is available upon request.}
Veracity $\phi_{2}$ implies the signal is uninformative ($s=x$), and veracity $\phi_1$ implies that the signal reflects the value with additively separable noise, $\varepsilon$, which is distributed according to a log-concave  distribution with mean zero and variance $1/\tau$. A possible interpretation of this structure is: external news that are not based on rumor are informative but not perfectly so. To simplify the analysis and the exposition in this section we assume that the supports of $v$ and $\varepsilon$ are infinite.\footnote{That is, we now let $v_{min}\to -\infty$ and $v_{max}\to +\infty$. If $v_{min}$ and $v_{max}$ were finite, signal realizations outside $[v_{min},v_{max}]$ would imply the veracity is $\phi_1$ with certainty.  In addition, signal realizations around the bounds would carry information about the realized noise $\varepsilon$. These are updating patterns that one needs to account for when constructing the posterior expectations. While our results hold also with finite support, we assume infinite support to streamline and simplify the exposition in this extension.\label{updating}}

Recall that price non-monotonicity arises when $\widehat{v}>v^B$. This property also continues to hold under the alternative structure. 

\begin{prop}\emph{\textbf{(Non-monotonicity with uncertain veracity)}}
\label{gennonmon}
 Under the signal structure specified in Section \ref{structuredisc} and for given threshold $\widehat{v}>v^B$, there exists a finite cutoff $\widehat{\tau}$ such that a sufficient condition for the nondisclosure price to be non-monotonic in the external signal $s$ is that $\tau>\widehat{\tau}$.
    \end{prop}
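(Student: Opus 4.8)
The plan is to view the proposition as a vanishing-noise perturbation of the truth-or-noise benchmark: as the precision $\tau$ grows, the state $\phi_1$ becomes arbitrarily close to perfectly informative, so the date-5 nondisclosure price should converge to the one characterized in Lemma \ref{price}, which by Proposition \ref{pricenon} is non-monotonic precisely when $\widehat{v}>v^B$. First I would write the price as a ratio $P(s,\emptyset)=N(s)/D(s)$ obtained from Bayes' rule over the four events $(\kappa,\phi)$. Writing $f_\tau$ for the density of $\varepsilon$ (log-concave, mean zero, variance $1/\tau$), the normalizer is
\begin{align*}
D(s)={}&(1-p)(1-q)g(s)+(1-p)q\,(g*f_\tau)(s)\\
&+p(1-q)G(\widehat{v})g(s)+pq\int_{-\infty}^{\widehat{v}}g(v)f_\tau(s-v)\,dv,
\end{align*}
and $N(s)$ is the matching sum of posterior-weighted values, with terms $(1-p)(1-q)g(s)\mu$, $(1-p)q\int v\,g(v)f_\tau(s-v)\,dv$, $p(1-q)G(\widehat{v})g(s)\,\mathbb{E}[v\mid v\le\widehat{v}]$, and $pq\int_{-\infty}^{\widehat{v}}v\,g(v)f_\tau(s-v)\,dv$.

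Next I would show pointwise convergence of $P(\cdot,\emptyset)$ as $\tau\to\infty$ at each fixed $s\neq\widehat{v}$. Since $f_\tau$ is an approximate identity, standard arguments give $(g*f_\tau)(s)\to g(s)$ and $\int v\,g(v)f_\tau(s-v)\,dv\to s\,g(s)$; for the two truncated convolutions, because $s$ is held a fixed distance from the boundary $\widehat{v}$, the portion of the noise mass that crosses $\widehat{v}$ vanishes, so they converge to $g(s)\mathbbm{1}_{s<\widehat{v}}$ and $s\,g(s)\mathbbm{1}_{s<\widehat{v}}$ respectively. Substituting these limits into $N/D$ reproduces exactly the truth-or-noise price $P^{\mathrm{TN}}$ of Lemma \ref{price}, separately on $\{s<\widehat{v}\}$ and $\{s>\widehat{v}\}$. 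The ratios are well behaved because $D(s)\ge(1-p)(1-q)g(s)>0$ uniformly in $\tau$ (the infinite-support, log-concave $g$ is strictly positive on $\mathbb{R}$), so $P(s,\emptyset)\to P^{\mathrm{TN}}(s)$.

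With convergence established I would invoke Proposition \ref{pricenon} together with the analysis of the jump in Lemma \ref{price}: for $\widehat{v}>v^B$ the limit $P^{\mathrm{TN}}$ is increasing on each side of $\widehat{v}$ (Proposition \ref{sensitivity}) but has a strictly downward jump there, $a:=P^{\mathrm{TN}}(\widehat{v}^-)>P^{\mathrm{TN}}(\widehat{v}^+)=:b$. Hence I can fix points $s_0<s_1<\widehat{v}<s_2$ with $P^{\mathrm{TN}}(s_0)<P^{\mathrm{TN}}(s_1)$ and $P^{\mathrm{TN}}(s_1)>P^{\mathrm{TN}}(s_2)$ (the first from monotonicity on $(-\infty,\widehat{v})$, the second from $a>b$). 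Applying pointwise convergence at these three fixed signals yields a finite cutoff $\widehat{\tau}$ such that, for all $\tau>\widehat{\tau}$, $P(s_0,\emptyset)<P(s_1,\emptyset)$ while $P(s_1,\emptyset)>P(s_2,\emptyset)$ with $s_1<s_2$. The price therefore rises and then falls, so it is non-monotonic in $s$, as claimed.

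The hard part will be the convergence step, and within it the two truncated convolutions near the boundary: I must argue that, for $s$ bounded away from $\widehat{v}$, the contribution of noise draws straddling $\widehat{v}$ is asymptotically negligible, and that the $v$-weighted integrals converge despite the unbounded support. This needs $\int|v|\,g(v)\,dv<\infty$ (finite under log-concavity with finite mean $\mu$) together with a dominated-convergence/uniform-integrability bound on the family $\{v\,g(v)f_\tau(s-v)\}_{\tau}$. Once this is secured, the magnitude of the jump $a-b>0$ is already pinned down by Proposition \ref{pricenon}, so non-monotonicity follows automatically and the cutoff $\widehat{\tau}$ is finite by construction.
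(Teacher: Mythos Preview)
Your approach is essentially identical to the paper's: establish pointwise convergence of $P_{\tau}(s,\emptyset)$ to the truth-or-noise price as $\tau\to\infty$, invoke Proposition~\ref{pricenon} for non-monotonicity of the limit when $\widehat{v}>v^{B}$, then pick fixed signals and use an $\epsilon/2$ argument to transfer the inequality to finite $\tau$. You supply more detail on the convergence (the approximate-identity/convolution argument and the uniform lower bound on the denominator) and add the extra point $s_{0}$ to rule out monotone decreasing---the paper's proof asserts ``continuity at infinity'' more tersely and only exhibits $s_{1}<s_{2}$ with $P_{\tau}(s_{1})>P_{\tau}(s_{2})$---but the underlying strategy is the same.
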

\noindent Intuitively, $\tau \to 0$ implies that signal with $\phi_1$ is very noisy and is ignored by the market. On the other hand, $\tau > \widehat{\tau}$ implies that the precision of that signal is sufficiently high. Thus, as before, an increase in the signal positivity may result in investors facing a silent manager to doubt more strongly that the signal veracity is $\phi_1$ and suspecting that its veracity is instead $\phi_2$. Thereby the market price may decrease. 
Figure 8 panel (a) graphically illustrates this. One can  see that the price  is non-monotonic because of a price decline at $\widehat{v}$. The higher the signal precision, the more pronounced is this non-monotonicity. In fact, when the precision is perfect (as in the case of the main model), there is an atom of the distribution at $s=v$ and the price becomes discontinuous as in Figure 3 panel (c). While with less-than-perfect precision the price discontinuity may smooth out, Proposition \ref{gennonmon} and Figure 9 (a)  show that the \emph{non-monotonicity persists} as long as the veracity of the signal is uncertain.


\begin{figure}[t]
\setlength{\unitlength}{.5cm}
\vspace{-0.6cm}
\begin {picture}(11.5,11.5)\thicklines
\put(0,0){\includegraphics[width=.43\textwidth]{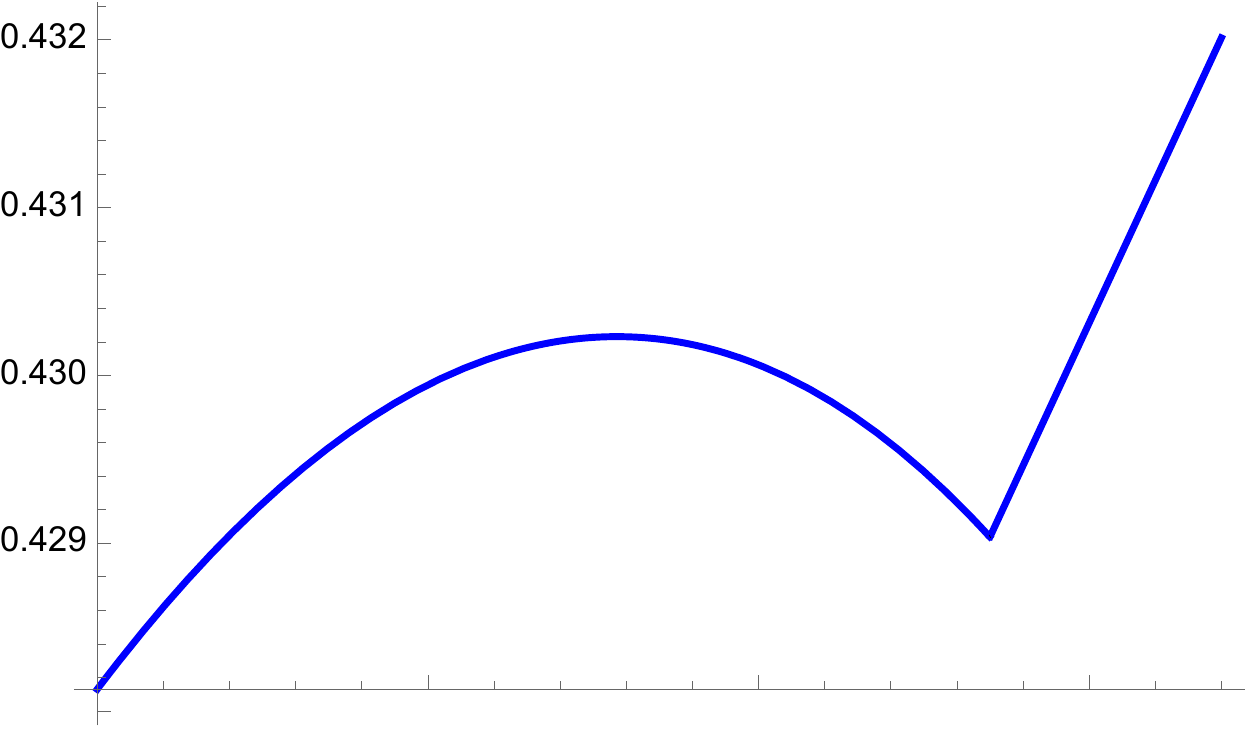}}
\put(17,0){\includegraphics[width=.43\textwidth]{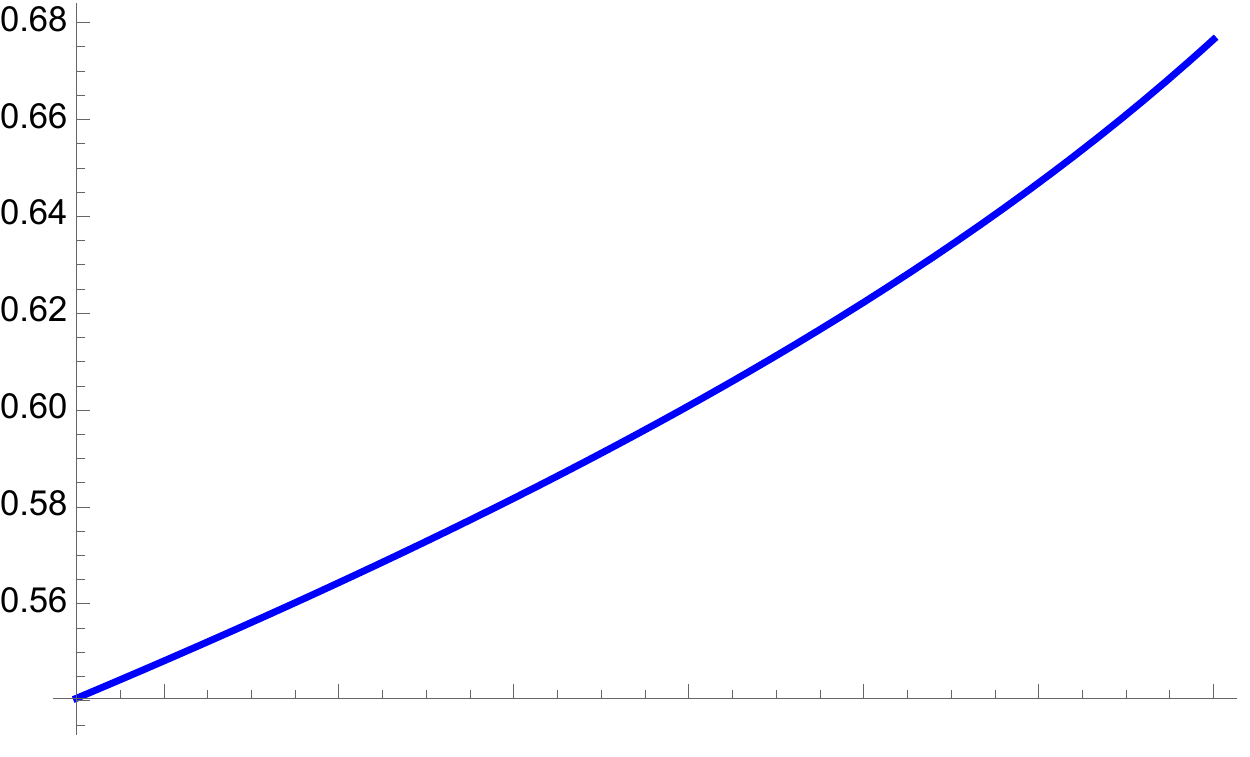}}
\put(14.5,-0.1){\footnotesize{$s$}}
\put(31.5,-0.1){\footnotesize{$s$}}
\put(0,9){\footnotesize{Price}}
\put(17,9){\footnotesize{Price}}
\put(2.3,-0.5){\footnotesize{(a) Illustration of Proposition \ref{gennonmon}}}
\put(2.4,-1.5){\footnotesize{$q=0.15$ so that $s=x$ \textbf{or} $s=v+\varepsilon$}}
\put(18.4,-0.5){\footnotesize{(b) Illustration of Proposition \ref{mon}}}
\put(17.4,-1.5){\footnotesize{$q=1$ so that $s=v+\varepsilon$ \textbf{with certainty}}}
\end{picture}
\bigskip
\vspace{0.5cm}
\begin{center}
\textbf{Figure 8}: 
Nondisclosure price for given $\widehat{v}$ as a function of the external signal $s$\\ {\footnotesize{Numerical example with $p=0.6$ and $\widehat{v}=0.41>0.39=v^B$. We let  $v\sim U[v_{min}=0, v_{max}=1]$, and $\varepsilon \sim U \left[-0.085, 0.085\right]$ so that $\tau=
\frac{12}{(0.085+0.085)^2}=415$, and adjust for the updating explained in footnote \ref{updating}.}}
\end{center}

\end{figure}



\textbf{Monotonicity with certain veracity.} 
Part of the prior disclosure literature (e.g., Acharya, DeMarzo and Kremer 2011; Menon 2020) assumes ``truth-with-noise" structures where the signal equals the true value plus a noise term.\footnote{In Acharya, DeMarzo and Kremer (2011) the manager's value and the external signal are related by the equation: $v=\mu(s)+\sigma(s)z,$, where $z$ is independent of $s$ and  $\mu$ is a strictly increasing deterministic function. For the most of their analysis, $\sigma(s)$ is constant,  equal to $\sigma$ for all $s$. This yields qualitatively similar results as assuming the signal equals for sure the true value plus a noise term. Proof of Proposition \ref{mon} under the exact specification in Acharya, DeMarzo and Kremer (2011) is available upon request.} One could view such structure as a special case of the one in this section where $q=1$, i.e.,  $\phi=\phi_1$ \emph{with certainty}. 
\begin{prop}\emph{\textbf{(Monotonicity with certain veracity)}}
\label{mon}
When the signal's veracity is $\phi=\phi_1$ with certainty so that it is known that $s=v+\varepsilon$, the nondisclosure price is monotonically increasing in $s$ for any given $\widehat{v}$.
\end{prop}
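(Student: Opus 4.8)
The plan is to express the nondisclosure price as a posterior mean of $v$ under a \emph{fixed}, $s$-independent prior, and then establish monotonicity through a monotone-likelihood-ratio (MLR) argument. First I would derive the price formula. With $q=1$ the signal is $s=v+\varepsilon$ for sure, so silence arises either because the manager is uninformed (probability $1-p$, any $v$) or because she is informed with $v\le\widehat v$ (probability $p$). Writing $f$ for the log-concave density of $\varepsilon$, the joint density of $(v,s,\emptyset)$ equals $g(v)\,f(s-v)\,w(v)$ with the step weight $w(v)\equiv(1-p)+p\,\mathbbm{1}_{v\le\widehat v}$, so that
\[
P(s,\emptyset)=\frac{\int v\,g(v)\,w(v)\,f(s-v)\,dv}{\int g(v)\,w(v)\,f(s-v)\,dv}.
\]
The key structural observation is that $s$ enters \emph{only} through the location-family likelihood $f(s-v)$: the selection induced by silence is entirely absorbed into the $s$-independent measure $\tilde g(v)\propto g(v)w(v)$, which is a valid density because $w\in[1-p,1]$ is bounded and $g$ integrates to one.

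Next I would verify that the likelihood $f(s-v)$ has the MLR property in $(s,v)$. For $s'>s$, the quantity $\log f(s'-v)-\log f(s-v)$ has $v$-derivative $(\log f)'(s-v)-(\log f)'(s'-v)$, which is nonnegative because $\log f$ is concave, so $(\log f)'$ is nonincreasing, and $s'-v>s-v$. Hence $f(s'-v)/f(s-v)$ is nondecreasing in $v$, i.e. the posteriors are ordered in the likelihood-ratio order.

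Finally I would invoke the standard chain MLR $\Rightarrow$ likelihood-ratio order $\Rightarrow$ first-order stochastic dominance of the posteriors. Since the posterior of $v$ given $(s,\emptyset)$ has density proportional to $\tilde g(v)\,f(s-v)$, the ratio of posteriors at $s'>s$ is proportional to $f(s'-v)/f(s-v)$, which is nondecreasing in $v$; thus the posterior at $s'$ dominates the posterior at $s$ in the FOSD order. Applying this to the identity map yields $P(s',\emptyset)\ge P(s,\emptyset)$, establishing that the nondisclosure price is increasing in $s$ for every threshold $\widehat v$.

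The argument is conceptually short once the reformulation is in place; the main thing to get right is the claim that the $s$-dependence is confined to the location-family likelihood, so that the discontinuous, threshold-induced selection does not disturb the MLR structure. This is precisely the feature the $q=1$ case enjoys and the $q<1$ case of Proposition \ref{gennonmon} lacks, where the uninformative component $\phi_2$ contributes a term that is linear in $s$ but flat in $v$ and can reverse the monotonicity. A secondary, purely technical point is integrability: with infinite support one must ensure $\mathbb{E}[\,|v|\mid s,\emptyset\,]<\infty$, which follows from $w$ being bounded together with the finite variances of $v$ and $\varepsilon$ under log-concavity; and strict monotonicity, if desired, follows because the MLR inequality is strict on a set of positive measure whenever $(\log f)'$ is strictly decreasing.
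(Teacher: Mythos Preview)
Your proof is correct and follows essentially the same route as the paper's: both absorb the nondisclosure selection into an $s$-independent prior on $v$ and then invoke Milgrom's MLR criterion, verifying MLR of the location-family likelihood $f(s-v)$ via log-concavity of $f$. Your presentation is somewhat more explicit in writing out the reweighted prior $\tilde g(v)\propto g(v)w(v)$ and isolating where $s$ enters, whereas the paper simply cites Milgrom's ``for any prior'' clause and checks the likelihood-ratio condition directly; the underlying argument is the same.
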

\ni The main driving force behind price monotonicity when $q=1$ is the lack of updating about $\phi$. Intuitively, when investors know that $\phi=\phi_1$ with certainty, there is no place for update in their beliefs about veracity, yielding the monotonically increasing property of the price. This result is graphically illustrated in Figure 8 (b). 

\subsection{Price Fluctuations with Frequent Adjustments} 
\label{pref}

Our model posits that the market price is formed only once, at date 5, after the external news and the manager's disclosure. In capital markets, however, prices are adjusted more frequently. 
This extension considers price adjustments on every date. To do so we return to the truth-or-noise signal structure and focus attention only to the early and late cases. Let $P^{j}_{t}(\Omega)$ be the price  at date $t$ under scenario $j=\{\text{E}, \text{L}\}$ and $\delta \in (0,1)$ be the discount factor faced by the manager.


\textbf{Late disclosure with frequently adjusted market price.} First, consider the late disclosure case. The starting price  is simply the prior expectation, $P_1^L=\mathbb{E}[v]=\mu$. There is no change until date 3, when the external signal is realized and the price  is adjusted to $P^L_3(s)=\mathbb{E}[v\lvert s]=q s+(1-q)\mu$. As one would expect, favorable external news increases the price, and unfavorable one decreases it; i.e.,
$
P^L_{1} \lesseqgtr P^L_3 (s) \quad \Leftrightarrow \quad s \gtreqless \mu.
$
At date 4, if the manager discloses, the price  is $P^L_4(s, v)=v$. 
Otherwise, it  is 
$
P^L_4(s,\emptyset)=P(s,\emptyset)
$. All future payments remain at their date-4 levels. Thus frequent price adjustment only results in scaling of the same payments that the manager had in Section \ref{second} and leaves the manager's decision unaffected.\footnote{When deciding whether to disclose, the manager compares the present value of her current and future payments from disclosure, $\Pi^L(v) = P^L_4(s,v) + \delta P^L_5(s,v) =(1+\delta)P(s,v)$, with those from nondisclosure, $\Pi^L(s, \emptyset) = P^L_4(s,\emptyset) + \delta P^L_5(s,\emptyset)=(1+\delta)P(s,\emptyset)$.}  

\begin{corollary}
\label{pr5}
If at date 4 the manager does not disclose, it holds that $P^L_4(s,\emptyset)<P^L_3(s)$ for any $s$. If at date 4 the manager discloses, it holds that  $P^L_4(s,v) \lesseqgtr P^L_3(s)$ if $s \gtreqless \frac{v-(1-q)\mu}{q}$.
\end{corollary}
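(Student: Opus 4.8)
The plan is to treat the two parts separately, since the second is an immediate algebraic rearrangement while the first carries the economic content. Throughout I work inside the late-disclosure equilibrium of Proposition \ref{lm4}: the text already notes that frequent adjustment only rescales the manager's payoffs and leaves her decision unchanged, so the equilibrium threshold is still $v^L(s)$, the date-4 nondisclosure price is $P^L_4(s,\emptyset)=P(s,\emptyset)$ from Lemma \ref{price} evaluated at $\widehat v=v^L(s)$, and the date-3 price is $P^L_3(s)=\mathbb{E}[v\mid s]=qs+(1-q)\mu$.

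For the disclosure case I would simply substitute $P^L_4(s,v)=v$ and $P^L_3(s)=qs+(1-q)\mu$ and rearrange: $v\lesseqgtr qs+(1-q)\mu \iff v-(1-q)\mu\lesseqgtr qs \iff s\gtreqless \frac{v-(1-q)\mu}{q}$, where dividing by $q>0$ preserves the direction of each relation. This settles all three cases of the stacked inequality at once.

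For the nondisclosure case — that silence strictly lowers the price below its date-3 level — the key device is the law of iterated expectations at the date-3 information set. Conditioning on the (measurable) event that an informed manager discloses ($D$) versus stays silent ($\emptyset$), I would write $P^L_3(s)=\mathbb{E}[v\mid s]=\Pr(D\mid s)\,\mathbb{E}[v\mid s,D]+\Pr(\emptyset\mid s)\,\mathbb{E}[v\mid s,\emptyset]$ and identify $\mathbb{E}[v\mid s,\emptyset]=P(s,\emptyset)=P^L_4(s,\emptyset)$. Thus $P^L_3(s)$ is a convex combination of the disclosure-branch expectation and the nondisclosure price, and it suffices to show the disclosure branch lies strictly above and carries strictly positive weight. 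Positive weight holds because $p>0$ and, by Proposition \ref{lm4}, $v^L(s)<v_{max}$, so $\Pr(D\mid s)>0$. For the strict ordering I use the equilibrium indifference condition $P(s,\emptyset)=v^L(s)=\widehat v$ together with the fact that every disclosing type satisfies $v>\widehat v$ (the marginal type is silent under our tie-breaking); since $G$ is continuous this gives $\mathbb{E}[v\mid s,D]>\widehat v=P(s,\emptyset)$. Combining, $P^L_3(s)$ is a strict mixture of a term above $\widehat v$ and the term $P(s,\emptyset)=\widehat v$, whence $P^L_4(s,\emptyset)=P(s,\emptyset)<P^L_3(s)$ for every $s$.

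The main obstacle is conceptual rather than computational: one must anchor the nondisclosure branch with the equilibrium indifference $P(s,\emptyset)=v^L(s)$, since for an arbitrary off-equilibrium threshold $\widehat v$ the nondisclosure expectation (a weighted average of $\mu$, $s$, and $\mathbb{E}[v\mid v\le\widehat v]$) need not fall below $\widehat v$. With that anchor in place, the tower-property argument yields the strict drop uniformly in $s$, with no need to split into the $s\le\widehat v$ and $s>\widehat v$ regions. As a cross-check I would also verify the inequality directly from the closed form in Lemma \ref{price}, substituting the belief weights of Lemma \ref{beliefs} and using $\mathbb{E}[v\mid v\le\widehat v]<\mu$ and $G(\widehat v)\in(0,1)$; this reduces to a routine sign comparison that reproduces the probabilistic conclusion.
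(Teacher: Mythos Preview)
Your argument is correct and noticeably cleaner than the paper's own proof. For the disclosure part you do exactly what the paper calls ``straightforward.'' For the nondisclosure part, the paper proceeds computationally: it identifies $P^L_4(s,\emptyset)=v^L(s)$, computes $\frac{d}{ds}v^L(s)$ via the envelope theorem and compares it to $\frac{d}{ds}P^L_3(s)=q$, then locates the fixed point $s^{o}$ with $s^{o}=v^L(s^{o})$, checks the ordering there using $s^{o}<\mu$, and finally bounds $\min_s P^L_4(s,\emptyset)$ against $\min_s P^L_3(s)$ to cover the remaining range of $s$. Your route bypasses all of this: you apply the tower property at the date-3 information set, writing $P^L_3(s)=\mathbb{E}[v\mid s]$ as a convex combination of $\mathbb{E}[v\mid s,D]$ and $\mathbb{E}[v\mid s,\emptyset]=P^L_4(s,\emptyset)$, and then close with the equilibrium indifference $P(s,\emptyset)=v^L(s)$ together with the observation that every disclosing type has $v>v^L(s)$, so the disclosure branch strictly exceeds $v^L(s)$ and carries strictly positive weight (since $p>0$ and $v^L(s)<v_{max}$). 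This delivers the strict inequality uniformly in $s$ without any case split on $s\lessgtr \widehat v$ or derivative comparisons. The paper's approach has the virtue of making the slopes explicit, which connects to the comparative statics elsewhere; yours isolates the single economic ingredient that does the work, namely that silence is bad news relative to the pre-disclosure expectation because the disclosing set lies entirely above the indifference point.
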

\ni When the manager does not respond to the external signal, the price  decreases as investors conclude that the manager may have observed an even lower value. However, the price  may also decrease when the manager responds. 
This happens when the manager anticipates a further decrease in the price if she remains silent. To avoid this, the manager discloses values that, while lower than the date-3 price, exceed the price  that would have prevailed if she had remained silent.


\textbf{Early disclosure with frequently adjusted market price.} 
In the case of early disclosure it is straightforward that $P_1^E=\mathbb{E}[v]=\mu$ and $P_2^E (v)=v$. The nondisclosure price at date 2, $P_2^E(\emptyset)$, resembles the one defined in \eqref{PNDD} but for some conjectured by the investors threshold $\widehat{v}$ instead of $v^B$. (We derive the equilibrium threshold below.) At date 3, the newly arrived external signal changes the market price only if the manager remained silent at date 2, $P_3^E (s, v)=P_2^E(v)=v$ but $P_3^E(s,\emptyset)\neq P_2^E(\emptyset)$. Market prices at date 4 and 5 remain at their date-3 levels.

Do frequent price adjustments affect the manager's disclosure at date 2? An informed manager discloses if the present value of her payoffs from doing so
exceeds that from not.\footnote{These present values are $\Pi^E(v) = \mathbb{E}[P_2^E (v)\lvert v] + \delta \mathbb{E}[ P_3^E (s, v)  \lvert v] + \delta^2 \mathbb{E}[ P_4^E (s, v)  \lvert v] + \delta^3 \mathbb{E}[ P_5^E (s, v) \lvert v] 
= (1+\delta + \delta^2 + \delta^3)v$ and $\Pi^E(\emptyset)  = \mathbb{E}[P_2^E (\emptyset) \lvert v] + \delta \mathbb{E}[ P_3^E (s, \emptyset) \lvert v] + \delta^2 \mathbb{E}[ P_4^E (s, \emptyset) \lvert v] + \delta^3 \mathbb{E}[ P_5^E (s, \emptyset) v]
=  P_2^E(\emptyset) + (\delta + \delta^2 + \delta^3)\mathbb{E}[P(s,\emptyset) \lvert v]$.}
\begin{corollary}
    \label{freqearly}
When the market price is frequently adjusted, there exists a threshold $\widetilde{v}^E \in (v^B, v^E)$ such that the manager discloses (early) if and only if $v > \widetilde{v}^E$. The (early) disclosure threshold $ \widetilde{v}^E$ is increasing in the discount factor $\delta$.
\end{corollary}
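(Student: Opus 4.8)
The plan is to reduce the early disclosure decision to the single-crossing structure already established in Proposition \ref{pr2}. Writing $S\equiv\delta+\delta^2+\delta^3$, the footnoted payoffs say the manager discloses early iff $\Pi^E(v)>\Pi^E(\emptyset)$, i.e. iff $(1+S)v > P_2^E(\emptyset) + S\,\mathbb{E}[P(s,\emptyset)\lvert v]$. Dividing by $1+S$ and setting $w\equiv\tfrac{S}{1+S}$, this reads $v > (1-w)\,B(\widehat{v}) + w\, f(v;\widehat{v})$, where $B(\widehat{v})\equiv P_2^E(\emptyset)$ is the benchmark (Dye) nondisclosure price in \eqref{PNDD} for the conjectured threshold $\widehat{v}$, and $f(v;\widehat{v})\equiv\mathbb{E}[P(s,\emptyset)\lvert v]$ is exactly the date-5 expected nondisclosure price analyzed in Section \ref{first}. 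The right-hand side is thus a convex combination of the benchmark object (weight $1-w$) and the early object (weight $w$), and the proof exploits that $w$ is strictly increasing in $\delta$, ranging over $(0,\tfrac34)$.

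First I would establish single crossing in $v$ for a fixed conjecture $\widehat{v}$. On each side of $\widehat{v}$ the gap $h(v)\equiv(1+S)v - P_2^E(\emptyset) - S f(v;\widehat{v})$ is strictly increasing: the slope of $f$ in $v$ is $q\,\partial_s P(s,\emptyset)\big|_{s=v}$, which Lemma \ref{price} shows is below one ($\partial_s P$ equals $q/\gamma$ below the threshold and $q(1-p)/\gamma$ above it, both less than one), so $h'(v)>(1+S)-S=1>0$; and at $s=\widehat{v}$ the steep decline of $P(\cdot,\emptyset)$ (Proposition \ref{pricenon}, never upward by Corollary \ref{noup}) makes $h$ jump upward. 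Hence for each $\widehat{v}$ there is a unique best-response threshold, exactly as in Proposition \ref{pr2}.

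Next I would obtain existence and the bounds by the intermediate value theorem applied to $\Xi(v)\equiv v-(1-w)B(v)-w f(v;v)$, whose zero is the equilibrium threshold (self-consistent conjecture). Using the benchmark fixed point $B(v^B)=v^B$ (Lemma \ref{lm1}) gives $\Xi(v^B)=w\big(v^B-f(v^B;v^B)\big)<0$, since $f(v^B;v^B)>v^B$ is precisely the property ``the expected nondisclosure price exceeds low values before the steep decline'' that drives $v^E>v^B$ in Proposition \ref{pr2}. Using the early fixed point $f(v^E;v^E)=v^E$ gives $\Xi(v^E)=(1-w)\big(v^E-B(v^E)\big)>0$, because $v^E>v^B$ together with the benchmark single crossing of Lemma \ref{lm1} forces $B(v^E)<v^E$. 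Continuity then yields a root $\widetilde{v}^E\in(v^B,v^E)$. For the comparative static, at any root $\Xi=0$ combined with $B(\widetilde{v}^E)<\widetilde{v}^E$ (since $\widetilde{v}^E>v^B$) forces $\widetilde{v}^E<f(\widetilde{v}^E;\widetilde{v}^E)$, so $\partial\Xi/\partial w = B(\widetilde{v}^E)-f(\widetilde{v}^E;\widetilde{v}^E)<0$; by the implicit function theorem $\tfrac{d\widetilde{v}^E}{dw}=-\tfrac{\partial\Xi/\partial w}{\partial\Xi/\partial v}>0$ provided $\partial\Xi/\partial v>0$, and since $w$ increases in $\delta$ this delivers monotonicity in $\delta$.

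The hard part is uniqueness of the fixed point, equivalently $\partial\Xi/\partial v>0$, because $\tfrac{d}{dv}f(v;v)$ and $B'(v)$ contain the endogenous ``second channel'' feedback of the conjecture on beliefs and price (Lemma \ref{beliefs}, Proposition \ref{joint}) that the fixed-conjecture argument does not control. I would handle this by showing the best-response map $r(\widehat{v})$ is a contraction: its slope is $\tfrac{B'(\widehat{v})+S\,\partial_{\widehat{v}} f}{(1+S)-S\,\partial_v f}$, whose denominator exceeds one, so it suffices to bound the numerator, and this is exactly where log-concavity of $g$—already invoked to rule out multiplicity in Corollary \ref{noup}—is used to sign $B'\in(0,1)$ (the Dye threshold comparative static of Lemma \ref{lm1}) and to control the conjecture effect $\partial_{\widehat{v}}f$. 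With the contraction in hand the fixed point is unique, $\partial\Xi/\partial v>0$ holds at it, and the bounds and monotonicity above complete the argument.
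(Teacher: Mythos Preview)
Your approach is essentially the same as the paper's: decompose the indifference condition into a ``benchmark'' piece and an ``early'' piece, evaluate at $v^{B}$ and $v^{E}$ to pin down the bounds by the intermediate value theorem, and apply the implicit function theorem for the $\delta$-comparative static. The paper writes $\Delta\Pi(\widehat{v})=T^{B}(\widehat{v})+(\delta+\delta^{2}+\delta^{3})\,T^{E}(\widehat{v})$ with $T^{B}(\widehat{v})\equiv B(\widehat{v})-\widehat{v}$ and $T^{E}(\widehat{v})\equiv f(\widehat{v};\widehat{v})-\widehat{v}$; your $\Xi$ is just $-\Delta\Pi/(1+S)$.

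Where you diverge is in treating uniqueness as the ``hard part'' and setting up a best-response contraction. The paper does something simpler: it asserts directly that $T^{B}$ and $T^{E}$ are each monotone in $\widehat{v}$. This is not a new claim but a restatement of what has already been proven---$T^{B}$ monotone is exactly the slope-less-than-one property of the Dye nondisclosure price (Lemma~\ref{lemm:BB}), and $T^{E}$ monotone is precisely what underlies uniqueness of $v^{E}$ in Proposition~\ref{pr2} (and the supporting Appendix~B machinery). Since a positive-weighted sum of monotone functions is monotone, $\Delta\Pi$ crosses zero once, and $\partial\Xi/\partial v>0$ comes for free. Your contraction route would work, but it re-derives a bound you already have access to; you can drop the entire last paragraph and replace it with a one-line appeal to the monotonicity of $v-B(v)$ and $v-f(v;v)$.
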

\noindent Frequent adjustment of stock prices encourages corporate disclosure, $\widetilde{v}^E < v^E$, because the manager faces a lower short-term price at date-2, relative to the expected market price after the signal is realized---this makes pretending to be uninformed less beneficial (in present value terms) and encourages disclosure. Nevertheless, external news continues to suppress corporate disclosure, $\widetilde{v}^E > v^B$. Since the disclosure threshold is increasing in $\delta$, more impatient managers are more likely to disclose their firm value.\footnote{Extremely impatient managers ($\delta=0$) disclose only  values above $v^B$: they only care about the current period and not about the external news that will reveal in the next period.  Extremely patient managers ($\delta=1$), however, disclose only values above $v^E$.}

It remains to consider the price trend over time. The date-2 disclosure price  $P_2^E (v)=v$ exceeds  the initial price $P_1^E=\mathbb{E}[v]=\mu$ if $v>\mu$. After disclosure of  $v\in [\widetilde{v}^E, \mu)$, the price can decrease. 
This is because the manager expects that the present value of the future price post-silence to be lower. If the manager withholds her information,  the date-2  market price always decreases from its initial level, $P_2^E(\emptyset) <P_1^E$.\footnote{To see why, note that $P_2^E(\emptyset)= P(\emptyset \lvert \widetilde{v}^E)<P(\emptyset \lvert v^B)=v^B<\mu=P_1^E$.} The realization of the external signal affects the date-3 market price in a nontrivial way.
\begin{corollary}
\label{comparisonprice1}
The price following disclosure at date 2, $P_2^E (v)$, is lower than the initial price, $P_1^E$, if the observed and disclosed by the manager value is $v \in [\widetilde{v}^E, \mu)$. 
 There exists $s^{\dag \dag} \in (v_{min},\widetilde{v}^E)$ such that:
 $P_2^E(\emptyset) \geq P_3^E(s, \emptyset)$  if $s\in [v_{min}, s^{\dag \dag}]$ or $s\in [\widetilde{v}^E, \mu]$. However, $P_2^E(\emptyset) \leq P_3^E(s, \emptyset)$  if $s\in [s^{\dag \dag},\widetilde{v}^E]$ or $s\in [\mu, v_{max}]$.
\end{corollary}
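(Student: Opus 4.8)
The plan is to work directly with the explicit price formulas already established and compare them piecewise. The first claim—that $P_2^E(v)=v < P_1^E=\mu$ precisely when $v\in[\widetilde v^E,\mu)$—is immediate from $P_2^E(v)=v$ and the fact that $\widetilde v^E<\mu$ (Corollary \ref{freqearly}), so I would dispose of it in one line. The substance is in the comparison of the date-2 nondisclosure price $P_2^E(\emptyset)$ against the date-3 nondisclosure price $P_3^E(s,\emptyset)=P(s,\emptyset)$ given in Lemma \ref{price}, evaluated at the equilibrium threshold $\widehat v=\widetilde v^E$. Since $P_2^E(\emptyset)$ is the constant $P(\emptyset\mid\widetilde v^E)$ (i.e., the signal-free Dye-type price at threshold $\widetilde v^E$), and $P(s,\emptyset)$ is an explicit rational function of $s$ with a downward jump at $s=\widetilde v^E$, this is a comparison of a constant against a known curve.

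First I would treat the two regions $s\le\widetilde v^E$ and $s>\widetilde v^E$ separately, because the indicator $\mathbbm 1_{s>\widehat v}$ in Lemma \ref{price} makes $P(s,\emptyset)$ a different (though still increasing) rational function on each side. On each region $P(s,\emptyset)$ is strictly increasing in $s$ (Proposition \ref{sensitivity}), so the sign of $P_2^E(\emptyset)-P(s,\emptyset)$ changes at most once per region, and each crossing point is found by setting $P(s,\emptyset)=P_2^E(\emptyset)$ and solving the resulting linear equation in $s$. I would compute these two candidate crossing points explicitly and identify them: the lower region ($s\le\widetilde v^E$) yields the interior point $s^{\dagger\dagger}\in(v_{min},\widetilde v^E)$, and I would verify $s^{\dagger\dagger}>v_{min}$ by checking the sign of $P_2^E(\emptyset)-P(v_{min},\emptyset)$ and $s^{\dagger\dagger}<\widetilde v^E$ via the left-limit value at the threshold. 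The upper region ($s>\widetilde v^E$) yields the crossing at $s=\mu$, which I would confirm by direct substitution—this is the natural guess since $P(\mu,\emptyset)$ should reconcile with the prior-based benchmark price. Assembling the signs then gives exactly the stated alternation: $P_2^E(\emptyset)\ge P(s,\emptyset)$ on $[v_{min},s^{\dagger\dagger}]\cup[\widetilde v^E,\mu]$ and the reverse on $[s^{\dagger\dagger},\widetilde v^E]\cup[\mu,v_{max}]$.

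The key structural facts I would lean on are: (i) the downward jump at $s=\widetilde v^E$ guaranteed by Corollary \ref{noup} and the equilibrium property $\widetilde v^E>v^B$, which ensures $P(s,\emptyset)$ overshoots $P_2^E(\emptyset)$ just below the threshold and then drops below it just above (producing the switch at $\widetilde v^E$ itself); and (ii) monotonicity on each side, which guarantees single crossings. The main obstacle I anticipate is the bookkeeping at the threshold: because $P_2^E(\emptyset)=P(\emptyset\mid\widetilde v^E)$ is itself defined through the equilibrium indifference condition, I must be careful to evaluate $P(s,\emptyset)$ with $\widehat v$ held fixed at $\widetilde v^E$ (the date-5 interpretation stressed in Section \ref{sect:updating}), and to pin down the left- versus right-limit behavior of $P(s,\emptyset)$ at $s=\widetilde v^E$ consistently with the tie-breaking convention. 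Establishing cleanly that the upper crossing lands exactly at $s=\mu$ rather than at some nearby point is the one place where I would expect to need the specific form of $\gamma(s,\widehat v)$ and the identity $(1-p)\mu+pG(\widehat v)\mathbb E[v\mid v\le\widehat v]$ rather than a purely qualitative argument.
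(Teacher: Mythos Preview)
Your proposal is correct and takes essentially the same approach as the paper: both split the comparison into the two signal regions $s\le\widetilde v^E$ and $s>\widetilde v^E$, exploit monotonicity of $P(s,\emptyset)$ on each piece so that $P_2^E(\emptyset)-P(s,\emptyset)$ changes sign at most once per region, and then pin down the crossings by checking endpoint signs (yielding $s^{\dagger\dagger}$ in the lower region and $s=\mu$ in the upper via direct substitution). The only minor difference is in verifying that the left limit $P(\widetilde v^{E-},\emptyset)$ exceeds $P_2^E(\emptyset)$: the paper extracts this from the equilibrium indifference condition $\Delta\Pi(\widetilde v^E)=0$ and its $T^B$/$T^E$ decomposition in the proof of Corollary \ref{freqearly}, whereas your route via $\widetilde v^E>v^B$ reaches the same conclusion more directly since $P(\widetilde v^{E-},\emptyset)$ is a convex combination of $\widetilde v^E$ and $P_2^E(\emptyset)$.
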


\ni  Very favorable external news increases the market price, relative to its date-2 level, and extremely unfavorable news decreases the price. However, our result implies that more neutral news ($s$ in the intermediate region) has an ambiguous effect. While the date-2 nondisclosure price $P_2^E(\emptyset) = P(\emptyset)$ does not depend on the signal, the date-3 price, $P_3^E(s, \emptyset)=P(s,\emptyset)$, is non-monotonic in it (Proposition \ref{pricenon}). Because news that exceeds the disclosure threshold is perceived to be less likely veracious (Proposition \ref{joint}), the market price around $s=\widetilde{v}^E$ drops in a sense that $\lim_{s\to \widetilde{v}^{\EA -}}  P_3^E(s, \emptyset)>\lim_{s\to \widetilde{v}^{\EA +}}  P_3^E(s, \emptyset)$. A signal $s\to \widetilde{v}^{\EA+}$ leads to price decrease from its date-2 level. Conversely, a signal $s\to \widetilde{v}^{\EA-}$ leads to price increase.

\section{Empirical Implications}

Our work has several practical and  empirically testable implications. First, consistent with empirical evidence in Zhang (2006a) and Zhang (2006b), we demonstrate that investors  underreact to external news with uncertain veracity. More importantly, this underreaction exhibits asymmetry: investors are more sensitive to unfavorable news due to their perception of it as more likely to be veracious. The existence of asymmetric price response to external news is empirically documented across diverse settings, ranging from announcements about US trade balances (Aggarwal and Schirm 1998) and macroeconomic news (Goldberg and Grisse 2013) to Federal Open Market Committee decisions (Blot, Hubert, and Labondance 2020), peers' clinical trial outcomes (Capkun, Lou, Otto, and Wang 2022), and Initial Jobless Claims (Xu and You 2023). Although these empirical studies do not address whether the observed phenomenon emerges due to the economic forces in our model---specifically, the nuanced updating of investor beliefs---future research could incorporate  observable proxies for uncertainty in veracity and beliefs to examine this relationship. Additionally, a burgeoning strand of literature explores the asset pricing implications of rumors about companies (e.g., Alperovich, Cumming, Czellar and Groh 2021; Cai, Quan and Zhu 2023). To our knowledge, none of these studies has assessed how the studied effects  vary in the content and positivity of the rumors. We consider this aspect a prospective avenue for future empirical research.

%

Second, we predict that  prices of  nondisclosing firms can be non-monotonic in the sense that an increase in the positivity of external news  may paradoxically lead to a lower market valuation. To the best of our knowledge, this prediction has not yet been empirically documented but could be tested, after controlling for management nondisclosure and firm-related observable characteristics.

Third, we predict that the presence of external news with uncertain veracity reduces the probability of disclosure by firms prior to entering a so-called ``quiet period" in the lead-up to an Initial Public Offering (IPO) or a business quarter end.  We expect this crowding out effect to be stronger whenever the external news is ex ante more likely veracious. Future empirical research may account for the prior probability of news to be accurate (e.g., news originating from social media is less likely veracious than a macroeconomic announcement; prior evidence of spreading false information by a specific mainstream media channel might be indicative as well) and estimate the magnitude of the identified crowding out effect.

Fourth, our predictions about the disclosure behavior of firms that \emph{respond to already released} 
external news differ substantially: Unfavorable (favorable) external news increases (decreases) the probability of corporate disclosure---these effects are exacerbated when the  likelihood of external news to be veracious is higher. Our results are consistent with  Capkun, Lou, Otto, and Wang (2022) who document that the likelihood of voluntary disclosure depends on the positivity of the news about medical trial outcomes conveyed by peers. Moreover, we find that when the external news is sufficiently unfavorable managers may disclose information that is worse than the one revealed in the external news. 
This result is consistent with findings in Sletten (2012) who notes that disclosing bad news after peers' restatement lacks an underlying theory. We hope our model fills this gap.

Fifth, when given the option to choose the timing of their disclosure, firms opt to wait for the release of external news, unless delaying disclosure is  costlier than advancing it (e.g., because cancelling a previously scheduled conference call creates scheduling issues, there is a significant disruption of internal planning and investor relation activities, or the reputation loss for withholding information early on is very high). In that latter case, we predict that good private information is released before the external news and more neutral after that. We also expect that higher likelihood for veracity exacerbates the incentives to delay disclosures. Our result  aligns with recent empirical evidence about clustering of    managerial guidance with good   information before the release of restatements by peers and guidance with bad information after the release  (Sletten 2012).\footnote{For a related result with own-firm information event (e.g., earnings release) rather than external news, see Roychowdhury and Sletten (2012).} 
%
%
Lastly, consistent with Sletten (2012) we also find that frequently adjusted market prices may decrease after corporate disclosures. 

\section{Conclusion}
\label{conclusion}

We examine how uncertainty surrounding the veracity of external news affects investor beliefs, managers' incentives to reveal private information, and market prices. We find that investors perceive favorable external news as less likely veracious, which reinforces their beliefs that managers are hiding low firm values. As a result, there is a possibility of price non-monotonicity where relatively more favorable external news could paradoxically lead to a lower market price. We also find that the timing, likelihood for veracity and content of external news can either encourage or discourage corporate disclosures.

Our analysis has practical implications for investors and firms  deciding whether to disclose private information in an environment with abundant external news with uncertain veracity like traditional media broadcasts, social media posts and tweets etc. The results may provide insights into the findings of recent empirical work in disclosure (e.g., Sletten 2012; Capkun, Lou, Otto, and Wan 2022). Furthermore, our research suggests that market prices may have non-monotonic behavior and non-constant sensitivity to external news---a finding that may be related to empirical asset pricing studies (e.g., Aggarwal and Schirm 1998; Goldberg and Grisse 2013; Blot, Hubert and Labondance 2020; Xu and You 2023). Future empirical research may account for the uncertainty in the veracity of external news when investigating firm disclosure behavior and price reactions.

\onehalfspacing
\newpage
\appendix

\renewcommand{\thelemma}{\thesection.\arabic{lemma}}
\setcounter{lemma}{0}

\renewcommand{\theprop}{\thesection.\arabic{prop}}
\setcounter{prop}{0}


\section{Proofs}

\ni \textbf{Proof of Lemma \ref{lm1}:} Follows directly from the proof in Dye (1985) and Jung and Kwon (1998) and is omitted.

\bs  \ni \textbf{Proof of Lemma \ref{beliefs}:} We consider the joint belief $\Pr(\phi, \kappa \lvert  s, \emptyset)$ separately  for the cases $s<\widehat{v}$ and $s>\widehat{v}$.

\bigskip 
\noindent \underline{Case 1:}
Let the observed signal be $s=s'\in [v_{min},\widehat{v})$. Because $s$ is a continuous variable, we use the limit of measurable intervals  and express 
$\Pr(\phi, \kappa\lvert  s=s', \emptyset)= \lim_{\Delta\to 0} \frac{ \Pr(s\in[s',s'+\Delta],\emptyset,\phi, \kappa)}{\Pr(s\in[s',s'+\Delta], \emptyset)}$ for $\phi \in \{N, V\}$, $\kappa \in \{I,U\}$ and $\Delta>0$ sufficiently small so that $s'+\Delta\in [v_{min},\widehat{v})$. Let us define $A' \equiv [s', s'+\Delta]$ and focus on the numerator. 
We note that:
\bean
\Pr(s\in A',\emptyset,V, I)&=& p \cdot q  \cdot 1
\cdot \Pr(s \in A')  ;
\\
\Pr(s\in A',\emptyset,V, U) &=&(1-p) \cdot 1 \cdot q \cdot \Pr(s \in A')  ;
\\
\Pr(s\in A',\emptyset,N, I) &=& p \cdot G(\widehat{v})  \cdot  (1-q) \cdot  \Pr(s \in A') ;
\\
\Pr(s\in A',\emptyset,N,U) &=&
(1-p) \cdot (1-q) \cdot \Pr(s \in A')    .
\eean
For the denominator in $\Pr(\phi, \kappa\lvert  s=s', \emptyset)$, we have
$
\Pr(s\in A', \emptyset)
= \Pr(s\in A', \emptyset,V, I)
+\Pr(s\in A',\emptyset,V, U) 
+\Pr(s\in A',\emptyset,N, I)+\Pr(s\in A',\emptyset,N, U)=\Big(q+(1-q)\big(1-p+pG(\widehat{v})\big)\Big) \cdot \Pr(s \in A').
$
Now we can express:
\bean
\Pr(U,V \lvert s=s',\emptyset)&=&\lim_{\Delta\to 0} \frac{\Pr(s\in A',\emptyset,V, U)}{\Pr(s\in A',\emptyset)}
\\
&=& \frac{(1-p)q}{q+(1-q)\big(1-p+pG(\widehat{v})\big)} \cdot \lim_{\Delta \to 0} \frac{\hcancel{\Pr(s \in A')}}{\hcancel{ \Pr(s \in A')}};
\\
\Pr(U,N\lvert s=s',\emptyset)&=&\lim_{\Delta\to 0} \frac{\Pr(s\in A',\emptyset,N, U)}{\Pr(s\in A',\emptyset)}
\\
&=& \frac{(1-p)(1-q)}{q+(1-q)\big(1-p+pG(\widehat{v})\big)} \cdot \lim_{\Delta \to 0} \frac{\hcancel{\Pr(s \in A')}}{\hcancel{ \Pr(s \in A')}};
\\
\Pr(I, V \lvert s=s', \emptyset)&=& \lim_{\Delta\to 0}\frac{\Pr(s\in A', \emptyset,V, I)}{\Pr(s\in A',\emptyset)}
\\
&=& \frac{p q }{q+(1-q)\big(1-p+pG(\widehat{v})\big)} \cdot \lim_{\Delta \to 0} \frac{\hcancel{\Pr(s \in A')}}{\hcancel{ \Pr(s \in A')}};
\\
\Pr(I,N\lvert s=s',\emptyset)&=&\lim_{\Delta\to 0} \frac{\Pr(s\in A',\emptyset,N, I)}{\Pr(s\in A',\emptyset)}
\\
&=& \frac{p  G(\widehat{v}) (1- q)}{q+(1-q)\big(1-p+pG(\widehat{v})\big)} \cdot \lim_{\Delta \to 0} \frac{\hcancel{\Pr(s \in A')}}{\hcancel{ \Pr(s \in A')}} .
\eean

\smallskip 
\noindent \underline{Case 2:} 
Let  the observed signal be $s=s''\in(\widehat{v},v_{max}]$.  Defining $A''\equiv [s'',s''+\Delta]$ for $\Delta>0$ sufficiently small, so that $s''+\Delta\in(\widehat{v},v_{max}]$, and following similar steps, 
\bean
\Pr(s\in A'',\emptyset,V, I)&=&0 \cdot  p \cdot G(\widehat{v}) \cdot \Pr(s\in A'');
\\
\Pr(s\in A'',\emptyset,V, U)&=&(1-p) \cdot q\cdot \Pr(s\in A'')  ;
\\
\Pr(s\in A'',\emptyset,N,I)&=& p \cdot G(\widehat{v}) \cdot (1-q)  \cdot \Pr(s\in A'');
\\
\Pr(s\in A'',\emptyset,N,U)&=& (1-p) \cdot (1-q) \cdot  \Pr(s\in A'') .
\eean
Summing up, we have
$\Pr(s\in A'',\emptyset)=\Big(q(1-p)+ (1-q)\big((1-p)+p\cdot G(\widehat{v})\big)\Big) \cdot \Pr(s\in A'')$. 
We can express
\bean
\Pr(U,V \lvert s=s'',\emptyset)&=&\frac{(1-p)q}{q(1-p)+ (1-q)\big((1-p)+p\cdot G(\widehat{v})\big)} \cdot \lim_{\Delta \to 0} \frac{\hcancel{\Pr(s \in A'')}}{\hcancel{ \Pr(s \in A'')}};
\\
\Pr(U,N\lvert s=s'',\emptyset)&=&\frac{(1-p)(1-q)}{q(1-p)+ (1-q)\big((1-p)+p\cdot G(\widehat{v})\big)} \cdot \lim_{\Delta \to 0} \frac{\hcancel{\Pr(s \in A'')}}{\hcancel{ \Pr(s \in A'')}};
\\
\Pr(I, V \lvert s=s'', \emptyset)&=&0;
\\
\Pr(I,N\lvert s=s'',\emptyset)&=&\frac{p G(\widehat{v})(1-q)}{q(1-p)+ (1-q)\big((1-p)+p\cdot G(\widehat{v})\big)} \cdot \lim_{\Delta \to 0} \frac{\hcancel{\Pr(s \in A'')}}{\hcancel{ \Pr(s \in A'')}}. 
\eean
Summarizing yields our result.

We briefly mention that the same argument applies in cases where the first-period disclosure strategy is not characterized by a threshold, simply by modifying the distribution $G$ appropriately.

\bs  \ni \textbf{Proof of Proposition \ref{joint}:} Follows from Lemma \ref{joint}.

\bs \ni \textbf{Proof of Lemma \ref{price}:} 
Follows directly from equation \eqref{PNDgeneral5} and Lemma \ref{beliefs}.

\bs \ni \textbf{Proof of Proposition \ref{sensitivity}:} Follows from Lemma \ref{price}.

\bs \ni \textbf{Proof of Proposition \ref{pricenon}:} Follows from Lemma \ref{price}.


\bs  \ni \textbf{Proof of Proposition \ref{pr2}:} 
The proof that $v^E \geq v^B$ follows from Lemma \ref{lemm:nojumpup} (included in the technical appendix B) and Proposition \ref{pricenon}. To see this note that if $P(s,\emptyset)$ has a discontinuous jump downwards at some $s=\widehat{v}>v^B$, then $\mathbb{E}[P(s,\emptyset)\lvert v]$ has a discontinuous jump downwards at $v=\widehat{v}>v^B$.  We now argue this threshold must be strictly above $v^B$. 

Slightly abusing notation, let $P^{q}(s, v=\widehat{v})$ denote the price when the market conjectures threshold $\widehat{v}$,  given veracity $q$.  We first claim that, at $\widehat{v}=v^{B}$, this function is strictly convex in $q$,  at any $s \neq v^B$.  We also claim that $\frac{d}{dq} \int_{v_{min}}^{v_{max}} P^{q}(s, v=\widehat{v}) g(s) ds$ is equal to 0 at $q=0$.

Now,  note that $P^{q}(v^B, \widehat{v}=v^B )=v^B$.  Also note that $\int_{v_{min}}^{v_{max}} P^{0}(s, \widehat{v}=v^B )g(s)ds=v^B$,  and that putting the two claims above together,  we also have that $\int_{v_{min}}^{v_{max}} P^{q}(s, \widehat{v}=v^B )g(s)ds$ is strictly increasing in $q$,   for all $q > 0$; indeed, for almost every $s$,   $P^{q}(s, \widehat{v}=v^B )$ is strictly convex,  meaning that the expectation over $s$ is strictly convex as well.  Since the derivative is 0 at $q=0$,  we have this function is strictly increasing.  

So consider the manager's incentives.  If the market conjectured a threshold $\widehat{v} =v^B$,  the manager at the threshold would obtain $v^B$ from disclosing,  but: 
\begin{equation*} q \cdot P^{q}(v^B, \widehat{v}=v^B )+(1-q) \int_{v_{min}}^{v_{max}} P^{q}(s, \widehat{v}=v^B )g(s)ds \end{equation*} 
from not disclosing.  However, we have just seen that this is a convex combination of $v^B$ and a term which is strictly larger than $v^B$.  Thus, the manager strictly gains from not disclosing,  showing that we must have $\widehat{v} > v^B$.

To compare $v^E$ and $\mu$, we use that $\lim\limits_{\widehat{v}\to \mu}\mathbb{E}_{s}[P(s,v=\widehat{v})]=\mu$ while $\lim\limits_{\widehat{v}\to\mu}\mathbb{E}_{s}[P(s,\emptyset) \lvert v]<\mu$; to see this point, we analyze the price as a function of the signal:

\begin{equation*} 
P(s, \emptyset)=\frac{q \cdot s(1-p\mathbbm{1}_{s > \widehat{v}})+ (1-q) ((1-p)\mu + p \int_{v_{min}}^{\widehat{v}} v g(v) dv)}{q(1-p\mathbbm{1}_{s > \widehat{v}})+(1-q)(1-p+pG(\widehat{v}))}. 
\end{equation*}

\noindent Note that, if the signal is veracious and the manager's value is $\widehat{v}$, then $(1-p)\mu + p \int_{v_{min}}^{\widehat{v}} v g(v) dv) < \widehat{v}$ whenever $\widehat{v} > v^B$. Thus, at $\widehat{v}=\mu$, the price is less than $\mu$ if the signal is veracious. On the other hand, when $s \sim G$: if $s > \widehat{v}$, then $(1-p)\mu + p \int_{v_{min}}^{\widehat{v}} vg(v)dv < \widehat{v} < s$, meaning that the price would be even larger if the market conjectured such signals could have come from informed managers; that is, 

\begin{equation*} 
P(s, \emptyset) \leq \frac{q\cdot s+ (1-q) ((1-p)\mu + p \int_{v_{min}}^{\widehat{v}} v g(v) dv)}{q(1-p)+(1-q)(1-p+pG(\widehat{v}))}
\end{equation*} 
Now, when we take the expectation of the right hand side of this expression, we have: 

\begin{equation*} 
\frac{q\cdot \mu+ (1-q) ((1-p)\mu + p \int_{v_{min}}^{\widehat{v}} v g(v) dv)}{q(1-p)+(1-q)(1-p+pG(\widehat{v}))} < \mu, \forall \widehat{v}.
\end{equation*}

\noindent Thus, we see that if $\widehat{v} = \mu$, the price is (bounded above by) a convex combination of terms that are equal to $\mu$ and those that are a strictly less than $\mu$, as claimed.

We now return to the two claims above regarding $P^{q}(s, v= \widehat{v})$. The claim on strict convexity involves some tedious algebra, which reveals that $\frac{\partial^{2}}{\partial q^{2}} P^{q}(s, \widehat{v}) > 0$ if and only if: 

\begin{equation*} 
(\mathbbm{1}_{s > \widehat{v}}-1 + G(\widehat{v}))((1-p)(s-\mu) +p \int_{v_{min}}^{\widehat{v}}(s-v)g(v)dv) >0.
\end{equation*}

\noindent We note that: 

\begin{equation*}
0=(1-p)(s-\mu) +p \int_{v_{min}}^{\widehat{v}}(s-v)g(v)dv)  \Leftrightarrow s= \frac{\mu(1-p)+ p \int_{v_{min}}^{\widehat{v}}vg(v)dv}{1-p+p G(\widehat{v})}.
\end{equation*}

\noindent If $\widehat{v}=v^{B}$, then this holds if and only if $s=v^{B}$. So, this term is negative if $s < v^B$ and positive if $s > v^B$. Similarly, $\mathbbm{1}_{s > v^B}-1 + G(v^B)$ is positive for $s > v^B$ and negative for $s < v^B$. Putting this together,  strict convexity holds. 

As for the second claim, we note that we can bring the derivative inside the expectation: 

\begin{multline*}  
\frac{\partial}{\partial q} \biggl\lvert_{q=0}\int_{v_{min}}^{v_{max}}P^{q}(s, v=\widehat{v})g(s)ds \propto  \int_{v_{min}}^{v_{max}}  \biggl((1-p+pG(\widehat{v}))(s(1-p\mathbbm{1}_{s > \widehat{v}})-(1-p)\mu-p \int_{v_{min}}^{\widehat{v}}vg(v)dv)\\ -((1-p)\mu+p \int_{v_{min}}^{\widehat{v}} vg(v)dv)(1-p\mathbbm{1}_{s > \widehat{v}}-((1-p)+pG(\widehat{v}))) \biggr)g(s)ds.
\end{multline*}

\noindent This can be obtained by noting that the denominator of $\frac{\partial}{\partial q} \biggl\lvert_{q=0} P^{q}(s, v=\widehat{v})$ is $(1-p+pG(\widehat{v}))^{2}$ (and hence independent of $s$, meaning it can be dropped without changing the sign of the derivative); the value of the denominator of $P^{q}(s, v= \widehat{v})$ at $q=0$ is $1-p+pG(\widehat{v})$; the derivative of the denominator with respect to $q$ evaluated at $q=0$ is $1-p \mathbbm{1}_{s> \widehat{v}}-(1-p+pG(\widehat{v}))$; the value of the numerator at $q=0$ is $(1-p) \mu+ p \int_{v_{min}}^{\widehat{v}}vg(v)dv$; the derivative at $q=0$ is $s(1-p\mathbbm{1}_{s > \widehat{v}})-(1-p)\mu-p \int_{v_{min}}^{\widehat{v}}vg(v)dv$.  That this expression is equal to 0 follows from the observation that: 

\begin{multline*}
\int_{v_{min}}^{v_{max}} \left(s(1- p \mathbbm{1}_{s > \widehat{v}})-(1-p)\mu- p \int_{v_{min}}^{\widehat{v}}vg(v)dv \right)g(s)ds=0\\=\int_{v_{min}}^{v_{max}} \left((1- p \mathbbm{1}_{s > \widehat{v}}-((1-p)+ p G(\widehat{v}) )\right)g(s)ds, 
\end{multline*}

\noindent completing the proof of the second claim; to see these equalities, note that $1-p\mathbbm{1}_{s>\widehat{v}}=1-p+p\mathbbm{1}_{s\leq \widehat{v}}$.

\bs  \ni \textbf{Proof of Proposition \ref{lm4}:} 
We consider the comparative static, given the threshold equilibrium. Note that in the late case we have the following equation (which also uses the fact that $s$ provides no information about $v$ if non-veracious and is equal to $v$ if veracious): 

\begin{equation*} 
v^{L}(s)  = s \Pr(V \lvert s, \emptyset)+ \mathbb{E}[v \lvert N, \emptyset] \Pr (N \lvert s, \emptyset)
\end{equation*}

\noindent Suppose that $s > v^B$. Then by definition of $v^B$ and Lemma \ref{lemm:thresholdcompstat} (included in the technical appendix B), values that equal to veracious signals are disclosed, i.e., $v=s > v^{L}(s)$. Thus, it must be that $s > \mathbb{E}[v \lvert N, \emptyset]$, since otherwise a manager with value equal to $s$ would have a profitable deviation to remain silent. Since increasing $q$ increases the weight on $s$ and decreases the weight on $\mathbb{E}[v \lvert N, \emptyset]$, we have that the right hand side increases, meaning that $v^{L}(s)$ increases as well. 

On the other hand, if $s < v^B$, the opposite conclusions hold, implying that as the weight on $s$ increases, the threshold decreases. 

Lemma \ref{lemm:thresholdcompstat} and \ref{lemm:uniquethresh} (included in the technical appendix B) imply that $v^L(s)$ is increasing at a rate less than 1, since $\widehat{v}^{L}_{0}(s)$ and $\widehat{v}^{L}_{1}(s)$ both are increasing at a rate less than 1, and $v^L(s)$ is equal to the latter for $s > v^B$ and the former for $s < v^B$. So when $s < v^B$ it holds that $v^L(s)  > s$; however, since the $v^L(s)$ is decreasing in $s$, we also have $v^L(s) < v^B$. The other cases are analogous. 

\bs  \ni \textbf{Proof of Proposition \ref{ssss}:} 
Note that Lemma \ref{lemm:thresholdcompstat} (included in the technical appendix B) shows that  $\widehat{v}^{L}_{1}(s)-\widehat{v}^{L}_{0}(s)$ is decreasing in $s$, which immediately implies the second part of the claim.  
Continuity follows from Lemma \ref{lemm:condthreshold} (included in the technical appendix B), together with the observation that since $v^L(s)=\min\{\widehat{v}^{L}_{1}(s), \widehat{v}^{L}_{0}(s)\}$, and the minimum of two continuous functions is itself continuous.

\bs \ni \textbf{Proof of Proposition \ref{thm:overallthreshold}:}
Note: given any arbitrary $v^{E}$, Lemma \ref{lemm:condthreshold} (included in the technical appendix B) shows the implicit conditions for the thresholds imply that there exists an increasing and continuous function $\widehat{v}^{L}(s)$ such that the manager discloses when $v>\widehat{v}^{L}(s)$. The proof shows that this holds for all $s \neq s^{B}$, and so overall continuity follows from the observation that $\widehat{v}_{1}^{L}(s^{B})= \widehat{v}_{0}^{L}(s^{B})$, given in Lemma \ref{lemm:dyesignal} (included in the technical appendix B). Thus, the late disclosure threshold, defined by the implicit condition, is continuous conditional on $v^{E}$. Existence then follows from noting that a manager at $v=v_{min}$ would never prefer to disclose (and thus would never disclose early), together with continuity of the expected payoffs in $v^{E}$. (This leaves open the possibility that no manager would ever prefer to disclose early, in which case $v^{E}=v_{max}$.)

Uniqueness follows from Lemmas \ref{lemm:uniquethresh} and \ref{lemm:finalpart} (included in the technical appendix B); note in particular that log-concavity of $G$ implies log-concavity of distribution of the manager's value conditional on $v < v^{E}$, delivering uniqueness of the late disclosure thresholds; and furthermore, that given continuity of $v^{L}(s)$, we have the increasing difference property holds and delivers a unique value for $v^{E}$.


\bs \ni \textbf{Proof of  Proposition \ref{gennonmon}:} Let $P_{\tau}(s, \emptyset \mid \widehat{v})$ denote the price function when the market conjectures disclosure threshold $\widehat{v}$ and when the external signal with $\phi=\phi_{1}$ has precision $\tau$.  Write $P_{\infty}(s, \emptyset \mid \widehat{v})$ for the price in the model when $\phi_1\equiv V$, i.e., the veracious signal is perfectly informative,  as in the base model in Section \ref{setting}.  Note that $P_{\tau}(s, \emptyset \mid \widehat{v})$,  as a function of $\tau$,  is continuous at infinity,  since the expectation of $v$ given $s$ and $\phi=V$ is continuous at infinity as a function of $\tau$, and so are the probabilities $\Pr(\phi, \kappa \mid s, \emptyset, \tau)$, with the denominators in all expressions bounded away from 0, since $q  <1$.  Furthermore, the limit is indeed equal to the price function when $\phi_{1} = V$,  since if $\tau \to \infty$ then the veracious signal has no variance and thus we must have $v=s$. On the other hand,  we also have,  by Proposition \ref{pricenon},  that the market price when $\tau\to \infty$  is non-monotone as a function of $s$. 

Pick $s_{1} < s_{2}$ such that $P_{\infty}(s_{1}, \emptyset \mid \widehat{v}) >  P_{\infty}(s_{2}, \emptyset \mid \widehat{v})$,  as  guaranteed by the non-monotonicity in Proposition \ref{pricenon}.  Let $\psi =  P_{\infty}(s_{1}, \emptyset \mid \widehat{v}) -  P_{\infty}(s_{2}, \emptyset \mid \widehat{v})$ and, for $k \in \{1,2\}$, let $\tau_{k}$ be such that 
$|P_{\infty}(s_{k}, \emptyset \mid \widehat{v})-P_{\tau}(s_{k}, \emptyset \mid \widehat{v}) | < \psi/2,$
whenever $\tau > \tau_{k}$,   as is guaranteed to exist by continuity at infinity of $P_{\tau}(s, \emptyset \mid \widehat{v})$.  Let $\widehat{\tau} = \max \{\tau_{1},  \tau_{2}\}$.  Then, because $P_{\tau}(s_{1}, \emptyset \mid \widehat{v}) > P_{\infty}(s_{1}, \emptyset \mid \widehat{v}) - \psi/2= P_{\infty}(s_{2}, \emptyset \mid \widehat{v}) + \psi/2 > P_{\tau}(s_{2}, \emptyset \mid \widehat{v})$, 
we have $P_{\tau}(s_{1}, \emptyset \mid \widehat{v}) >  P_{\tau}(s_{2}, \emptyset \mid \widehat{v})$ if $\tau > \widehat{\tau}$,  as desired.

\bs  \ni \textbf{Proof of Proposition \ref{mon}:} The result is standard and follows from Milgrom (1981), the proof is included for completeness.  By Milgrom (1981),  given a joint distribution $f(s \mid v)$,  the posterior expectation $E[v\lvert s]$ is increasing in $s$,  for any prior over $v$,  if and only if,  for $\tilde{v} > v$, it holds that 
$\frac{f(s \mid \tilde{v})}{f(s \mid v)}$ 
is increasing in $s$.  Recall that $s=v+\varepsilon$,  for $\varepsilon$ distributed according to a log-concave distribution with CDF $F$. We note that for any $s^{*}$ we have $\Pr (s \leq s^{*} \lvert v)=\Pr (v + \varepsilon \leq s^{*} \lvert v)= \Pr( \varepsilon  \leq  s^{*}-v \lvert v)=F \left( s^{*} -v  \right)$.  So, the density of $s$ given $v$ is $f \left( s -v \right)$. Taking $v_{1} > v_{2}$,  the derivative of $f (s-v_{1} )/f(s-v_{2})$ with respect to $s$ is positive if and only if:

\begin{equation*} 
f(s-v_{2})f'(s-v_{1})-f (s-v_{1} ) f'(s-v_{2}) > 0,
\end{equation*}
which holds if and only if
\begin{equation*} 
\frac{f'(s-v_{1})}{f(s-v_{1})}> \frac{f'(s-v_{2})}{f(s-v_{2})}.
\end{equation*}

\noindent Note that $\frac{d}{ds} \log f(s)= \frac{f'(s)}{f(s)}$,  which is decreasing in $s$ by log-concavity.  Increasing $v$ from $v_{2}$ to $v_{1}$ decreases $s-v$,   and therefore increases $\frac{f'(s-v)}{f(s-v)}$.  Hence, the desired inequality holds. 

\bs \ni \textbf{Proof of Corollary \ref{noup}:} Follows from the results in Section \ref{sect:early}, which shows that the early disclosure threshold can only be above $s^{B}$, together with Proposition \ref{pricenon}, which shows that this implies the price can only drop downward.

\bs \ni \textbf{Proof of Corollary \ref{cost}:}  
The proof follows similar steps to Verrecchia (1983) and is omitted.

\bs \ni \textbf{Proof of Corollary \ref{Emore costly}:}  The proof follows from the discussion in the text and is omitted.

\bs \ni \textbf{Proof of Corollary \ref{Lmore costly}:}  The proof follows from the discussion in the text and is omitted.

\bs  \ni \textbf{Proof of Corollary \ref{pr5}:} The comparison between $P^L_4(s,v)$ and $P^L_3(s)$ is straightforward. Here, we only compare $P^L_4(s,\emptyset)$ with $P^L_3(s)$. Recall that, $P^L_3(s)=q s+(1-q)\mu$ and  $P^L_4(s,\emptyset)=\widetilde{v}^L(s)$. Applying the Envelope Proposition, $\frac{d}{d s} \widetilde{v}^L(s)= \frac{1-p}{1-p+pG(\widetilde{v}^L(s))} \cdot q +\frac{pG(\widetilde{v}^L(s))}{1-p+pG(v^L(s))} \cdot\frac{q}{q+(1-q)G(\widetilde{v}^L(s))} >q=\frac{d}{d s} P^L_3(s)$.  

Let $s^o$ be the signal realization that satisfies $s^o=v^L(s=s^o)$. We note that $P^L_3(s=s^o)=q\cdot s^o+(1-q)\mu>s^o=v^L(s=s^o)=P^L_4(s=s^o,\emptyset)$, because $s^o<\mu$. Hence, we have  $P^L_3(s)>P^L_4(s,\emptyset)$ for any  $s \geq s^o$. It remains to show that this inequality holds for $s <s^o$. We note that 
\bean
\min P^L_4(s,\emptyset)&=&P^L_3(s=v_{min},\emptyset)
\\
&=&\frac{(1-p)(1-q)\mu+p G(\widetilde{v}^L(s=v_{min}))\left(1-\frac{q}{q+(1-q)G(\widetilde{v}^L(s=v_{min}))}\right)\mathbb{E}[v\lvert  v\leq \widetilde{v}^L(s=v_{min})]}{1-p+p G(\widetilde{v}^L(s=v_{min}))}
\\
&<&\frac{(1-p)(1-q)\mu+p G(\widetilde{v}^L(s=v_{min}))\left(1-q\right)\mathbb{E}[v\lvert  v\leq \widetilde{v}^L(s=v_{min})]}{1-p+p G(\widetilde{v}^L(s=v_{min}))}
\\
&<&(1-q)\mu
=P^L_3(s=v_{min})
=\min P^L_3(s).
\eean 
Therefore, $P^L_4(s,\emptyset)<P^L_3(s)$ for any $s$.

\bs \ni \textbf{Proof of Corollary \ref{freqearly}:} The equilibrium condition is $\Delta \Pi(\widehat{v}=\widetilde{v}^E) = \Pi(v)- \Pi(\emptyset)=0$. Using the notation in the proof of Proposition \ref{pr2}, we can simplify,
\bean
\Delta \Pi(\widehat{v})  &=& T^B(\widehat{v}) + (\delta + \delta^2 + \delta^3)T^E(\widehat{v}).
\eean
Both $T^B(\widehat{v})$ and $T^E(\widehat{v})$ are decreasing in $\widehat{v}$. Thus, $\Delta \Pi(\widehat{v})$ is also decreasing. Furthermore, 
\bean
\lim_{\widehat{v}\to v_{max}} \Delta \Pi(\widehat{v}) &=& \underbrace{ \lim_{\widehat{v}\to v_{max}} T^B(\widehat{v})}_{<0} + (\delta + \delta^2 + \delta^3)\underbrace{ \lim_{\widehat{v}\to v_{max}} T^E(\widehat{v})}_{<0} <0;
\\
\lim_{\widehat{v}\to v_{min}} \Delta \Pi(\widehat{v}) &=& \underbrace{ \lim_{\widehat{v}\to v_{min}} T^B(\widehat{v})}_{>0} + (\delta + \delta^2 + \delta^3)\underbrace{ \lim_{\widehat{v}\to v_{min}} T^E(\widehat{v})}_{>0} >0.
\eean
Therefore, there exists a unique threshold $\widetilde{v}^E \in (v_{min},v_{max})$ such that the manager discloses if and only if $v > \widetilde{v}^E$. To see that $\widetilde{v}^E \in (v^B, v^E)$ note that
\bean
\lim_{\widehat{v}\to v^B} \Delta \Pi(\widehat{v}) &=& \underbrace{ \lim_{\widehat{v}\to v^B} T^B(\widehat{v})}_{=0} + (\delta + \delta^2 + \delta^3)\underbrace{ \lim_{\widehat{v}\to v^B} T^E(\widehat{v})}_{>0} >0;
\\
\lim_{\widehat{v}\to v^E} \Delta \Pi(\widehat{v}) &=& \underbrace{ \lim_{\widehat{v}\to v^E} T^B(\widehat{v})}_{<0} + (\delta + \delta^2 + \delta^3)\underbrace{ \lim_{\widehat{v}\to v^E} T^E(\widehat{v})}_{=0} <0.
\eean
Lastly, using the Implicit Function Theorem, 
\bean
\frac{\partial}{\partial \delta} v^E \propto \frac{\partial}{\partial \delta} \Delta \Pi(\widetilde{v}^E) = (1+2\delta+3 \delta^2) \underbrace{T^E(\widehat{v}=\widetilde{v}^E)}_{>0} >0. 
\eean

\bs \ni \textbf{Proof of Corollary \ref{comparisonprice1}:}
The first part follows from the discussion in the text and is omitted.
For the second part, we know that
$
P_2^E(\emptyset \lvert \widetilde{v}^E) - P_3^E(s, \emptyset \lvert \widetilde{v}^E ) = P(\emptyset \lvert \widetilde{v}^E) - P(s,\emptyset \lvert \widetilde{v}^E),
$
where
	\bean
	P(\emptyset \lvert \widetilde{v}^E) &=&  \frac{1-p}{1-p + p \cdot G(\widetilde{v}^E)} \cdot \mu + \frac{p\cdot G(\widetilde{v}^E)}{1-p + p\cdot G(\widetilde{v}^E)}  \cdot  \mathbb{E}[v\lvert v \leq \widetilde{v}^E ] 
	\\
P(s,\emptyset \lvert \widetilde{v}^E, s\leq \widetilde{v}^E) &=&  \frac{1-p}{1-p + p \cdot G(\widetilde{v}^E)}[ (1-q  )\cdot \mu 
+  q \cdot s]
\\
&&+ \frac{p\cdot G(\widetilde{v}^E)}{1-p + p\cdot G(\widetilde{v}^E)}
 [ (1- q\cdot \pi(\widetilde{v}^E)) \cdot  \mathbb{E}[v\lvert v \leq \widetilde{v}^E] +q \cdot \pi(\widetilde{v}^E) \cdot s ] 
\\
P(s,\emptyset \lvert \widetilde{v}^E, s > \widetilde{v}^E)
	&=& \frac{1-p}{1-p + p \cdot G(\widetilde{v}^E)}[ (1-q  )\cdot \mu 
+  q \cdot s]
\\
&&+ \frac{p\cdot G(\widetilde{v}^E)}{1-p + p\cdot G(\widetilde{v}^E)}  \cdot  \mathbb{E}[v\lvert v \leq \widetilde{v}^E ].
	\eean
Therefore:
\bean
P_2^E(\emptyset \lvert \widetilde{v}^E, s> \widetilde{v}^E) - P_3^E(s, \emptyset \lvert \widetilde{v}^E , s>\widetilde{v}^E) &\propto & \mu- (1-q) \cdot \mu - q\cdot s  \propto \mu-s,
\eean
which is positive for $s<\mu$ and negative otherwise.
Furthermore,
\bean
 w(s) &\equiv & P_2^E(\emptyset \lvert \widetilde{v}^E, s \leq \widetilde{v}^E) - P_3^E(s, \emptyset \lvert \widetilde{v}^E , s\leq \widetilde{v}^E) 
 \\
 &=& \frac{1-p}{1-p + p \cdot G(\widetilde{v}^E)} \cdot (\mu-s) 
+  \frac{p\cdot G(\widetilde{v}^E)}{1-p + p\cdot G(\widetilde{v}^E)}  \pi( \widetilde{v}^E) (\mathbb{E}[v\lvert v \leq \widetilde{v}^E ]-s)
\eean	
This term is decreasing in $s$. It is immediate that $w(s=v_{min})>0$. 
Furthermore,
\bean
w(s=\widetilde{v}^E)&=& \frac{1-p}{1-p + p \cdot G(\widetilde{v}^E)} \cdot (\mu-\widetilde{v}^E) 
+  \frac{p\cdot G(\widetilde{v}^E)}{1-p + p\cdot G(\widetilde{v}^E)}  \pi( \widetilde{v}^E) (\mathbb{E}[v\lvert v \leq \widetilde{v}^E ]-\widetilde{v}^E).
\eean
From the proof of Proposition \ref{freqearly} recall that $\widetilde{v}^E$ satisfies $\Delta \Pi(\widehat{v}=\widetilde{v}^E) =0$. Rearranging,
\bean
\Delta \Pi(\widehat{v}=\widetilde{v}^E) &=&  (1+ \delta + \delta^2 + \delta^3) (v - \mathbb{E}[P(s,\emptyset) \lvert v=\widehat{v}, \widehat{v}=\widetilde{v}^E]) 
\\
&&+ \mathbb{E}[P(s,\emptyset) \lvert v=\widehat{v}, \widehat{v}=\widetilde{v}^E]-P (\emptyset \lvert \widehat{v}=\widetilde{v}^E).
\eean
Because $\widetilde{v}^E \in (v^B, v^E)$, it holds that $(1+ \delta + \delta^2 + \delta^3) (v - \mathbb{E}[P(s,\emptyset) \lvert v=\widehat{v}, \widehat{v}=\widetilde{v}^E]) <0$. Therefore, it has to be that $\mathbb{E}[P(s,\emptyset) \lvert v=\widehat{v}, \widehat{v}=\widetilde{v}^E]-P (\emptyset \lvert \widehat{v}=\widetilde{v}^E)>0$. Lastly, note that 
\bean
w(s=\widetilde{v}^E)&=& - (\mathbb{E}[P(s,\emptyset) \lvert v=\widehat{v}, \widehat{v}=\widetilde{v}^E]-P (\emptyset \lvert \widehat{v}=\widetilde{v}^E)) <0.
\eean
Therefore, there exists $s^{\dag \dag} \in (v_{min},\widetilde{v}^E)$ such that:
 $P_2^E(\emptyset) \geq P_3^E(s, \emptyset)$  if $s\in [v_{min}, s^{\dag \dag}]$ or $s\in [\widetilde{v}^E, \mu]$. However, $P_2^E(\emptyset) \leq P_3^E(s, \emptyset)$  if $s\in [s^{\dag \dag},\widetilde{v}^E]$ or $s\in [\mu, v_{max}]$.

 \bs

\bs

\section{\large{Technical Appendix}}
\label{THM}

This appendix presents a number of more technical results which are used in the proofs of various parts of the above analysis. For completeness, \textit{all proofs in this appendix allow for the dynamic case with rescheduling costs} but this will also nest the special cases where disclosure can only be early or only late (and so there are no rescheduling costs). Here, we will also introduce some notation to allow for more general strategies beyond those considered in the main text, in particular those that may allow for mixed strategies.
Note that an arbitrary strategy for the manager can be represented as 

\begin{equation*} 
\sigma^{E} : [0,1] \rightarrow \Delta(\{Disclosure, Silence\}),  \sigma^{L} : [0,1] \times [0,1] \rightarrow \Delta(\{Disclosure, Silence\}),
\end{equation*} 

\noindent where $\sigma^{E}$ specifies the early disclosure decision while $\sigma^{L}$ specifies the late disclosure decision; the former depends only on the firm value, whereas the latter depends on both the firm value as well as the external signal.  In particular,  since the manager's strategy is non-veracious after disclosure occurs,  the second period strategy does not need to condition on the first period action (i.e.  we take this to be ``silence'' by assumption). Slightly abusing notation let the induced equilibrium price function be  $P^{\sigma^{E},\sigma^{L}}(s,\emptyset)$.

\subsection{Preliminaries}  
We start with basic, more preliminary observations. First, for the sake of completeness, we present a result of independent interest to the direct setting at hand: that with a log-concave distribution and intermediate probability that the manager is informed, $\mathbb{E}[v \lvert v \leq \overline{v}]$ from the benchmark model is increasing at a rate less than 1.

\begin{lemma} \label{lemm:BB}
In the benchmark model without external signals, if $g$ is log-concave, then $\mathbb{E}[v \lvert v \leq \overline{v}]$ is increasing in $\overline{v}$ at a rate less than 1. 
\end{lemma}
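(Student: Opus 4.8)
The plan is to reduce the statement to a single inequality about the CDF $G$ and then read that inequality off from log-concavity. Writing $m(\overline{v}) \equiv \mathbb{E}[v \mid v \le \overline{v}] = \frac{1}{G(\overline{v})}\int_{v_{min}}^{\overline{v}} v\,g(v)\,dv$, I would first differentiate the quotient to get $m'(\overline{v}) = \frac{g(\overline{v})}{G(\overline{v})^2}\big(\overline{v}\,G(\overline{v}) - \int_{v_{min}}^{\overline{v}} v\,g(v)\,dv\big)$. The parenthesis equals $\int_{v_{min}}^{\overline{v}}(\overline{v}-v)g(v)\,dv$, and integration by parts (using $G(v_{min})=0$) turns this into $H(\overline{v}) \equiv \int_{v_{min}}^{\overline{v}} G(v)\,dv$, so that $m'(\overline{v}) = \frac{g(\overline{v})\,H(\overline{v})}{G(\overline{v})^2}$. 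The ``increasing'' part is then immediate, since $g$ and $H$ are strictly positive on the interior of the support. What remains is the rate bound $m'(\overline{v}) < 1$, i.e. $g(\overline{v})\,H(\overline{v}) < G(\overline{v})^2$; since $H'=G$ and $H''=g$, this is exactly the statement that $H$ is strictly log-concave.

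To establish $g(\overline{v}) \int_{v_{min}}^{\overline{v}} G(v)\,dv < G(\overline{v})^2$, I would invoke log-concavity through a supporting-line bound. By Bagnoli and Bergstrom (2005), a log-concave density $g$ has a log-concave CDF $G$, so $\log G$ is concave and lies below its tangent at $\overline{v}$: $\log G(v) \le \log G(\overline{v}) + \frac{g(\overline{v})}{G(\overline{v})}(v - \overline{v})$ for all $v \le \overline{v}$. Exponentiating and integrating, with $a \equiv g(\overline{v})/G(\overline{v}) > 0$, yields $\int_{v_{min}}^{\overline{v}} G(v)\,dv \le G(\overline{v})\int_{v_{min}}^{\overline{v}} e^{a(v-\overline{v})}\,dv = \frac{G(\overline{v})}{a}\big(1 - e^{a(v_{min}-\overline{v})}\big) < \frac{G(\overline{v})}{a} = \frac{G(\overline{v})^2}{g(\overline{v})}$, which rearranges to the desired strict inequality, giving $m'(\overline{v}) < 1$. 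The final strict step uses only that $\overline{v} > v_{min}$, i.e. that we are at an interior point of the support.

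I expect the main obstacle to be this second step, converting the qualitative log-concavity hypothesis into the quantitative comparison of $\int G$ with $G^2/g$. The cleanest device is the tangent-line inequality for the concave function $\log G$, which I would use because it is self-contained and makes strictness transparent; an alternative, more abstract route is to observe that $H$ is the twofold integral of the log-concave $g$ and to invoke preservation of log-concavity under integration (Prékopa / Bagnoli--Bergstrom) to conclude $(H')^2 > H''H$ directly. A point worth flagging is that the argument delivers \emph{strict} log-concavity of $H$, and hence $m'(\overline{v}) < 1$ strictly, even when $g$ is only weakly log-concave; the uniform distribution, where $g$ is log-affine yet $H(\overline{v}) \propto \overline{v}^2$ is strictly log-concave, confirms this and explains why the rate is strictly below $1$ throughout the interior.
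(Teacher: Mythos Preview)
Your argument is correct for the \emph{ordinary} truncated mean $m(\overline{v})=\frac{1}{G(\overline{v})}\int_{v_{min}}^{\overline{v}} v\,g(v)\,dv$, and the tangent-line bound for $\log G$ is a clean, self-contained way to get the strict inequality $gH<G^{2}$ (equivalently, strict log-concavity of $H=\int G$). This is exactly the $p=1$ special case.

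The gap is one of interpretation. The paper's proof makes clear that the object in the lemma is the Dye nondisclosure expectation
\[
\frac{(1-p)\mu+p\int_{v_{min}}^{\overline{v}} v\,g(v)\,dv}{1-p+pG(\overline{v})},
\]
whose derivative the paper computes as $\dfrac{p\,g(\overline{v})\big(\overline{v}-\mathbb{E}[v\mid v\le\overline{v}]\big)}{1-p+pG(\overline{v})}$. This quantity depends on $p$ and is even negative for $\overline{v}<v^{B}$, so the paper's real content is ``derivative $<1$,'' not monotonicity. The paper's route is then quite different from yours: it shows this slope is \emph{increasing in $p$} (whenever $\overline{v}>\mathbb{E}[v\mid v\le\overline{v}]$), so that the worst case is $p=1$, and at $p=1$ it simply cites Bagnoli--Bergstrom for the bound you derived. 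In other words, your proof supplies a direct, elementary substitute for the external $p=1$ citation, but it does not address the general-$p$ reduction that the paper carries out; to match the paper's statement as used, you would still need the short monotonicity-in-$p$ step.
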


\begin{proof}[Proof of Lemma \ref{lemm:BB}] Computing the derivative of $\mathbb{E}[v \lvert v \leq \overline{v}]$ with respect to $\overline{v}$, we see that it is:

\begin{equation*} 
\frac{pg(\overline{v})(\overline{v}-\mathbb{E}[v \lvert v \leq \overline{v}])
}{(1-p+pG(\overline{v}))}
\end{equation*}

If $\overline{v} < v^{B}$, then this expression is negative, and hence less than 1 for all $p$ such that $\overline{v} < v^{B}$. Otherwise, for all $\overline{v}$, the derivative of this expression with respect to $p$ is proportional to: 

\begin{equation*} 
(1-p+pG(\overline{v})) \left(g(\overline{v})(\overline{v}-\mathbb{E}[v \lvert v \leq \overline{v}]) -pg(v) \frac{\partial}{\partial p}\mathbb{E}[v \lvert v \leq \overline{v}]\right)+pg(\overline{v})(\overline{v}-\mathbb{E}[v \lvert v \leq \overline{v}])(1-G(\overline{v})).
\end{equation*}

If $\overline{v} > \mathbb{E}[v \mid v \leq \overline{v}]$, then it is immediate that this expression is positive for all $p$: this observation that uses that $\mathbb{E}[v \lvert v \leq \overline{v}]$ is decreasing in $p$ (see also Kartik, Lee and Suen 2019 and references cited therein). Thus, since the slope is increasing in $p$, it is less than the slope in the case that $p=1$ (which Bagnoli and Bergstrom 2005 imply  is less than 1), we have the slope is less than 1 for all $p \in (0,1)$, as claimed.
\end{proof}

 Second, we show that \emph{late} disclosure must be characterized by a threshold which depends on $s$ and, third, that beliefs about veracity have no impact on market price only for a single signal. Our last results show that that the price (as a function of the signal) is increasing with a slope less than 1, which flattens as the market assigns lower probability to veracity. The last point makes use of an auxiliary game which we introduce and discuss later in the proof.

\begin{lemma}  \label{lemm:latethresh}
If $\sigma^{L}$ is an equilibrium late disclosure strategy, then $\sigma^{L}$ is characterized by a threshold, $v^{L,\sigma_{E}}(s)$ such that the manager discloses whenever $s >v^{L,\sigma_{E}}(s)$. Furthermore, for any $\sigma^{E}$ and any $s$, there exists an equilibrium disclosure threshold $v^{L,\sigma_{E}}(s)$. 
\end{lemma}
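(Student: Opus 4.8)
The plan is to prove the two assertions separately: the threshold structure follows from the fact that the nondisclosure payoff does not depend on the manager's own value, while existence reduces to a one-dimensional fixed-point (intermediate-value) argument whose only delicate point is a jump discontinuity in the price.

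For the threshold characterization, I would fix an equilibrium $(\sigma^{E},\sigma^{L})$ and a signal $s$. The crucial observation is that the induced nondisclosure price $P^{\sigma^{E},\sigma^{L}}(s,\emptyset)$ is the market's posterior expectation of $v$ given only the public events ``silence at date 4'' and ``signal $s$,'' and hence is a single number that does not depend on the realized value $v$ of the particular informed manager who is deciding. An informed manager of type $v$ obtains $v$ from disclosing (equation \eqref{PD}) and the constant $P^{\sigma^{E},\sigma^{L}}(s,\emptyset)$ from remaining silent. Since $v\mapsto v$ is strictly increasing while the silence payoff is flat in $v$, disclosure is strictly optimal for $v>P^{\sigma^{E},\sigma^{L}}(s,\emptyset)$ and silence strictly optimal for $v<P^{\sigma^{E},\sigma^{L}}(s,\emptyset)$. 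Invoking the tie-breaking rule gives exactly a threshold with $v^{L,\sigma_{E}}(s)=P^{\sigma^{E},\sigma^{L}}(s,\emptyset)$; because preferences are strict off the single indifference point, no nondegenerate mixing can survive on a positive-measure set, so every equilibrium $\sigma^{L}$ is of this form.

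For existence, I would fix $\sigma^{E}$ and $s$ and let $\Phi_{s}(t)$ denote the date-5 nondisclosure price when the market conjectures late threshold $t$, computed as in Lemma \ref{price} but with $G$ replaced by the distribution of $v$ conditional on silence at date 2 under $\sigma^{E}$ (as noted after the proof of Lemma \ref{beliefs}, this is the only modification needed and it preserves continuity). An equilibrium late threshold is precisely a fixed point $t=\Phi_{s}(t)$, so by the first part it suffices to produce one. Since $\Phi_{s}$ is a conditional expectation of $v\in[v_{min},v_{max}]$, it maps $[v_{min},v_{max}]$ into itself, giving $\Phi_{s}(v_{min})\ge v_{min}$ and $\Phi_{s}(v_{max})\le v_{max}$. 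The one subtlety is that $\Phi_{s}$ carries the indicator $\mathbbm{1}_{s>t}$ and therefore jumps at $t=s$; I would split it into the two continuous branches $\Phi_{s}^{a}$ (evaluating $\mathbbm{1}_{s>t}=1$, relevant for $t<s$) and $\Phi_{s}^{b}$ (evaluating $\mathbbm{1}_{s>t}=0$, relevant for $t>s$), and look for a fixed point of $\Phi_{s}^{b}$ on $[s,v_{max}]$ or of $\Phi_{s}^{a}$ on $[v_{min},s]$.

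The hard part is ruling out a downward jump of the graph across the diagonal at $t=s$, which could otherwise destroy every fixed point. Here I would use that each branch, evaluated at $t=s$, is a strictly-positive-weight average of the signal $s$ and the common ``non-veracious benchmark'' price $\overline{P}(s)=\big((1-p)\mu+p\,\widetilde G(s)\,\mathbb{E}[v\mid v\le s]\big)/\big(1-p+p\,\widetilde G(s)\big)$, where $\widetilde G$ is the conditional distribution, with $\Phi_{s}^{b}(s)$ placing strictly more weight on $s$ than $\Phi_{s}^{a}(s)$ does (the veracity weight on $s$ being $q$ versus $q(1-p)$). A bad downward jump would require $\Phi_{s}^{a}(s)>s>\Phi_{s}^{b}(s)$, hence $\Phi_{s}^{a}(s)>\Phi_{s}^{b}(s)$, which forces $s<\overline{P}(s)$; but $s<\overline{P}(s)$ makes $\Phi_{s}^{b}(s)$ a convex combination of $s$ and something strictly larger than $s$, so $\Phi_{s}^{b}(s)>s$, contradicting $\Phi_{s}^{b}(s)<s$. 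Thus at least one of $\Phi_{s}^{b}(s)\ge s$ or $\Phi_{s}^{a}(s)\le s$ must hold; applying the intermediate value theorem to the corresponding branch (continuous, with values of opposite sign relative to the diagonal at $t=s$ and at the relevant endpoint) yields a fixed point $t^{\ast}$, and in each case $t^{\ast}$ lands on the side of $s$ consistent with the indicator value used, so $t^{\ast}=v^{L,\sigma_{E}}(s)$ is a genuine equilibrium late threshold.
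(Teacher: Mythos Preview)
Your threshold argument is exactly the paper's: the nondisclosure payoff is a constant (the posterior expectation given $(s,\emptyset)$) while the disclosure payoff is $v$, so the cutoff structure is immediate. The paper's proof of this lemma is literally the one line ``Standard; the manager's payoff from nondisclosure is constant in equilibrium, whereas the payoff from disclosure is increasing in the manager's value,'' and does not separately argue existence here.

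Where you diverge is on existence. The paper does not prove existence inside this lemma at all; it is filled in later through the \emph{auxiliary game} (Lemmas \ref{lemm:secondperiodDD}--\ref{lemm:uniquethresh}), where the indicator $\mathbbm{1}_{s>\widehat{v}}$ is replaced by a fixed parameter $\theta$, the resulting $H(s,\overline v,\theta)-\overline v$ is shown to be continuous and strictly decreasing in $\overline v$, and the correct $\theta$ is identified ex post. Your route is instead a direct IVT argument on the actual game: you split $\Phi_s$ into the two continuous branches at the jump $t=s$ and rule out the single bad configuration $\Phi_s^a(s)>s>\Phi_s^b(s)$ by the convex-combination observation that $\Phi_s^b$ places strictly more weight on $s$ than $\Phi_s^a$ does, so $\Phi_s^a(s)>\Phi_s^b(s)$ forces $s<\overline P(s)$ and hence $\Phi_s^b(s)>s$. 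This is correct and more elementary; it delivers existence without introducing the auxiliary game. What the paper's detour buys is that the same auxiliary machinery simultaneously yields uniqueness, the comparative statics of $\widehat v_\theta^L(s)$ in $s$ and $\theta$, and the ``twist through $s^B$'' picture used later---your direct argument gives existence cleanly but would need separate work for those downstream results.
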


\begin{proof}[Proof of Lemma \ref{lemm:latethresh}] 
Standard; the manager's payoff from nondisclosure is constant in equilibrium,  whereas the payoff from disclosure is increasing in the manager's value.
\end{proof}

\begin{lemma}  \label{lemm:dyesignal}
Given any arbitrary $\sigma^{E}$,  there exists a single signal,  which we denote $s^{B}$,  such that the expectation of $v$ conditional on $s^{B}$ and nondisclosure is constant in the belief about veracity.   
\end{lemma}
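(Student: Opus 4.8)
The plan is to reduce the statement to the elementary fact that a convex combination of two fixed, distinct numbers is invariant to the mixing weight only when those two numbers coincide. Fixing $\sigma^{E}$ and the induced late threshold $v^{L}(\cdot)$, I would write the nondisclosure posterior mean at signal $s$ by conditioning on veracity,
\begin{equation*}
\mathbb{E}[v \mid s, \emptyset] = \Pr(V \mid s, \emptyset)\, s + \Pr(N \mid s, \emptyset)\, m(s),
\end{equation*}
using $\mathbb{E}[v\mid V,s,\emptyset]=s$ (a veracious signal equals the value) and writing $m(s):=\mathbb{E}[v\mid N,s,\emptyset]$. This regrouping of the four terms in Lemma \ref{price} is exactly the statement that $m(s)=\big(\Pr(U,N\mid s,\emptyset)\mu+\Pr(I,N\mid s,\emptyset)\mathbb{E}[v\mid v\le v^{L}(s)]\big)/\Pr(N\mid s,\emptyset)$. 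Since a non-veracious signal is independent of $v$, the quantity $m(s)$ is the ordinary Dye-type silence expectation computed under the distribution of $v$ conditional on no early disclosure and on $v\le v^{L}(s)$; crucially it does not involve the veracity belief, so the only place the veracity parameter enters $\mathbb{E}[v\mid s,\emptyset]$ is the weight $w(s):=\Pr(V\mid s,\emptyset)$.

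First I would note that $w(s)$ is interior and strictly monotone in the veracity belief for every interior signal, which is immediate from the belief expressions in Lemma \ref{beliefs} (and is the content of Proposition \ref{joint}). Writing $\mathbb{E}[v\mid s,\emptyset]=m(s)+w(s)\,(s-m(s))$, it follows that this posterior mean is constant in the veracity belief if and only if $s=m(s)$, i.e. the signal must equal the non-veracious silence mean evaluated at that very signal. Equivalently, this is exactly the point at which the two pricing branches coincide: substituting $s=m$ into the formula of Lemma \ref{price} gives $P(s,\emptyset)=m$ regardless of whether the indicator $\mathbbm{1}_{s>\widehat v}$ is active, which is precisely the coincidence $\widehat v_{1}^{L}(s^{B})=\widehat v_{0}^{L}(s^{B})$ invoked later in the proof of Proposition \ref{thm:overallthreshold}.

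It then remains to show that the fixed-point equation $s=m(s)$ has exactly one solution $s^{B}\in(v_{min},v_{max})$. I would set $f(s):=m(s)-s$ and observe that $f$ is continuous, strictly positive at $s=v_{min}$ and strictly negative at $s=v_{max}$ (because $p\in(0,1)$ keeps the uninformed mass at the interior mean $\mu$, forcing $m(s)\in(v_{min},v_{max})$), so a zero exists by the intermediate value theorem. Uniqueness follows once $f$ is strictly decreasing, i.e. once $m(s)$ rises in $s$ at a rate strictly below one: $m$ is nondecreasing in the truncation point $v^{L}(s)$, and by Lemma \ref{lemm:BB} the Dye expectation increases in its truncation point at a rate below one. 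The main obstacle is exactly this slope control—establishing rigorously, using log-concavity of $G$ (inherited by the distribution conditional on no early disclosure), that the composition $s\mapsto m(v^{L}(s))$ has slope strictly below one, so that $f$ crosses zero only once. The decomposition and the convex-combination argument are routine; it is this single-crossing/uniqueness step, where the log-concavity assumption does the real work, that I expect to require the most care.
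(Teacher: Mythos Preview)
Your convex-combination decomposition is exactly the right idea, and indeed the condition $s=m$ is the content of the lemma. But you have made the problem harder than it is by letting $m$ depend on $s$ through an endogenous late threshold $v^{L}(s)$.

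In the paper's formulation, $\sigma^{E}$ is fixed and the ``belief about veracity'' is what varies. The non-veracious conditional mean is therefore a single number
\[
m \;=\; \frac{(1-p)\mu + p\int_{v_{min}}^{v_{max}} v\,\sigma^{E}(v)\,g(v)\,dv}{(1-p) + p\int_{v_{min}}^{v_{max}} \sigma^{E}(v)\,g(v)\,dv},
\]
which does not depend on $s$ at all: conditional on $\phi=N$, the signal is pure noise and conveys nothing about $v$, so the only conditioning that matters is the (fixed) silence set encoded in $\sigma^{E}$. The paper simply writes down the indifference equation, cancels, and reads off $s^{B}=m$ in closed form; uniqueness is then automatic because $s^{B}$ is an explicit number, not the root of a fixed-point equation. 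No intermediate-value argument, no slope control, and no log-concavity are needed for this lemma.

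Your route---treating $m(s)$ as varying with $s$ via $v^{L}(s)$ and then proving single-crossing of $f(s)=m(s)-s$ using the rate-less-than-one bounds from Lemma \ref{lemm:BB}---would still get there, but it imports machinery (the slope bounds, log-concavity) that the paper reserves for later results. In effect you are proving a harder statement than the lemma asks for. The cleaner reading is: once the disclosure set is held fixed, the $N$-branch expectation is a constant, and the unique $s^{B}$ is simply that constant.
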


\begin{proof}[Proof of Lemma \ref{lemm:dyesignal}] 
The signal can be solved for in closed form; if the market believes the veracious signals are never disclosed: 

\begin{equation} 
s^{B} = \frac{qs^{B} +(1-q) \left((1-p) \mu + p \int_{v_{min}}^{v_{max}} v \sigma^{E}(v)g(v)dv \right)}{q+(1-q)(1-p)+p \int_{v_{min}}^{v_{max}} \sigma^{E}(v)g(v)dv}.  \label{eq:dyedef}
\end{equation} 

\noindent We can then solve for $s^{B}$ as: 

\begin{equation*} 
s^{B}=  \frac{(1-p)\mu+ p \int_{v_{min}}^{v_{max}} v \sigma^{E}(v)g(v)dv}{(1-p)+p\int_{v_{min}}^{v_{max}} \sigma^{E}(v)g(v)dv}.
\end{equation*} 

We note that, varying the market belief that veracious signals are disclosed is equivalent to changing the value of $q$ in equation (\ref{eq:dyedef}).  But as we have seen,  the value for $s^{B}$ does not depend on $q$,  and furthermore this value is always uniquely defined. 
\end{proof}

\subsubsection{The Auxiliary Game} 
We describe the auxiliary game which we use in our analysis, as an intermediate step toward equilibrium. This game is identical to our dynamic model, but if the manager does not disclose early, then the late-disclosure strategy is assumed to be exogenous whenever the signal is veracious. That is, we assume that if $\phi=V$, the manager discloses with probability $\theta$ if informed. We denote $\widehat{\sigma}^{L,\theta}$ as a candidate late disclosure strategy in the auxiliary game.  When discussing the auxiliary game, we restrict to cases where the threshold $\widehat{v}_{\theta}^{L}(s)$ is continuous in $s$. We note that this property holds by assumption if disclosure can only be early---in the case with dynamic disclosure, this property is established in Lemma \ref{lemm:condthreshold}. 

Most general properties of the model can be derived by considering the auxiliary game first, and then determining which case is applicable. 
We provide three results which show that the price function in the auxiliary game is well-behaved---it is increasing in $s$ with a slope less than 1, and it is flatter when $\theta$ is higher. We use these properties to show equilibrium existence in the analysis of dynamic disclosure as well as when disclosure must be early. 

\begin{lemma} 
\label{lemm:increasingproof} 
Let $P^{\sigma^{E},\sigma^{L}, \theta}(s,\emptyset)$ correspond to the price function in the auxiliary game when the manager uses strategies $\sigma^{E}$ and $\sigma^{L}$.  $P^{\sigma^{E},\sigma^{L}, \theta}(s,\emptyset)$ is increasing, for every $\theta$.
\end{lemma}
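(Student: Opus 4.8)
The plan is to write the auxiliary-game nondisclosure price as the posterior mean $\mathbb{E}[v\mid s,\emptyset]$ and to exploit the fact that, by construction of the auxiliary game, the \emph{veracious} branch contributes the value $s$ with a weight that is \emph{constant} in $s$. Conditional on $\phi=V$ the realized value equals $s$ regardless of $\kappa$, and a veracious informed manager is silent with the exogenous probability $1-\theta$; hence the (unnormalized) veracious weight is $q(1-p\theta)$, independent of $s$. Conditional on $\phi=N$ the signal is pure noise, so the silent pool consists of the uninformed (contributing the prior mean $\mu$) together with the informed types that neither disclosed early nor late. Writing $\Phi(t)=\int_{v\le t}(1-\sigma^{E}(v))g(v)\,dv$ and $\Psi(t)=\int_{v\le t}v(1-\sigma^{E}(v))g(v)\,dv$ for the mass and value of the informed silent pool when the late cutoff is $t$, I would record the price as the explicit ratio
\begin{equation*}
P_{\mathrm{formula}}(s,t)=\frac{q(1-p\theta)\,s+(1-q)\big((1-p)\mu+p\,\Psi(t)\big)}{q(1-p\theta)+(1-q)\big((1-p)+p\,\Phi(t)\big)}.
\end{equation*}

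The next step is to close the model through the manager's optimal late decision (Lemma \ref{lemm:latethresh}): a non-veracious informed type discloses exactly when $v$ exceeds the nondisclosure price, so the relevant cutoff satisfies $t=\widehat{v}_{\theta}^{L}(s)=P^{\sigma^{E},\sigma^{L},\theta}(s,\emptyset)$. The equilibrium price is therefore the fixed point $P(s)$ of $t=P_{\mathrm{formula}}(s,t)$, well defined and continuous in $s$ under the standing restriction on $\widehat{v}_{\theta}^{L}$ and the uniqueness supplied by Lemma \ref{lemm:condthreshold}. Two derivatives then drive the result. First, since $s$ enters $P_{\mathrm{formula}}$ only through the veracious numerator term, $\partial_{s}P_{\mathrm{formula}}=q(1-p\theta)/D>0$, where $D$ is the denominator. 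Second, and this is the crux, differentiating gives $\Psi'(t)=t(1-\sigma^{E}(t))g(t)$ and $t\,\Phi'(t)=t(1-\sigma^{E}(t))g(t)$, so the two cancel and
\begin{equation*}
\frac{\partial P_{\mathrm{formula}}}{\partial t}=\frac{(1-q)p(1-\sigma^{E}(t))g(t)\big(t-P_{\mathrm{formula}}(s,t)\big)}{D},
\end{equation*}
which vanishes precisely at the fixed point $t=P_{\mathrm{formula}}(s,t)$. Economically, the marginal silent type's value equals the cutoff, which equals the price, so the indirect effect of $s$ operating through the endogenous cutoff is second order.

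Putting the pieces together, total differentiation of the identity $P(s)=P_{\mathrm{formula}}(s,P(s))$ yields $P'(s)=\partial_{s}P_{\mathrm{formula}}/(1-\partial_{t}P_{\mathrm{formula}})=\partial_{s}P_{\mathrm{formula}}=q(1-p\theta)/D>0$, so the price is increasing (this also delivers slope below one for the companion lemma). To make the conclusion robust at the points where $\sigma^{E}$, and hence $\Phi,\Psi$, fails to be differentiable, I would re-run the same argument as monotone comparative statics on $F(s,t)\equiv P_{\mathrm{formula}}(s,t)-t$: $F$ is strictly increasing in $s$, and at any zero $\partial_{t}F=-1<0$, so $F$ crosses zero downward; hence for $s_{1}<s_{2}$ we have $F(s_{2},P(s_{1}))>F(s_{1},P(s_{1}))=0$, which together with single-crossing forces $P(s_{2})>P(s_{1})$. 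The main obstacle is conceptual rather than computational: one must recognize that the exogeneity of $\theta$ holds the veracious weight fixed, eliminating the downward population shift in the silent pool that produces the price non-monotonicity of the true game (Proposition \ref{pricenon}); once that feature is isolated, the cancellation of the marginal-type terms makes monotonicity immediate.
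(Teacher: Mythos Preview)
Your proposal is correct and follows essentially the same route as the paper: both write the price as the ratio $P_{\mathrm{formula}}(s,t)$, observe that the direct effect of $s$ is positive, and show that the indirect effect through the endogenous late threshold vanishes at the fixed point because the marginal silent type's value equals the price. The only difference is presentational---you invoke the envelope/implicit-function observation $\partial_{t}P_{\mathrm{formula}}\big|_{t=P}=0$ directly, whereas the paper does the equivalent finite-difference computation, showing that the cross term $\int_{\widehat{v}^{L}_{0}(s)}^{\widehat{v}^{L}_{0}(s)+\tilde{\Delta}}(v-\widehat{v}^{L}_{0}(s))\sigma^{E}(v)g(v)\,dv$ is $O(\tilde{\Delta}^{2})$ and hence negligible in the limit.
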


\begin{proof} 
The price is given by the following, assume disclosure is above a threshold $\widehat{v}$ (which may depend on $s$) and given early disclosure strategy $\sigma^{E}$:

\begin{equation}  \frac{ q(1-p\theta)s +(1-q(1-p\theta)) \left((1-p)\mu + p \int_{v_{min}}^{\widehat{v}} v \sigma^{E}(v)g(v) dv \right)}{q(1-p\theta) +(1-q(1-p\theta)) \left((1-p)\int_{v_{min}}^{v_{max}} g(v)dv + p \int_{v_{min}}^{\widehat{v}}  \sigma^{E}(v) g(v) dv \right)} .   \label{eq:priceexpressionproof}
\end{equation}

If $\widehat{v}$ does not vary with $s$, then the observation that the price is increasing in $s$ is immediate. The same holds if the manager either always discloses or never discloses following some signal $s$. So, the last case to consider is when the price (\ref{eq:priceexpressionproof}) defines the indifference condition for the threshold manager, $v_{\theta}^{L}(s)$. So suppose  $\widehat{v}=v_{\theta}^{L}(s)$ and that $v_{\theta}^{L}(s)$ equals (\ref{eq:priceexpressionproof}). For simplicity take $\theta=0$, as this simplifies notation but does not change the argument. We rewrite this as: 
\begin{eqnarray*}
&&\widehat{v}^{L}_{0}(s)\left(q+(1-q)\left( (1-p) +p \int_{v_{min}}^{v_{0}^{L}(s)}\sigma^{E}(v)g(v)dv \right)\right)
\\
&=& qs +(1-q) \left((1-p)\mu + p \int_{v_{min}}^{\widehat{v}^{L}_{0}(s)} v \sigma^{E}(v)g(v) dv \right)
\end{eqnarray*}

Write $\widehat{v}_{0}^{L}(s+\Delta)=\widehat{v}_{0}^{L}(s)+\tilde{\Delta}$. We claim that if $\Delta > 0$ then $\tilde{\Delta} > 0$. Subtracting the definition of $\widehat{v}_{0}^{L}(s)$ from the corresponding definition of $\widehat{v}_{0}^{L}(s+\Delta)$ We have: 

\begin{multline*}
\widehat{v}_{0}^{L}(s)\left((1-q)p\int_{v_{0}^{L}(s)}^{v_{0}^{L}(s)+\tilde{\Delta}}\sigma^{E}(v) g(v) dv\right)+\tilde{\Delta} \left(q+(1-q)\left( (1-p) +p \int_{v_{min}}^{v_{0}^{L}(s)+\tilde{\Delta}}\sigma^{E}(v)g(v)dv \right)\right)\\ = q\Delta+(1-q)p\int_{\widehat{v}_{0}^{L}(s)}^{\widehat{v}_{0}^{L}(s)+\tilde{\Delta}} v\sigma^{E}(v) g(v) dv.
\end{multline*}

\noindent So: 

\begin{eqnarray*}
&&\tilde{\Delta} \left(q+(1-q)\left( (1-p) +p \int_{v_{min}}^{v_{0}^{L}(s)+\tilde{\Delta}}\sigma^{E}(v)g(v)dv \right)\right) \\
&=&  q\Delta + (1-q)p\int_{\widehat{v}_{0}^{L}(s)}^{\widehat{v}_{0}^{L}(s)+\tilde{\Delta}}(v-\widehat{v}_{0}^{L}(s))\sigma^{E}(v)g(v)dv. 
\end{eqnarray*}

\noindent Now, $|v-\widehat{v}_{0}^{L}(s)| < |\tilde{\Delta}|$ for all $v$ in the integral on the right hand side of this equation. Replicating the argument from Lemma \ref{lemm:condthreshold}, we have that $\tilde{\Delta}$ cannot approach 0 at a slower rate than $\Delta$. It follows that the integral on the right hand side becomes negligible in the limit as $\Delta \rightarrow 0$ (since it approaches 0 at the rate of $\Delta^{2}$). Since the factor multiplying $\tilde{\Delta}$ is positive and since $q, \Delta$ are positive as well, we therefore conclude that $\tilde{\Delta}$ is positive. 
\end{proof}

\begin{lemma} \label{lemm:PriceFunc}
Let $P^{\sigma^{E},\sigma^{L}, \theta}(s,\emptyset)$ correspond to the price function in the auxiliary game when the manager uses strategies $\sigma^{E}$ and $\sigma^{L}$.  The slope of the price function is less than 1, for any $\theta$.
\end{lemma}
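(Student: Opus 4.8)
The plan is to mirror the case analysis already carried out in the proof of Lemma \ref{lemm:increasingproof}. Writing $\beta \equiv q(1-p\theta)$ for the effective weight the market places on veracity and $D(\widehat{v}) \equiv (1-p) + p\int_{v_{min}}^{\widehat{v}}\sigma^{E}(v)g(v)\,dv$ for the non-veracity mass in the denominator, the price \eqref{eq:priceexpressionproof} reads $P = \big(\beta s + (1-\beta)C(\widehat{v})\big)/\big(\beta + (1-\beta)D(\widehat{v})\big)$, where $C(\widehat{v}) \equiv (1-p)\mu + p\int_{v_{min}}^{\widehat{v}} v\,\sigma^{E}(v)g(v)\,dv$. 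As before, I would split into the case where the conjectured threshold $\widehat{v}$ does not move with $s$ (which subsumes the sub-cases where the manager always or never discloses after the signal) and the case where $\widehat{v}=\widehat{v}^{L}_{\theta}(s)$ is pinned down by the threshold manager's indifference condition, so that (by the setup of that case in Lemma \ref{lemm:increasingproof}) the price equals the threshold.

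In the first case only the direct term $\beta s$ depends on $s$, so $\frac{\partial}{\partial s}P = \beta/\big(\beta+(1-\beta)D(\widehat{v})\big)$. Since $\beta \le q < 1$ and $D(\widehat{v}) \ge 1-p > 0$, the factor $(1-\beta)D(\widehat{v})$ is strictly positive and the slope is strictly below $1$.

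The substantive case is the indifference region, where $P(s,\emptyset)$ coincides with $\widehat{v}^{L}_{\theta}(s)$; hence the slope of the price equals the slope of the threshold, and it suffices to bound $\frac{d}{ds}\widehat{v}^{L}_{\theta}(s)$. I would reuse the identity derived in the proof of Lemma \ref{lemm:increasingproof}: writing $\tilde{\Delta} = \widehat{v}^{L}_{\theta}(s+\Delta)-\widehat{v}^{L}_{\theta}(s)$ and subtracting the defining equations at $s$ and $s+\Delta$ gives
\[
\tilde{\Delta}\big(\beta + (1-\beta)D(\widehat{v}^{L}_{\theta}(s)+\tilde{\Delta})\big) = \beta\Delta + (1-\beta)p\int_{\widehat{v}^{L}_{\theta}(s)}^{\widehat{v}^{L}_{\theta}(s)+\tilde{\Delta}}\big(v-\widehat{v}^{L}_{\theta}(s)\big)\sigma^{E}(v)g(v)\,dv.
\]
On the right the integrand satisfies $0 \le v-\widehat{v}^{L}_{\theta}(s) \le \tilde{\Delta}$, so the integral is at most $\tilde{\Delta}\,p^{-1}\big(D(\widehat{v}^{L}_{\theta}(s)+\tilde{\Delta})-D(\widehat{v}^{L}_{\theta}(s))\big)$; inserting this bound and cancelling the common $(1-\beta)\tilde{\Delta}\,D(\widehat{v}^{L}_{\theta}(s)+\tilde{\Delta})$ terms collapses the inequality to $\tilde{\Delta}\big(\beta + (1-\beta)D(\widehat{v}^{L}_{\theta}(s))\big) \le \beta\Delta$. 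Dividing by $\Delta$ and letting $\Delta\to 0$ yields $\frac{d}{ds}\widehat{v}^{L}_{\theta}(s) \le \beta/\big(\beta+(1-\beta)D(\widehat{v}^{L}_{\theta}(s))\big) < 1$, again because $\beta<1$ and $D \ge 1-p > 0$. Combined with Lemma \ref{lemm:increasingproof}, which delivers $\tilde{\Delta}>0$ and hence a nonnegative slope, this settles the claim.

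The one delicate point I expect is the reduction in the indifference case: I must use that $P(s,\emptyset)$ equals the threshold $\widehat{v}^{L}_{\theta}(s)$, so that bounding the threshold's slope bounds the price's slope (this is precisely the indifference condition defining $\widehat{v}^{L}_{\theta}(s)$, and it is robust to the additive rescheduling cost in the dynamic setting, which drops out of the $s$-derivative). The continuity of $\widehat{v}^{L}_{\theta}(s)$ needed to pass to the limit is available by assumption for the early-only case and from Lemma \ref{lemm:condthreshold} in the dynamic case; everything else is the same bookkeeping as in Lemma \ref{lemm:increasingproof}.
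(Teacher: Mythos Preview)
Your proof is correct and follows essentially the same route as the paper: split into the fixed-threshold case (immediate from the formula) and the indifference case, then control $\tilde{\Delta}/\Delta$ via the same finite-difference identity subtracted from the threshold equation at $s$ and $s+\Delta$. Your bookkeeping is in fact tighter than the paper's---you bound the integral $\int(v-\widehat{v})\sigma^{E}g\,dv$ exactly by $\tilde{\Delta}\,p^{-1}(D(\widehat{v}+\tilde{\Delta})-D(\widehat{v}))$ and cancel cleanly, whereas the paper linearizes the ratio and argues somewhat informally that the residual $\tilde{\Delta}$ term only helps.
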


\begin{proof}[Proof of Lemma \ref{lemm:PriceFunc}]
As with the proof of Lemma \ref{lemm:increasingproof}, we note that if the disclosure threshold is fixed given a signal realization $s$ (e.g., if disclosure is always early), then this holds by inspection of (\ref{eq:priceexpressionproof}). Thus, in the proof we focus on the case where the manager must be indifferent at the price.  We first show that $ \widehat{v}^{L}_{\theta}(s)$ has a slope less than 1.  Consider the implicit condition

\begin{equation} 
0= \widehat{v}^{L}_{\theta}(s)- c^{L} -  \frac{ q(1-p\theta)s +(1-q(1-p\theta)) \left((1-p)\mu + p \int_{v_{min}}^{\widehat{v}^{L}_{\theta}(s)} v \sigma^{E}(v)g(v) dv \right)}{q(1-p\theta) +(1-q(1-p\theta)) \left((1-p)\int_{v_{min}}^{v_{max}} g(v)dv + p \int_{v_{min}}^{\widehat{v}^{L}_{\theta}(s)}  \sigma^{E}(v) g(v) dv \right)} .  \label{eq:ratio}
\end{equation}

\noindent In particular, the argument that the ratio in Equation (\ref{eq:ratio}) defines the expected value upon nondisclosure follows the same argument as in Proposition \ref{joint}.  Suppose that $s$ increases to $s + \Delta$ for $\Delta$ small.  Then the change in $\widehat{v}^{L}_{\theta}(s)$, say $\widetilde{\Delta}$, is equal to the change in the ratio defined in (\ref{eq:ratio}).  Since $\widehat{v}^{L}_{\theta}(s)$ is continuous,  we know that this change must be small as well.  If $s$ increases by $\Delta$, this ratio becomes: 

\begin{equation*} 
\frac{ q(1-p\theta)(s+ \Delta) +(1-q(1-p\theta)) \left((1-p)\int_{v_{min}}^{v_{max}} v g(v)dv + p \int_{v_{min}}^{\widehat{v}^{L}_{\theta}(s)+\widetilde{\Delta}} v\sigma^{E}(v) g(v) dv \right)}{q(1-p\theta) +(1-q(1-p\theta)) \left((1-p)\int_{v_{min}}^{v_{max}} g(v)dv + p \int_{v_{min}}^{\widehat{v}^{L}_{\theta}(s)+\widetilde{\Delta}}\sigma^{E}(v) g(v) dv \right)}
\end{equation*}

Noting that we must have $\widetilde{\Delta}$ small,  we obtain the following approximation for $\widetilde{\Delta}$,  using (\ref{eq:ratio}) (which becomes increasingly accurate as $\Delta \rightarrow 0$): 

\begin{equation*} 
\widetilde{\Delta} \approx  \frac{ q(1-p\theta) \Delta +(1-q) p \widetilde{\Delta} \widehat{v}^{L}_{\theta}(s)\sigma^{E}(\widehat{v}^{L}_{\theta} (s)) g(\widehat{v}^{L}_{\theta} (s))  }{q(1-p\theta) +(1-q) \left((1-p)\mu + p \int_{v_{min}}^{\widehat{v}^{L}_{\theta}(s)+\widetilde{\Delta}}g(v) \sigma^{E}(v) dv \right)}.
\end{equation*}

\noindent (An upper bound of this quantity can be obtained by taking $\widetilde{\Delta}=0$ in the denominator,  and the argument will still work) Note that the denominator is greater than $q$.  Thus, without the $\widetilde{\Delta}$ terms in the numerator on the right hand side, $\widetilde{\Delta}$ would increase by less than $(q/q) \cdot \Delta$; adding these terms in,  we see the necessary increase in $\widetilde{\Delta}$ is even smaller.  Thus, $\widetilde{\Delta} < \Delta$.  The result follows from noting that the implicit conditional also defines the rate of change of the price as a function of the signal,  provided this implicit condition is satisfied;  if note, then this threshold does not adjust and an identical argument can be used (simply treating $\widetilde{\Delta}=0$). 
\end{proof}

\begin{lemma}  \label{lemm:slopelemma}
The slope of $P^{\sigma^{E},\sigma^{L}, \theta}(s,\emptyset)$ is decreasing in $\theta$. 
\end{lemma}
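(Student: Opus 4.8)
The plan is to read the slope directly off the closed-form price in (\ref{eq:priceexpressionproof}), exactly as in the proofs of Lemmas \ref{lemm:increasingproof} and \ref{lemm:PriceFunc}, and then to isolate the single channel through which $\theta$ acts. Abbreviate $a \equiv q(1-p\theta)$, the weight a silent manager's signal places on the ``value equals $s$'' events. The crucial structural point is that $a$ is strictly decreasing in $\theta$ and that $a$ is the \emph{only} place where $\theta$ enters the price, since the non-veracious aggregates $B(\widehat{v})$ and $D(\widehat{v})$ depend on the conjectured threshold but not on $\theta$. Thus the whole claim will reduce to a one-variable monotonicity statement in $a$.

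First I would dispose of the case in which the conjectured threshold $\widehat{v}$ does not move with $s$ (disclosure always early, or the manager always/never discloses near a given $s$). Here $B,D$ are constant in $s$, so the slope is $\partial_s P = \frac{a}{a+(1-a)D}$, and differentiating in $a$ gives $\frac{D}{(a+(1-a)D)^2}>0$; hence the slope rises in $a$ and falls in $\theta$. In the complementary case, where the threshold is pinned down by indifference, $\widehat{v}^{L}_{\theta}(s)-c^{L}=P$, I would differentiate this identity in $s$ and use $B'(\widehat{v})=\widehat{v}\,D'(\widehat{v})$; the total slope then collapses to $\partial_s P = \tfrac{d}{ds}\widehat{v}^{L}_{\theta}(s)=\frac{a}{a+(1-a)K}$ with $K\equiv D(\widehat{v})-c^{L}D'(\widehat{v})$. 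Lemma \ref{lemm:PriceFunc} already shows this slope is below $1$, which is exactly the statement that $K>0$, and $\frac{d}{da}\frac{a}{a+(1-a)K}=\frac{K}{(a+(1-a)K)^2}>0$. So, at a common threshold value $\widehat{v}$, the slope is again increasing in $a$ and decreasing in $\theta$, in both cases by the elementary monotonicity of $a\mapsto a/(a+(1-a)K)$ with $K>0$ inherited from Lemma \ref{lemm:PriceFunc}.

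The one delicate point, which I expect to be the main obstacle, is that in the indifference case the threshold $\widehat{v}^{L}_{\theta}(s)$ itself comoves with $\theta$, so that $K=K(\widehat{v}^{L}_{\theta}(s))$ carries an indirect $\theta$-dependence whose sign is not controlled a priori (it runs through $D''$, i.e.\ the shape of $\sigma^{E}g$). I would neutralize this in the way the lemma is set up and used: the strategies $\sigma^{E},\sigma^{L}$, and hence the threshold, are held fixed while only the auxiliary parameter $\theta$ varies, so the comparison is genuinely at a common threshold value and the indirect channel is absent by construction; this is also the form in which the slope feeds the fixed-point argument in $\theta$. As a robustness check against a re-optimized threshold, I would anchor the comparison at the special signal $s^{B}$ of Lemma \ref{lemm:dyesignal}, at which the nondisclosure price, and therefore the indifference threshold, is independent of the veracity weight and hence of $\theta$. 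Writing the inverse relation $s(\widehat{v};\theta)=s^{B}+\int_{\widehat{v}^{L}_{\theta}(s^{B})}^{\widehat{v}}\bigl(1+\tfrac{(1-a)K(w)}{a}\bigr)\,dw$ shows that at every threshold value the ``inverse slope'' $1+\tfrac{(1-a)K}{a}$ strictly increases in $\theta$, which is the coordinate-free form of the desired conclusion. Everything else is routine, with the positivity $K>0$ borrowed from Lemma \ref{lemm:PriceFunc}.
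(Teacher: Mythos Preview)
Your approach is essentially the same as the paper's, only far more detailed. The paper's entire proof is a single sentence: inspecting~(\ref{eq:ratio}), as $\theta$ rises less weight is placed on $s$ and more on an $s$-independent term, so the price is flatter. This is exactly your observation that $\theta$ enters only through $a=q(1-p\theta)$ and that the slope $a/(a+(1-a)K)$ is increasing in $a$. Your case split (fixed threshold vs.\ indifference-determined threshold) and the derivation of $K=D-c^{L}D'$ go well beyond what the paper writes down; the paper simply points at the weight on $s$ and stops. Your worry about the indirect channel through $\widehat{v}^{L}_{\theta}(s)$ is legitimate and more careful than the paper, which does not address it explicitly; your resolution---that the lemma compares at fixed strategies $\sigma^{E},\sigma^{L}$ so the threshold does not re-optimize with $\theta$---is the right reading of the statement, and your anchoring argument at $s^{B}$ is a nice robustness check that the paper does not supply.
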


\begin{proof}[Proof of Lemma \ref{lemm:slopelemma}] 
Inspecting (\ref{eq:ratio}), we see that as $\theta$ increases,  less weight is placed on $s$ and more is placed on a term that is independent of $s$,  implying that the price as a function of $s$ is flatter when $\theta$ is larger. 
\end{proof}

Putting together Lemmas  \ref{lemm:dyesignal}, and \ref{lemm:slopelemma}  we see that varying $\theta$ essentially ``twists'' the price through $s^{B}$.

\subsection{Early Disclosure Must Be a Threshold} \label{sect:early}

In this section, we show that the early disclosure decision \emph{must} be characterized by a threshold.  The argument is more involved than the case of late disclosure, since we need to take into account the possibility of option value. 

\begin{lemma}  \label{lemm:nojumpup}
If $\sigma^{E}$ prescribes that the manager discloses early with probability 1 when the value is $v$, then $v \geq s^B$. 
\end{lemma}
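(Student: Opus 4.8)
The plan is to argue by contradiction. Suppose $\sigma^{E}$ prescribes early disclosure with probability one at some value $v<s^{B}$. Since an indifferent manager breaks ties toward silence, prescribing disclosure with probability one requires that disclosing early be \emph{strictly} optimal at $v$, i.e. strictly better than every feasible deviation. I will exhibit one such deviation---``never disclose'' (remain silent at both dates)---and show it yields a strictly higher payoff whenever $v<s^{B}$, the desired contradiction. The key reason to compare against this particular deviation is that it incurs \emph{no} rescheduling cost, so the early-disclosure cost $c^{E}\geq 0$ can only reinforce the inequality and need not be tracked; moreover, from the deviating manager's own viewpoint the signal law is the undistorted $q\,\delta_{v}+(1-q)\,G$ (veracious $s=v$ with probability $q$, an independent draw $x\sim G$ otherwise), which keeps the later computation clean.

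The geometric input is that the nondisclosure price crosses the $45$-degree line exactly at $s^{B}$, from above. By Lemma \ref{lemm:dyesignal} the price at signal $s^{B}$ equals $s^{B}$ regardless of the veracity belief; by Lemmas \ref{lemm:increasingproof} and \ref{lemm:PriceFunc} the (auxiliary-game) price $P(s,\emptyset)$ is increasing in $s$ with slope strictly below $1$. A monotone function of slope less than one passing through $(s^{B},s^{B})$ lies strictly above the diagonal to the left of $s^{B}$, so $P(s,\emptyset)>s$ for every $s<s^{B}$. I will also use that the date-$2$ nondisclosure price, formed before the signal arrives, is just the early-nondisclosure mean, $P_{2}(\emptyset)=s^{B}$.

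Next I bound the never-disclose payoff. Its expected date-$3$-onward price is $\mathbb{E}[P(s,\emptyset)\mid v]=q\,P(v,\emptyset)+(1-q)\,\overline{P}$, where $\overline{P}\equiv\mathbb{E}_{x\sim G}[P(x,\emptyset)]$. For $v<s^{B}$ the first term satisfies $P(v,\emptyset)>v$ by the crossing property. For the second I claim $\overline{P}>s^{B}$, shown by iterated expectations: averaging $P(s,\emptyset)=\mathbb{E}[v\mid s,\emptyset]$ over the signal distribution conditional on nondisclosure returns the overall nondisclosure mean $s^{B}$, which decomposes as $s^{B}=q\,A+(1-q)\,\overline{P}$ with $A\equiv\mathbb{E}[P(s,\emptyset)\mid V,\emptyset]$. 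Crucially, conditional on a veracious signal and nondisclosure the signal is distributed exactly as $v\mid\emptyset$, a law with mean precisely $s^{B}$; writing $P(s,\emptyset)=\lambda(s)\,s+(1-\lambda(s))\,s^{B}$ as a veracity-weighted average of $s$ and the non-veracious anchor $s^{B}$, and using that the veracity weight $\lambda(s)=\Pr(V\mid s,\emptyset)$ is \emph{decreasing} in $s$ (the grain-of-salt property, Proposition \ref{joint}), the centered covariance is negative, so $A<s^{B}$. Hence $\overline{P}=(s^{B}-qA)/(1-q)>s^{B}>v$, and therefore $\mathbb{E}[P(s,\emptyset)\mid v]>v$.

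Assembling the pieces, never disclosing delivers date-$2$ price $P_{2}(\emptyset)=s^{B}>v$ and, at every later date, expected price $\mathbb{E}[P(s,\emptyset)\mid v]>v$, while early disclosure delivers $v$ at every date and additionally forfeits the cost $c^{E}\geq 0$; summed against any discount weights the former is strictly larger, contradicting strict optimality of early disclosure at $v<s^{B}$. Thus every early-disclosed value satisfies $v\geq s^{B}$. I expect the main obstacle to be the dynamic-with-costs bookkeeping together with making the iterated-expectations step rigorous once the endogenous late-disclosure decision distorts the conditional signal distribution given nondisclosure (so that, in general, $V$ and the nondisclosure event need not be independent). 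Restricting to the costless never-disclose deviation---whose own signal law is undistorted and whose price inputs are the well-behaved auxiliary-game price of Lemmas \ref{lemm:increasingproof}--\ref{lemm:slopelemma}---is precisely what tames these distortions and closes the argument.
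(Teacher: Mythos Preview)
Your approach is genuinely different from the paper's and, as written, has two gaps that prevent it from closing.

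\textbf{First gap (dynamic case).} The identity $s^{B}=qA+(1-q)\overline{P}$ that you obtain from iterated expectations requires three equalities to hold simultaneously: $\mathbb{E}[v\mid\emptyset]=s^{B}$, $\Pr(V\mid\emptyset)=q$, and $s\mid N,\emptyset\sim G$. All three are true when disclosure can only be early, because then the nondisclosure event depends on $v$ alone and is independent of both veracity and the non-veracious draw $x$. But in the dynamic setting the late-disclosure decision depends jointly on $v$ and $s$, so nondisclosure is \emph{not} independent of $\phi$ or of $x$, and all three equalities break. Your closing paragraph anticipates this difficulty and asserts that using the never-disclose deviation ``tames'' it because the deviating manager's \emph{own} signal law is undistorted. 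That is true but irrelevant: the quantity you need to bound, $\overline{P}=\int P(x,\emptyset)g(x)dx$, involves the \emph{market's} equilibrium price $P(\cdot,\emptyset)$, and the iterated-expectations identity you invoke is an identity about the \emph{market's} conditional distribution of $s$ given $\emptyset$, not the deviator's. So the argument for $\overline{P}>s^{B}$ does not go through once late disclosure is present.

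\textbf{Second gap (circularity).} Even in the early-only case, your step $A<s^{B}$ rests on $\lambda(s)=\Pr(V\mid s,\emptyset)$ being decreasing in $s$, for which you cite Proposition~\ref{joint}. But that proposition is stated and proved \emph{for a given threshold} $\widehat{v}$; at the point where Lemma~\ref{lemm:nojumpup} is being established, $\sigma^{E}$ is an arbitrary equilibrium strategy whose threshold structure has not yet been shown. For a general $\sigma^{E}$, $\lambda(s)$ is increasing in the silence probability $\sigma^{E}(s)$, and there is no a~priori monotone relation between $\sigma^{E}(s)$ and $s$, so the covariance sign is not pinned down.

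\textbf{How the paper avoids both issues.} The paper does not try to bound the never-disclose payoff in absolute terms. Instead it takes the infimum $\widetilde{v}$ of values that disclose early with probability one, picks a sequence $v_{\Gamma}^{n}\uparrow\widetilde{v}$ of values that do \emph{not} disclose early (and hence weakly prefer silence), and compares continuation payoffs. Conditional on a non-veracious signal the continuation payoffs coincide in the limit; conditional on a veracious signal, the market's conjecture that $\widetilde{v}$ discloses (while $v_{\Gamma}^{n}$ does not) \emph{raises} the nondisclosure price at $s=\widetilde{v}$ precisely because $\widetilde{v}<s^{B}$, so $\widetilde{v}$'s silence payoff strictly exceeds that of $v_{\Gamma}^{n}$. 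Since $v_{\Gamma}^{n}$ weakly prefers silence, $\widetilde{v}$ strictly does---a contradiction. This local comparison needs neither the iterated-expectations identity nor any global monotonicity of $\lambda$, and it works verbatim in the dynamic case because the non-veracious continuation term is the same for both types and cancels.
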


\begin{proof}[Proof of Lemma \ref{lemm:nojumpup}]
Suppose to the contrary, that some $v < s^B$ is disclosed with probability 1 under $\sigma^{E}$. Note that a manager with value $v=v_{min}$ gets payoff $v_{min}$ from disclosing, and a higher payoff from never disclosing (due to the positive probability of being uninformed); the same is true for all managers with values sufficiently close to $v_{min}$. So, if $\{v^{\sigma^{E}}\}$ is the set of manager values which disclose under $\sigma^{E}$, and $v^{*}=\text{inf } \{v^{\sigma^{E}}\}$, then we can find a sequence $v_{\Gamma}^{n} \rightarrow \widetilde{v}$ such that $v_{\Gamma}^{n}$ prefers not disclosing early whereas $\widetilde{v}$ does, for some $\widetilde{v}$. Note that, since $\widetilde{v} < s^B$, we have $\widehat{v}^{L}_{1}(\widetilde{v}) > \widehat{v}^{L}_{0}(\widetilde{v})$ (where $\widehat{v}^{L}_{\theta}(\widetilde{v})$ is as in Lemma B.4.). On the other hand, the expected payoff conditional on the signal being non-veracious is constant in the manager's value.

So, compare the manager's payoffs under $\widetilde{v}$ and $v_{\Gamma}^{n}$ for $n$ sufficiently large. If the signal is non-veracious, the manager obtains $\int_{v_{min}}^{v_{max}} \max \{P^{\sigma^{E}, \sigma^{L}}(s,\emptyset), v\}g(s)ds$, which is the same in the limit as $n \rightarrow \infty$ for $v_{\Gamma}^{n}$ as $\widetilde{v}$. If the signal is veracious, then since the market conjectures $v_{\Gamma}^{n}$ do not disclose and $\widetilde{v}$ does, and $\widetilde{v} < s^B$, we have that the payoff from nondisclosure, if the signal is veracious, is larger for the manager of type $\widetilde{v}$ than $v_{\Gamma}^{n}$. And by definition, the payoff from disclosure in the limit as $n \rightarrow \infty$ of $v_{\Gamma}^{n}$ is $\widetilde{v}$. 

Putting this together, we conclude that the payoff from nondisclosure is strictly higher in the limit for $\widetilde{v}$ than $v_{\Gamma}^{n}$ managers, whereas the payoff from disclosure is the same. But since the $v_{\Gamma}^{n}$  managers do at least as well from not-disclosing as disclosing, the $\widetilde{v}$ manager must do strictly better. This suggests a profitable deviation, in contradiction to the hypothesis that $\sigma^{E}$ was an equilibrium strategy. 
 \end{proof}

The previous argument shows that managers cannot disclose early with probability 1 if $v < s^B$. This does not yet imply that the expected payoff from not disclosing early can only equal $v- c^{E}$ if $v \geq s^B$; while the argument does show there cannot be any pure strategy equilibrium with this property, we must also consider the case of mixed strategies, where the manager is indifferent between disclosure decisions within some range.

\begin{figure}[t]
\setlength{\unitlength}{.5cm}
\begin {picture}(15.5,15.5)\thicklines
\put(2,0){\includegraphics[width=0.9\textwidth]{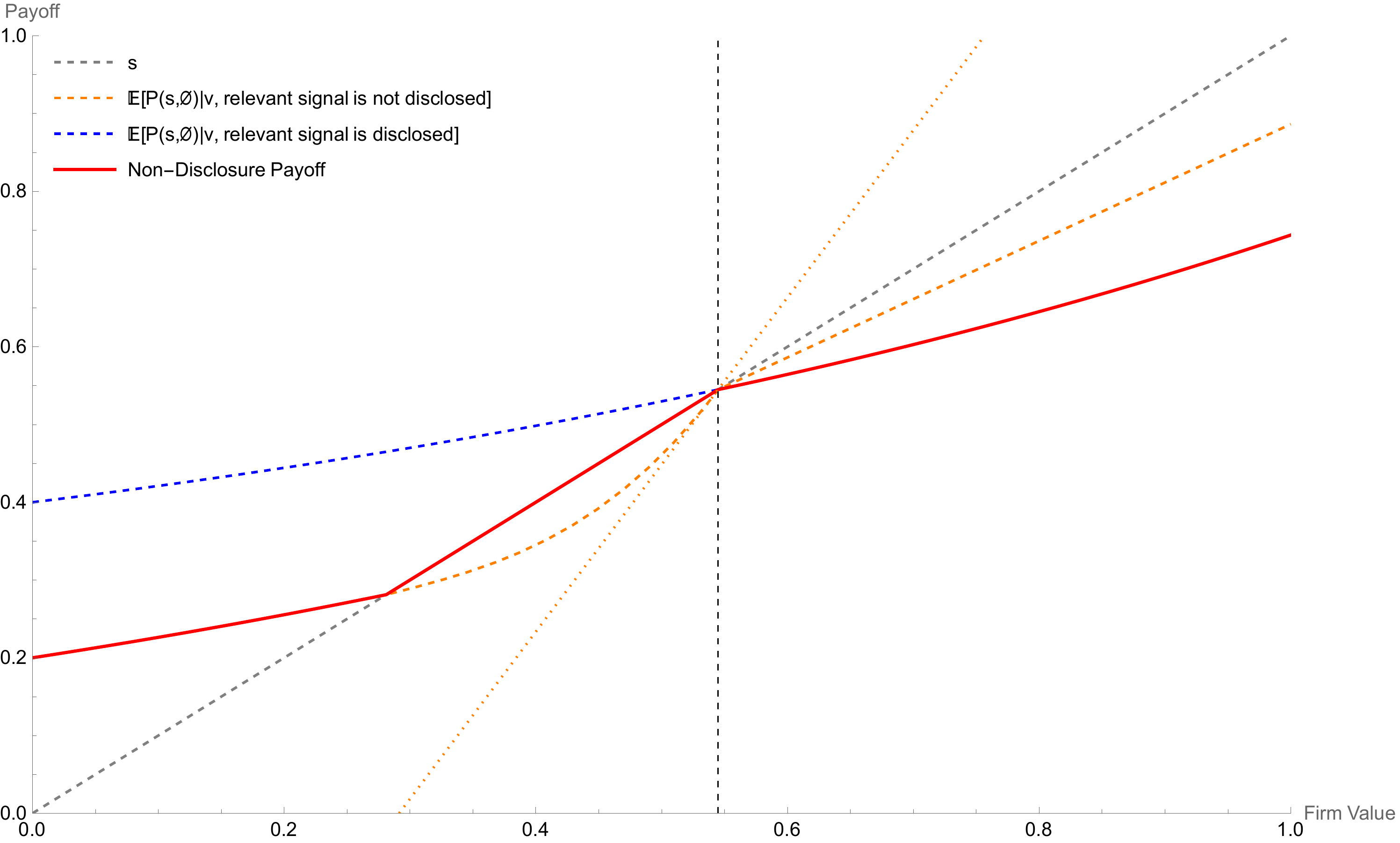}}

\put(16.5,16.8){\color{black}\small{$s^B$}}

\end{picture}

\footnotesize{\textbf{Figure B.1}: Graphical Proof of Lemma \ref{lemm:nomix} with $v_{min}=0$, $v_{max}=1$; if there are two points such that the price does not depend on the market's conjecture regarding whether a signal that is veracious would have been disclosed by the manager---which must be the case if the manager is indifferent between decisions at some value below $v^{*}< s^B$---then there must be some point in $(v^{*}, s^B)$ at which the slope of the expected price given $v$ is greater than 1.  }

\end{figure}

\begin{lemma}  
\label{lemm:nomix}
Suppose the manager is willing to disclose early given a value of $v$.  Then $v > s^{B}$. 
\end{lemma}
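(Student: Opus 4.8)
The plan is to argue by contradiction, exploiting the fact established in Lemma \ref{lemm:dyesignal} that $s^{B}$ is the \emph{unique} signal realization at which the nondisclosure price is insensitive to the market's conjecture about whether a veracious signal would have been disclosed. I would first reduce the statement to ruling out indifference at or below $s^{B}$. At $v=v_{min}$ the manager strictly prefers silence, since the positive probability of being uninformed keeps the nondisclosure price strictly above $v_{min}$ (the same observation used in the proof of Lemma \ref{lemm:nojumpup}); hence $v_{min}$ lies outside the early-disclosure set. By Lemma \ref{lemm:nojumpup}, no value strictly below $s^{B}$ is disclosed early with probability one, so any willingness to disclose at some $v \le s^{B}$ must occur through indifference (the manager mixes). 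Using continuity of the early-disclosure and nondisclosure payoffs in $v$, I would set $v^{*}=\inf\{v:\text{the manager is willing to disclose early at }v\}$ and conclude that, under the contradiction hypothesis, $v^{*}\in(v_{min},s^{B}]$ with exact indifference at $v^{*}$, i.e. the disclosure payoff $v^{*}-c^{E}$ equals the expected nondisclosure payoff.

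Next I would decompose the expected nondisclosure payoff, following the date-2 expectation in Section \ref{first}, as $q\,\max\{v-c^{L},\,P(s{=}v,\emptyset)\}+(1-q)\,\mathbb{E}_{x}[\max\{v-c^{L},\,P(s{=}x,\emptyset)\}]$, and isolate the ``expected price given $v$.'' The intuition driving the contradiction is a self-consistency argument: willingness to disclose early at $v^{*}$ means the market conjectures disclosure there, but by the pivoting property in Lemmas \ref{lemm:dyesignal} and \ref{lemm:slopelemma} (``varying $\theta$ twists the price through $s^{B}$''), conjecturing more disclosure-when-veracious at a point \emph{below} $s^{B}$ \emph{raises} the nondisclosure price at the veracious realization $s{=}v^{*}$; a higher nondisclosure price makes silence strictly more attractive, contradicting indifference. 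To make this rigorous in the presence of a continuum of values and mixing I would formalize it via slopes exactly as in Figure B.1. The key structural inputs are: (i) at any fixed conjecture the veracious-branch price $P(s{=}v,\emptyset)$ rises in $v$ with slope strictly below $1$, by Lemma \ref{lemm:PriceFunc}; (ii) the non-veracious branch's price component is independent of $v$; and (iii) $P(s^{B},\emptyset)=s^{B}$ with $s^{B}$ the sole conjecture-neutral signal.

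The formal contradiction runs as follows. Indifference at $v^{*}<s^{B}$ supplies a \emph{second} point (besides $s^{B}$) at which the relevant price must be pinned down independently of the conjecture, since the manager cannot be simultaneously willing to disclose and to stay silent unless the price she faces there is conjecture-neutral. Having two such conjecture-neutral anchors at $v^{*}<s^{B}$, reconciling the two indifference/anchoring conditions with the \emph{unit}-slope disclosure payoff $v-c^{E}$ forces the expected price given $v$ to have slope exceeding $1$ at some point of $(v^{*},s^{B})$. But the expected price given $v$ is a convex combination of the veracious branch, whose slope is strictly below $1$ by Lemma \ref{lemm:PriceFunc}, and a branch constant in $v$, so its slope is strictly below $1$ everywhere; moreover Lemma \ref{lemm:slopelemma} guarantees that varying the conjecture only pivots the price through $s^{B}$ and so cannot manufacture a steeper segment on $(v^{*},s^{B})$. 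This is impossible, so no value $\le s^{B}$ is willingly disclosed early; equivalently, willingness to disclose early implies $v>s^{B}$. (The same tension is visible in closed form for the pure early case, where it reduces to the strict inequality $\mathbb{E}_{x}[P(s{=}x,\emptyset)]>s^{B}$ obtained via the convexity-in-$q$ computation in the proof of Proposition \ref{pr2}.)

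The main obstacle, and where the argument does its real work, is the middle step: converting the geometric ``two conjecture-neutral anchors force slope $>1$'' picture into rigorous inequalities while simultaneously handling two complications. First, the late-disclosure option contributes to the nondisclosure payoff, but its contribution grows in $v$ at rate at most $1$ and is controlled precisely because the expected price under a non-veracious signal does not depend on the manager's value. Second, mixed strategies must be accommodated, which I would do by working with the infimum $v^{*}$ of the willing-to-disclose set and with one-sided limits of the price function rather than with derivatives. Establishing cleanly that indifference at $v^{*}<s^{B}$ genuinely creates a second point of conjecture-insensitivity—and that this, via the slope bound of Lemma \ref{lemm:PriceFunc} and the uniqueness in Lemma \ref{lemm:dyesignal}, cannot be reconciled with the unit-slope disclosure payoff—is the crux of the proof.
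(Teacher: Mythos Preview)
Your overall architecture matches the paper's: argue by contradiction, reduce (via Lemma \ref{lemm:nojumpup}) to mixing below $s^{B}$, and then derive a contradiction from the slope bound of Lemma \ref{lemm:PriceFunc} by exhibiting \emph{two} values at which the $\theta{=}0$ nondisclosure payoff $\pi^{\sigma^{E}}(v,0)$ equals $v-c^{E}$. One of those values is $s^{B}$, obtained exactly as you and the paper both say, via conjecture-neutrality at $s^{B}$ together with the fact that types at and above $s^{B}$ disclose with probability one.

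The gap is in how you obtain the \emph{other} crossing point. You assert that ``the manager cannot be simultaneously willing to disclose and to stay silent unless the price she faces there is conjecture-neutral,'' and hence that indifference at $v^{*}<s^{B}$ creates a second conjecture-insensitive anchor. This is false. Indifference only says that the nondisclosure payoff under the \emph{equilibrium} conjecture $\theta^{*}\in[0,1)$ equals $v^{*}-c^{E}$; the payoff at $v^{*}$ under other conjectures can and generally does differ. Lemma \ref{lemm:dyesignal} guarantees a \emph{unique} conjecture-neutral signal, and that uniqueness is precisely what your claim contradicts.

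The paper's fix is the step you relegated to ``intuition.'' For $v<s^{B}$, the pivoting direction (Lemmas \ref{lemm:dyesignal} and \ref{lemm:slopelemma}) gives $\pi^{\sigma^{E}}(v,0)\leq \pi^{\sigma^{E}}(v,\theta)$ for all $\theta$; applying this at the indifferent type yields $\pi^{\sigma^{E}}(\hat v,0)\leq \hat v-c^{E}$. Since $\pi^{\sigma^{E}}(v_{\min},0)>v_{\min}-c^{E}$, the intermediate value theorem then produces a point $v^{*}\in(v_{\min},\hat v]$ with $\pi^{\sigma^{E}}(v^{*},0)=v^{*}-c^{E}$. That is the second crossing of $\pi^{\sigma^{E}}(\cdot,0)$ with the line $v-c^{E}$, and from there your slope contradiction goes through verbatim. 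In short: do not claim conjecture-neutrality at $v^{*}$; instead use the monotonicity of $\pi^{\sigma^{E}}(v,\cdot)$ in $\theta$ for $v<s^{B}$ to drop from the equilibrium conjecture down to $\theta{=}0$, and then apply IVT.
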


\begin{proof}[Proof of Lemma \ref{lemm:nomix}]
Again we prove this result by contradiction. Note that the argument from Lemma \ref{lemm:nojumpup} implies that, if some $v < s^B$ obtains payoff $v-c^{E}$ in equilibrium, then this manager cannot disclose with probability 1; so, if managers are disclosing with positive probability, then they must be following a mixed strategy. And, if they follow a mixed strategy, their payoff from remaining silent must be equal to their payoff from disclosing, $v-c^{E}$.  

If the market conjectures a manager with value $v$ discloses with probability $\theta$, then the expected payoff is: 

\begin{equation*} 
\pi^{\sigma^{E}}(v, \theta)=q \frac{qv(1- p \theta)+(1-q) \int_{v_{min}}^{v_{max}} v \sigma^{E}(v)g(v)dv}{q(1- p \theta)+(1-q)\int_{v_{min}}^{v_{max}} \sigma^{E}(v)g(v)dv } + (1-q)\mathbb{E}[P^{\sigma^{E}}(s,\emptyset)].
\end{equation*}

\noindent Let $v^{*}$ be the highest $v$ such that $\pi^{\sigma^{E}}(v, 0)$ equals $v-c^{E}$; again, since $\pi^{\sigma^{E}}(v, 0)$ is larger than $v_{min}-c^{E}$, but also some manager is willing to disclose for $v < s^B$, it holds that $v^{*} \in (v_{min}, s^B)$. Now, by the definition of $s^B$, $\pi^{\sigma^{E}}(v, 0)=\pi^{\sigma^{E}}(v, 1)$, since the expectation is independent of $v$ and the price given $s=s^{B}$ is independent of $\theta$; for all $v < s^B$, since a pure strategy would (by the argument from Lemma \ref{lemm:nojumpup}) imply a profitable deviation, it follows that, for \emph{every} $v \in (v^{*}, s^B)$, the equilibrium expected payoff from non-disclosure is equal to $v-c^{E}$. For $v \geq s^B$, types must disclose with probability 1 in equilibrium, since the maximum possible payoff from nondisclosure increases at a rate less than 1, so that we cannot have $v-c^{E} \leq \pi^{\sigma^{E}}(v,\theta)$ for $v > s^{B}$ (under the conjecture of the proof).

To summarize, these arguments imply that: 

\begin{itemize} 
\item At $v^{*}$, the manager's payoff when the market conjectures the manager does not disclose is equal to $v^{*} - c^{E}$. 

\item At $s^B$, the manager's payoff when the market conjectures the manager discloses is equal to $s^B - c^{E}$, and is also equal to the expected price if the market were to conjecture the manager does not disclose. 
\end{itemize}

Together, these two bulletpoints show that the function $\pi^{\sigma^{E}}(v,0)$  is equal to $v-c^{E}$ for \emph{two} values of $v$; thus there must be some $\widetilde{v}$ where the slope of:

\begin{equation*} 
q \cdot \frac{q \cdot v+(1-q) \int_{v_{min}}^{v^{L}(v)} \widetilde{v} \cdot \sigma^{E}(\widetilde{v})g(\widetilde{v})dv}{q+(1-q)\int_{v_{min}}^{v^{L}(v)} \sigma^{E}(\widetilde{v})g(\widetilde{v})d\widetilde{v} },
\end{equation*}

\noindent is greater than 1; but we have argued previously (Lemma \ref{lemm:PriceFunc}) that this function divided by $q$ (i.e., the price function) has a slope less than 1. Thus, the slope when multiplying by $q$ is even smaller, and in particular still everywhere less than 1. This contradiction establishes that if $v$ is such that the manager's payoff from early disclosure is equal to the expected payoff from not disclosing early, it must be that $v \geq s^B$. 
\end{proof}

\noindent The previous two Lemmas imply that the manager can only possibly be willing to disclose early if $v$ is such that $v > s^{B}$.

\begin{lemma}  \label{lemm:dyesignalsfirst}
Suppose $\sigma^{E}, \sigma^{L}$ induce a price function $P^{\sigma^{E}, \sigma^{L}}(s, \emptyset)$ which is increasing with a slope less than 1. If $\sigma^{E}$ is part of an equilibrium, then it is characterized by a threshold (possibly at the boundary), say $v^{E}$ above which the manager discloses early and below which the manager does not disclose early. Furthermore, if the manager is indifferent between disclosing early and not at $v^{E}$, then this threshold determines an equilibrium.   
\end{lemma}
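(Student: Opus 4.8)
The plan is to establish a single-crossing (increasing-differences) property for the net benefit of disclosing early and then read off the threshold structure, with the tie-breaking convention disposing of the indifferent type. First I would restrict attention to the region $v > s^{B}$, where $s^{B}$ is the signal identified in Lemma \ref{lemm:dyesignal}: Lemmas \ref{lemm:nojumpup} and \ref{lemm:nomix} already show the manager is never willing to disclose early when $v \leq s^{B}$, so on that range the manager strictly prefers silence and the threshold property is automatic. For $v > s^{B}$, define the net gain
\[
\Delta(v) \;=\; U_{E}(v) \;-\; \big[\, q\, W_{V}(v) + (1-q)\, W_{N}(v)\,\big],
\]
where $U_{E}(v)$ is the present-value payoff from disclosing early, and $W_{V}(v)$, $W_{N}(v)$ are the continuation payoffs from \emph{not} disclosing early conditional on the signal turning out veracious ($s=v$) and non-veracious ($s\sim G$). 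Each continuation embeds the date-4 option $\max\{\text{disclose late},\ \text{stay silent}\}$, so that $W_{V}(v)=\max\{U_{L}(v),\,\Pi(P(v,\emptyset))\}$ and $W_{N}(v)=\mathbb{E}_{s\sim G}\big[\max\{U_{L}(v),\,\Pi(P(s,\emptyset))\}\big]$, with $U_{E}$ and $U_{L}$ affine and increasing in $v$, and $\Pi(\cdot)$ the discounted continuation induced by a given nondisclosure price.

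The heart of the argument is to show that the right-difference quotient of the bracketed term never exceeds that of $U_{E}$. For the non-veracious branch the key observation is that the signal is independent of $v$ and, by construction, $P(s,\emptyset)$ does not depend on the manager's own value; hence $v$ enters $W_{N}$ \emph{only} through the affine disclosure payoff inside the maximum, and the resulting slope equals the disclosure slope weighted by $\Pr(\text{disclose late}\mid N)\in[0,1]$, bounded above by the slope of $U_{E}$. For the veracious branch $s=v$, so $W_{V}$ is the maximum of the affine disclosure payoff and the composition $\Pi(P(v,\emptyset))$; here I would invoke the hypothesis that $P(\cdot,\emptyset)$ is increasing with slope strictly less than $1$ (the output of Lemmas \ref{lemm:increasingproof} and \ref{lemm:PriceFunc}) to bound the slope of the silent alternative, and since the disclosure alternative carries the same slope as $U_{E}$, the maximum of the two again has slope at most that of $U_{E}$. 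Combining the branches with weights $q$ and $1-q$, and using that $W_{V}$ and $W_{N}$ are continuous as maxima of continuous functions (so there are no jumps, only convex kinks at which the one-sided quotients still obey the bound), I conclude that $\Delta$ is nondecreasing, which is the increasing-differences property.

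Given monotonicity of $\Delta$, the set $\{v:\Delta(v)>0\}$ on which the manager strictly prefers early disclosure is an upper interval and the complementary silence region is a lower interval; the tie-breaking rule assigns the indifferent type to silence. Hence early disclosure is governed by a cutoff $v^{E}$ (with $v^{E}=v_{max}$ when $\Delta$ never becomes positive, the boundary case), and the manager discloses early iff $v>v^{E}$. For the \emph{furthermore} clause I would verify fixed-point consistency: conjecturing the threshold $v^{E}$ produces an induced price that is again increasing with slope less than $1$, so the same single-crossing computation applies and the manager's best response is exactly the threshold $v^{E}$, with indifference at $v^{E}$ and strict preferences on either side. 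The main obstacle is the veracious branch of the increasing-differences step: because $v$ enters $W_{V}$ simultaneously through the disclosure payoff \emph{and} through the nondisclosure price $P(v,\emptyset)$ via $s=v$, the slope bound must combine the envelope structure of the maximum with the strict slope bound on the price, and it must be carried out with difference quotients rather than derivatives to accommodate the option's kink while keeping the discount bookkeeping so that the affine disclosure slope matches that of $U_{E}$.
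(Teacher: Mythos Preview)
Your approach matches the paper's: decompose the continuation payoff from not disclosing early into the veracious and non-veracious branches, bound the slope of each by that of the early-disclosure payoff, and read off the threshold from the resulting single-crossing property. The paper's proof proceeds in exactly this way.

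The one place you are looser than the paper concerns \emph{strictness}. Your argument yields only that $\Delta$ is nondecreasing, and in the ``furthermore'' step you assert ``strict preferences on either side'' without having established it. The paper obtains strict monotonicity, and the source of the strict inequality is the \emph{non-veracious} branch, not the veracious one you flag as the main obstacle. Restricting attention to values $v$ at which indifference is even possible---in particular $v - c^{E} < P(v_{max},\emptyset)$, since otherwise early disclosure is automatic---the paper notes that $v - c^{L} < v - c^{E} < P(v_{max},\emptyset)$ (otherwise disclosing early is dominated by waiting), so for a positive-measure set of signals $s$ the late-disclosure option is \emph{not} exercised when the signal is non-veracious. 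On that event the integrand in $W_{N}$ is locally constant in $v$, which forces the slope of $W_{N}$ strictly below that of $U_{E}$; combined with the at-most-unit slope of $W_{V}$, the convex combination $qW_{V}+(1-q)W_{N}$ has slope strictly below that of $U_{E}$. With strictness in hand the threshold characterization follows without appealing to the tie-breaking convention, and you rule out any interval of indifference on which $\sigma^{E}$ could in principle mix.
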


\begin{proof}[Proof of Lemma \ref{lemm:dyesignalsfirst}] 
We only consider equilibria where early disclosure occurs with positive probability,  since otherwise the threshold trivially is at $v_{max}$.  Clearly,  if the manager discloses early,  then the payoff is $v-c^{E}$, which increases in $v$ at a rate of 1.  We show that the payoff from nondisclosure increases  in $v$ at a rate less than 1.  Assume that $v$ is such that $v- c^{E} < P^{\sigma_{E}, \sigma_{L}}(v_{max}, \emptyset)$,  since otherwise the manager would always disclose early.  

Let $\widetilde{v}$ be the infimum over manager values which are willing to disclose early.  By the previous,  we have that $\widetilde{v} \geq s^{B}$.  We show that all values with $v > \widetilde{v}$ strictly prefer to disclose early.  There are two cases to consider: (i) signal $s$ is uninformative; (ii) signal $s$ is informative. 

Consider the first case.  The manager's payoff, as a function of the signal $s$, is: 

\begin{equation*} 
\max\{v-c^{L}, P^{\sigma^{E}, \sigma^{L}}(s, \emptyset)\}. 
\end{equation*}

\noindent Since the signal is uninformative,  $P^{\sigma^{E}, \sigma^{L}}(s, \emptyset)$ is independent of $v$,  so that this expression is weakly increasing in $v$ for all $s$;  thus,  $\int_{v_{min}}^{v_{max}} \max\{v-c^{L}, P^{\sigma^{E}, \sigma^{L}}(s, \emptyset)\} g(s) ds$ is also increasing in $v$, but at a rate less than $1$ (since the integral does not change in the event that $P^{\sigma^{E}, \sigma^{L}}(s) > v-c^{L}$);  this involves the manager disclosing when the signal is below $w^{L}(v)$ and not when it is above $w^{L}(v)$.  In particular,  as long as $v- c^{E} < P^{\sigma_{E}, \sigma_{L}}(v_{max}, \emptyset)$,  and since $v -c^{E} > v-c^{L}$ (otherwise disclosing early is strictly dominated), we know that there are a range of signal realizations such that the manager would \emph{not} disclose late. 

Now consider the second case.  Since $v > s^{B}$,  we therefore have that, no matter what inference the market makes about the probability of veracity following signal $v$,   the market price is lower than $v$ itself.  Indeed, at $v=s^{B}$, then the market price is constant in belief about veracity.  Now,  the change in the price following a \emph{veracious signal} for a manager with type $\widetilde{v}$ versus type $v$ is bounded by the change in the price if non-veracious signals are not-disclosed.  Since the rate of change of the price for such signals in the auxiliary game is less than 1,  we have that the manager's payoff increases at a rate less than 1 if they are in a region where they do not disclose late following veracious signals.  If this condition does not hold, then the payoff from not disclosing early increases at a rate equal to exactly 1. 

Putting this together, we see that the total rate of change in the payoff from not disclosing early is a convex combination of a function with slope at most 1 and a function with slope strictly less than 1.   Importantly,  since the weights on each of these terms themselves do not vary with the signal (since veracity is independent of $v$),  we have that the overall rate of change in the payoff from not disclosing early is increasing at a rate strictly less than 1.  It follows that if some manager with value $v'$ weakly prefers to disclose early,  then any manager with $v > v'$ strictly prefers to disclose early,  and conversely if a manger with value $v'$ weakly prefers to not disclose early,  then any manager with value $v < v'$ strictly prefers to not disclose early.  Thus, all equilibria $\sigma^{E}$ are characterized by thresholds. 
\end{proof}

\subsection{Late Thresholds} 
This section presents our technical results on the late disclosure threshold, $v^{L}(s)$. We show that standard arguments from past work can be applied to characterize the late disclosure threshold. We also show that we can take the late disclosure threshold to be continuous in the external signal. Finally, we show that the disclosure thresholds in the auxiliary game also determine the thresholds in the original game, with market beliefs about veracity ``switching'' at $s^{B}$. 

We start by showing that the auxiliary game has well-behaved equilibrium disclosure strategies. Toward the first step, we follow the usual procedure of showing that higher types benefit from (late) disclosure more than lower types. When considering dynamic disclosure, these results make use of our earlier results that the early disclosure equilibrium must be characterized by a threshold. We denote this threshold $v^{E}$. We consider the expected payoff of the manager as a function of an arbitrary late-disclosure threshold,  $\overline{v}$; without loss, we take this to be less than $v^{E}$,  since the market infers that all managers with values $v > v^{E}$ disclosed early.  Define:

\begin{equation} 
H(s, \overline{v}, \theta):=\mathbb{E}[v \lvert s,  v \leq \overline{v}, \theta],
\end{equation}

\noindent as the equilibrium nondisclosure payoff given a threshold in the auxiliary game (given $\theta$), where we note that we do not need to consider the early disclosure strategy because $\overline{v} \leq v^{E}$.

\begin{lemma} \label{lemm:secondperiodDD}
The function $H(s, \overline{v}, \theta)- \overline{v}$ is decreasing in $\overline{v}$.
\end{lemma}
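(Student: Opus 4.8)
The plan is to prove the equivalent statement that $\partial_{\overline v}H(s,\overline v,\theta)<1$ for every $s$ and $\theta$, so that $H-\overline v$ is strictly decreasing. First I would record $H$ as a posterior mean. Writing $A=q(1-p\theta)$ for the weight the market assigns to the veracity event (under which $v=s$, coming from the uninformed plus the informed-but-silent managers), the silence-plus-signal posterior consists of an atom of mass $A$ at $v=s$, an \emph{untruncated} uninformed component of mass $(1-q)(1-p)$ with mean $\mu$, and an informed component of mass $(1-q)pG(\overline v)$ distributed as $g$ truncated to $[v_{min},\overline v]$. Raising $\overline v$ enlarges only the last pool, adding probability at the single point $v=\overline v$ at rate $(1-q)pg(\overline v)$, so differentiating the ratio gives the clean ``mass-at-a-point'' formula
\[ \partial_{\overline v}H=\frac{(1-q)p\,g(\overline v)}{T(\overline v)}\,(\overline v-H),\qquad T(\overline v)=A+(1-q)\big((1-p)+pG(\overline v)\big). \]
Equivalently I would decompose $H=w(\overline v)\,s+(1-w(\overline v))\,M(\overline v)$, where $w=A/T$ is the posterior probability of veracity and $M(\overline v)=\Pr(I\mid\emptyset)\,\mathbb{E}[v\mid v\le\overline v]+\Pr(U\mid\emptyset)\,\mu$ is exactly the benchmark (Dye) nondisclosure price at threshold $\overline v$.

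This decomposition is convenient because both ingredients are already understood: $w$ is strictly decreasing in $\overline v$ (a larger silence pool makes veracity relatively less likely), and $M(\overline v)-\overline v$ is strictly decreasing, i.e. $M'<1$, which is the benchmark rate result established in Lemma \ref{lemm:BB} from log-concavity. Differentiating the decomposition gives $\partial_{\overline v}(H-\overline v)=w'(s-M)+(1-w)M'-1$. Since $w'<0$, whenever $s\ge M(\overline v)$ the first term is nonpositive while the remaining terms sum to $(1-w)M'-1<0$, so the inequality is immediate. Moreover $s$ enters only through the location of the veracity atom, and one checks from the formula above that $\partial_{\overline v}H$ is decreasing in $s$ (a larger $s$ raises $H$, hence lowers $\overline v-H$); so the derivative is largest at $s=v_{min}$ and it suffices to verify the bound there.

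The main obstacle is therefore the regime $s<M(\overline v)$, where $w'(s-M)>0$ pushes against us. Here I would substitute the explicit forms $M'=\frac{pg(\overline v)}{(1-p)+pG(\overline v)}(\overline v-M)$ and $w'=-\,w(1-w)\frac{pg(\overline v)}{(1-p)+pG(\overline v)}$, take the worst case $s=v_{min}$, and use $M'<1$ together with $R<T$ (where $R=(1-q)((1-p)+pG(\overline v))$) to reduce the claim to the single inequality $(1-q)p\,g(\overline v)\,(\overline v-v_{min})<q(1-p\theta)+2(1-q)\big((1-p)+pG(\overline v)\big)$ — that is, to controlling the reverse-hazard-type quantity $g(\overline v)(\overline v-v_{min})$ against $(1-p)+pG(\overline v)$. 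Log-concavity (a monotone reverse hazard rate, as in Bagnoli–Bergstrom) is precisely what tames this term, exactly as in the argument behind $M'<1$. The one delicate point is that this bound is tightest near $\overline v=v_{max}$ for strongly right-skewed densities, where the reverse hazard can be large; this is where the standing restriction $\overline v\le v^{E}$ does the remaining work, since informed values above $v^{E}$ have already disclosed early and the late threshold never reaches the top of the support. Combining the two cases yields $\partial_{\overline v}(H-\overline v)<0$ on the relevant range, which is the claim.
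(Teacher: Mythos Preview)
Your decomposition $H = w\,s + (1-w)M$ and the mass-at-a-point derivative formula are correct, and the case $s \ge M(\overline v)$ is handled cleanly. The gap is in the case $s < M(\overline v)$. The reduced inequality you arrive at,
\[
(1-q)\,p\,g(\overline v)\,(\overline v - v_{min}) \;<\; q(1-p\theta) + 2(1-q)\bigl((1-p)+pG(\overline v)\bigr),
\]
does \emph{not} follow from log-concavity and is in fact false for admissible primitives. Take $g(v)=3v^{2}$ on $[0,1]$ (log-concave), $p=0.9$, $\theta=0$, $q=0.01$, $\overline v=0.7$: the left side is about $0.92$ while the right side is about $0.82$. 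Log-concavity controls $g(\overline v)\bigl(\overline v - M(\overline v)\bigr)$---that is exactly the content of $M'<1$---but it does \emph{not} control $g(\overline v)(\overline v-v_{min})$. Your reduction, by bounding $\alpha R(\overline v - M)<R^{2}$ and then treating the remainder separately, throws away precisely the slack in $M'<1$ that is needed to close the argument. The closing appeal to $\overline v\le v^{E}$ is also unsubstantiated: you give no quantitative link between $v^{E}$ and the region where your inequality fails, and in the dynamic case $v^{E}$ is not yet determined at this point (the lemma is being used to establish uniqueness of the late threshold on the way to pinning down $v^{E}$).

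The paper organizes the same decomposition differently and thereby avoids the trap. It writes $H = (i)+(ii)$ with $(i)=s\cdot\Pr(\phi=V\mid s,\emptyset)$ and $(ii)=\mathbb{E}[v\,\mathbbm{1}_{\phi=N}\mid s,\emptyset]$, and observes that $(i)$ is \emph{non-increasing} in $\overline v$ (since $\Pr(V\mid\cdot)$ shrinks as the non-veracious silence pool grows). All of the upward slope therefore lives in $(ii)$, and the paper bounds that slope via the benchmark result (Lemma~\ref{lemm:BB}). The crucial difference is that this route never introduces the uncontrolled factor $\overline v - v_{min}$; the only gap that appears is $\overline v - M(\overline v)$, which is exactly what log-concavity tames.
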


\begin{proof}[Proof of Lemma \ref{lemm:secondperiodDD}] 
 Note that: 
\begin{equation} 
H(s, \overline{v}, \theta)=\overbrace{s \Pr(\phi=V \lvert s,  v \leq  \overline{v}, \theta)}^{(i)} \\ + \overbrace{\mathbb{E}[ v \lvert \phi =N,  s,  v \leq  \overline{v}, \theta] \Pr(\phi =N \lvert  s,  v \leq  \overline{v}, \theta)}^{(ii)}  \label{eq:decomp}
\end{equation}

\noindent by the Law of Iterated Expectations.  Note that $H(s, \overline{v}, \theta)$ is differentiable in $ \overline{v}$,  since $ \overline{v}$ only shows up as the right endpoint in the integrals which define these conditional expectations.  We show that $\frac{d}{d \overline{v}} H(s, \overline{v}, \theta) < 1$.  

Consider (ii): we show that it is increasing at a rate less than 1. First, write the expression as $ \mathbb{E}[v \mathbbm{1}_{\phi =N} \lvert  s,  v \leq  \overline{v}, \theta].$ 
Note that $s$ only influences the probability that $\phi=N$, and in particular does not influence the distribution over $v$ (since if $\phi=N$ then $s$ conveys no information about $v$). Furthermore, $\theta$ has no direct influence on the manager's strategy on the event that $\phi=N$, by definition; adjusting this probability if necessary, we can drop this from the conditioning event. Furthermore, $\mathbb{E}[v \mathbbm{1}_{\phi =N} \lvert     v \leq  \overline{v}, \theta]$ increases at a rate less than $\mathbb{E}[v \lvert    v \leq  \overline{v}, \theta]$; indeed, both $\mathbb{E}[v \mathbbm{1}_{\phi =N} \lvert    v \leq  \overline{v}, \theta]$ and $\mathbb{E}[v (1-\mathbbm{1}_{\phi =N}) \lvert   v \leq  \overline{v}, \theta]$ are increasing in $\overline{v}$, and $\mathbb{E}[v \lvert  v \leq  \overline{v}, \theta]$ is the sum of these two.  Furthermore,  by Lemma \ref{lemm:BB}, since $g$ is log-concave, we have that $\mathbb{E}[v \lvert  v \leq  \overline{v}]$ increases at a rate less than 1.  

On the other hand, consider (i). We can write this term as 
$
 s\frac{q(1-p \theta)}{q(1-p \theta)+ (1-q)((1-p)+p\int_{v_{min}}^{\overline{v}}  g(v)dv)}. 
$

\noindent Upon inspection, this is \emph{decreasing} in $\overline{v}$, since as $\overline{v}$ increases the denominator increases and the numerator is constant. 

Putting this together, we have shown that $ \frac{d}{d \overline{v}}H(s, \overline{v}, \theta)$ is the sum of a term with slope less than 1 and a term which is non-increasing; thus, $ \frac{d}{d \overline{v}}H(s, \overline{v}, \theta) <1$.\end{proof} 

\noindent Since Lemma \ref{lemm:latethresh} states that any late-disclosure equilibrium must be characterized by a threshold (given $s$), Lemma \ref{lemm:secondperiodDD} implies that the equilibrium strategy in the auxiliary game is unique conditional on $\theta$ even when disclosure costs are present, provided log-concavity holds. Indeed, the fact that $H(s, \widehat{v}, \theta) - \widehat{v}$ is strictly decreasing in $\widehat{v}$ means that either (i) there exists a unique $\widehat{v}$ solving $H(s, \widehat{v}, \theta)-\widehat{v}=-c^{L}$,  or (ii) $H(s, \widehat{v}, \theta)-\widehat{v} \geq -c^{L}$ for all $\widehat{v}$; in the latter case, no manager types disclose. Note that in particular we cannot have $H(s, 0, \theta) \leq -c^{L}$, since even if $c^{L}=0$, we assume the manager is uninformed with positive probability, meaning that the left hand side must be positive.

It remains to show that $\theta$ is pinned down in equilibrium, since these results leave open the possibility for multiplicity in the case that the market makes different conjectures about the manager's strategy regarding disclosure. Toward ruling this out,\footnote{Note that this can only be ruled out for late disclosure, but not early disclosure, as discussed in the main text.} define $\widehat{v}^{L}_{\theta}$ as the unique second period threshold in the auxiliary game (i.e., when the market conjectures that veracious signals are disclosed by informed with probability $\theta$). 

Let $v^{L}(s)$ denote a candidate equilibrium late disclosure threshold in the original game---in Lemma \ref{lemm:latethresh}, this was denoted by $v^{L, \sigma^{E}}(s)$, but we suppress the implicit potential dependence on $\sigma^{E}$ since results earlier in this Appendix imply this is a threshold.

\begin{lemma} \label{lemm:interiorvs}
There exists some $\alpha \in [0,1]$ such that $v^{L}(s)= \alpha \widehat{v}^{L}_{1}(s) + (1- \alpha) \widehat{v}^{L}_{0}(s)$ is an arbitrary threshold. 
\end{lemma}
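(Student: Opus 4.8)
The plan is to show that any equilibrium late-disclosure threshold $v^{L}(s)$ of the original game coincides with the auxiliary-game threshold $\widehat{v}^{L}_{\theta}(s)$ for a value of $\theta$ that is forced by consistency, and then to use the monotonicity of $\widehat{v}^{L}_{\theta}(s)$ in $\theta$ to place that value in the closed interval with endpoints $\widehat{v}^{L}_{0}(s)$ and $\widehat{v}^{L}_{1}(s)$. Every point of such an interval is a convex combination of its endpoints, which is exactly the asserted representation.

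First I would argue that the original game and the auxiliary game differ only in the treatment of veracious signals, and that this difference is captured by a single scalar. Fix a signal realization $s$. When the signal is veracious an informed manager observes $v=s$, so she discloses precisely when $s>v^{L}(s)$, mixing only at the knife-edge $s=v^{L}(s)$. Hence the probability that an informed manager discloses a veracious signal equals $\theta(s):=\mathbbm{1}_{s>v^{L}(s)}$ away from the indifference point, and equals some value in $[0,1]$ at it. Because the nondisclosure price, and therefore the indifference condition pinning down the threshold, depends on the manager's behavior following veracious signals \emph{only} through this probability, the original-game equilibrium condition at $s$ is identical to the auxiliary-game equilibrium condition with $\theta=\theta(s)$. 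By uniqueness of the auxiliary threshold given $\theta$ (Lemma \ref{lemm:secondperiodDD}), this yields $v^{L}(s)=\widehat{v}^{L}_{\theta(s)}(s)$.

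Next I would invoke monotonicity of the threshold in $\theta$. Combining Lemma \ref{lemm:dyesignal} (the price at $s^{B}$ is independent of $\theta$) with Lemma \ref{lemm:slopelemma} (the slope of the price is decreasing in $\theta$) shows that all price schedules pass through a common point at $s=s^{B}$ and are ordered by flatness, so at each fixed $s$ the price is monotone in $\theta$. Since a higher nondisclosure price raises the value at which the manager is indifferent, $\widehat{v}^{L}_{\theta}(s)$ is monotone in $\theta$ on $[0,1]$, and a monotone function of $\theta$ on $[0,1]$ takes values only between its endpoint values. Thus $v^{L}(s)=\widehat{v}^{L}_{\theta(s)}(s)$ lies in the closed interval with endpoints $\widehat{v}^{L}_{0}(s)$ and $\widehat{v}^{L}_{1}(s)$, and writing it as $\alpha\widehat{v}^{L}_{1}(s)+(1-\alpha)\widehat{v}^{L}_{0}(s)$ for the appropriate $\alpha\in[0,1]$ gives the claim.

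The main obstacle I anticipate is making the first step airtight: I must confirm that the consistency requirement $v^{L}(s)=\widehat{v}^{L}_{\theta(s)}(s)$ with $\theta(s)=\mathbbm{1}_{s>v^{L}(s)}$ actually admits a solution and that no equilibrium threshold escapes the band between $\widehat{v}^{L}_{0}(s)$ and $\widehat{v}^{L}_{1}(s)$ via a conjecture inconsistent with realized behavior. Here I would lean on the ``twisting'' structure noted after Lemma \ref{lemm:slopelemma}: for $s>s^{B}$ one has $\widehat{v}^{L}_{1}(s)<\widehat{v}^{L}_{0}(s)$, while Proposition \ref{lm4} gives $v^{L}(s)<s$, so $\theta(s)=1$ is the consistent conjecture; the reverse inequalities hold for $s<s^{B}$, giving $\theta(s)=0$; and the two thresholds coincide at $s=s^{B}$. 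This case analysis establishes both existence and that $v^{L}(s)$ never leaves the interval, which is all the convex-combination representation requires, with uniqueness of $\alpha$ (equivalently of $\theta$) deferred to the subsequent lemmas.
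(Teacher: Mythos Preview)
Your core argument is correct but takes a genuinely different route from the paper. The paper's proof is a direct sandwich: it observes that the equilibrium nondisclosure price $P(s,\emptyset)$ is necessarily a convex combination of $H(s,v^{L}(s),0)$ and $H(s,v^{L}(s),1)$, and then uses the fact that $H(s,\overline{v},\theta)-\overline{v}$ is strictly decreasing in $\overline{v}$ (Lemma~\ref{lemm:secondperiodDD}) to conclude that the indifference condition $v^{L}(s)-c^{L}=P(s,\emptyset)$ cannot be satisfied outside the interval with endpoints $\widehat{v}^{L}_{0}(s)$ and $\widehat{v}^{L}_{1}(s)$. You instead identify the particular $\theta(s)$ realized in equilibrium and invoke monotonicity of $\widehat{v}^{L}_{\theta}(s)$ in $\theta$ via the twisting structure (Lemmas~\ref{lemm:dyesignal} and~\ref{lemm:slopelemma}). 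Both arguments sandwich the threshold, but the paper does so through monotonicity in the threshold variable, while you do so through monotonicity in the conjecture parameter. Your route is slightly longer but has the advantage of already pinning down \emph{which} $\theta$ obtains, thereby anticipating the content of Lemma~\ref{lemm:uniquethresh}.

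Your final ``obstacle'' paragraph is unnecessary and introduces a circularity you should drop. The lemma only asserts that \emph{any} candidate equilibrium threshold admits the convex-combination form; it does not assert existence or internal consistency of such a threshold. Your appeal to Proposition~\ref{lm4} to resolve consistency is circular, since that proposition's proof relies on Lemma~\ref{lemm:uniquethresh}, which in turn invokes the present lemma. Fortunately, once you recognize that consistency is not part of the claim, paragraphs two and three already constitute a complete proof and the problematic reference can simply be deleted.
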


\begin{proof}[Proof of Lemma \ref{lemm:interiorvs}]
Letting $P(s, \emptyset)$ denote the equilibrium price function, we note that $P(s, \emptyset)$ must itself be a convex combination of $H(s, \overline{v},1)$ and $H(s, \overline{v}, 0)$, since it is the expectation over the event that the manager discloses a veracious signal, given all other parameters. Assume $H(s, v^{L}(s),1) \geq H(s, v^{L}(s), 0)$; analogous arguments hold when this inequality is flipped. Then  $H(s, v^{L}(s),1) \geq P(s, \emptyset) \geq H(s, v^{L}(s), 0)$. So if we have $v^{L}(s) - c^{L} > H(s, v^{L}(s),1)$, then $v^{L}(s) - c^{L} 
 > P(s, \emptyset)$, and if $v^{L}(s)-c^{L} <  H(s, v^{L}(s),1)$, then $v^{L}(s) - c^{L} < P(s, \emptyset)$.  So, since the conditional expectations are all increasing in $v^{L}(s)$, if the manager is indifferent between late disclosure decision---i.e., $v^{L}(s)-c^{L}= P(s, \emptyset)$---then by this argument, $v^{L}(s)$ cannot be outside of the interval defined by $\widehat{v}^{L}_{1}(s)$ and $\widehat{v}^{L}_{0}(s)$.
\end{proof}

By Lemma \ref{lemm:interiorvs}, any candidate threshold must be in between the ``extreme'' thresholds in the auxiliary game, $\widehat{v}^{L}_{1}(s)$ and $\widehat{v}^{L}_{0}(s)$, so for the moment we consider the behavior of these thresholds \emph{as a function of} $s$. Note that it is immediate that $\widehat{v}^{L}_{\theta}(s)$ is increasing in $s$ for any $\theta\in[0,1]$; indeed, this follows since $H(s, \widehat{v}, \theta) - \widehat{v}$ is increasing in both $s$ and decreasing in $\widehat{v}$, so that if $s$ increases $\widehat{v}$ must increase as well, in order to make the manager indifferent between the disclosure decisions. The next two Lemmas present some additional properties of these functions.

\begin{lemma}  \label{lemm:condthreshold} The thresholds
$\widehat{v}^{L}_{1}(s)$ and $\widehat{v}^{L}_{0}(s)$ are continuous in $s$.
\end{lemma}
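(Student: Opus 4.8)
The plan is to treat $\widehat{v}^{L}_{\theta}(s)$ (for $\theta\in\{0,1\}$) as the solution of a one-dimensional \emph{monotone} fixed-point equation in the threshold, and to extract continuity from the intermediate value theorem. By Lemma \ref{lemm:latethresh} the late-disclosure decision is a threshold given $s$, and by Lemma \ref{lemm:secondperiodDD} the map $\overline{v}\mapsto H(s,\overline{v},\theta)-\overline{v}$ is strictly decreasing; hence for each $s$ there is at most one interior $\overline{v}$ solving $H(s,\overline{v},\theta)-\overline{v}=-c^{L}$, and $\widehat{v}^{L}_{\theta}(s)$ equals that solution when it exists in $[v_{min},v_{max}]$ and equals $v_{max}$ otherwise (no informed type disclosing late). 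Writing $\Phi(s,\overline{v}):=H(s,\overline{v},\theta)-\overline{v}+c^{L}$, the statement reduces to showing that the unique zero (or the boundary value $v_{max}$) of the strictly decreasing function $\Phi(s,\cdot)$ moves continuously with $s$.

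First I would record the two structural facts that drive the argument. From the closed form of the auxiliary-game price in \eqref{eq:priceexpressionproof}, the signal $s$ enters only the numerator, linearly, over an $s$-independent denominator that is bounded below (since $q(1-p\theta)>0$ because $q\in(0,1)$ and $p<1$); thus for each fixed $\overline{v}$ the map $s\mapsto H(s,\overline{v},\theta)$ is affine and increasing, so $\Phi(\cdot,\overline{v})$ is continuous in $s$. It is worth emphasizing that, unlike the equilibrium price in the original game (which jumps at the conjectured threshold, cf.\ Lemma \ref{price}), in the auxiliary game veracious informed managers disclose late with the \emph{exogenous} probability $\theta$ irrespective of whether $v\gtrless\overline{v}$, so no indicator $\mathbbm{1}_{s>\overline{v}}$ appears and $H(\cdot,\overline{v},\theta)$ has no discontinuity. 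Second, $\Phi(s,\cdot)$ is continuous and strictly decreasing in $\overline{v}$, and the lower boundary never binds: since the manager is uninformed with positive probability, $H(s,v_{min},\theta)>v_{min}$, hence $\Phi(s,v_{min})>0$ for every $s$.

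Next I would prove continuity at any $s_{0}$ whose threshold is interior. Fix $\varepsilon>0$ with $\widehat{v}^{L}_{\theta}(s_{0})\pm\varepsilon\in(v_{min},v_{max})$. Strict monotonicity in $\overline{v}$ gives $\Phi(s_{0},\widehat{v}^{L}_{\theta}(s_{0})-\varepsilon)>0>\Phi(s_{0},\widehat{v}^{L}_{\theta}(s_{0})+\varepsilon)$, and continuity of $\Phi$ in $s$ preserves both strict inequalities for all $s$ in a neighborhood of $s_{0}$. Since $\Phi(s,\cdot)$ is continuous and strictly decreasing, its unique zero lies strictly between $\widehat{v}^{L}_{\theta}(s_{0})-\varepsilon$ and $\widehat{v}^{L}_{\theta}(s_{0})+\varepsilon$, i.e.\ $|\widehat{v}^{L}_{\theta}(s)-\widehat{v}^{L}_{\theta}(s_{0})|<\varepsilon$ there, which is continuity.

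Finally I would dispatch the boundary regime, which is the only delicate point. If $\widehat{v}^{L}_{\theta}(s_{0})=v_{max}$ with $\Phi(s_{0},v_{max})>0$, continuity in $s$ keeps $\Phi(s,v_{max})>0$ nearby and the threshold stays at $v_{max}$. The knife-edge case $\Phi(s_{0},v_{max})=0$ I would handle by contradiction: as $\Phi$ is increasing in $s$, the threshold equals $v_{max}$ for $s>s_{0}$, while for any sequence $s_{n}\uparrow s_{0}$ a limit $v^{\ast}$ of $\widehat{v}^{L}_{\theta}(s_{n})$ would satisfy $\Phi(s_{0},v^{\ast})=0=\Phi(s_{0},v_{max})$ by continuity, forcing $v^{\ast}=v_{max}$ by strict monotonicity in $\overline{v}$; hence both one-sided limits equal $v_{max}$. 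Running the argument for $\theta=0$ and $\theta=1$ then yields continuity of both $\widehat{v}^{L}_{0}$ and $\widehat{v}^{L}_{1}$. The only real obstacle is ruling out a jump of the fixed point as it reaches the upper boundary; everything else is the routine monotone-comparative-statics argument, and I expect no trouble from the closed form since $s$ enters $H$ merely affinely.
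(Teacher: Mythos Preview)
Your proposal is correct and rests on the same structural facts as the paper's proof: strict monotonicity of $\overline{v}\mapsto H(s,\overline{v},\theta)-\overline{v}$ (Lemma \ref{lemm:secondperiodDD}), continuity of $H$ in $s$ for fixed $\overline{v}$ (since in the auxiliary game $\theta$ is exogenous and no indicator in $s$ appears), and a separate treatment of the boundary case $\widehat{v}^{L}_{\theta}(s)=v_{max}$. The only real difference is the extraction mechanism: the paper invokes the implicit function theorem directly, using that $H$ is differentiable in $\overline{v}$ (it appears only as an endpoint of an integral) with $\partial_{\overline{v}}(H-\overline{v})\neq 0$, whereas you give an elementary $\varepsilon$-neighborhood / IVT argument that only needs continuity and strict monotonicity. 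Your route is slightly more robust (it would survive even if $H$ were merely continuous rather than differentiable in $\overline{v}$) and makes the boundary transition more explicit via a sequential argument; the paper's IFT route is shorter but implicitly relies on the same monotonicity to guarantee the nonvanishing derivative. Substantively the two proofs are interchangeable.
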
 

\begin{proof}[Proof of Lemma \ref{lemm:condthreshold}]
Consider $\widehat{v}^{L}_{1}(s)$, since the argument for $\widehat{v}^{L}_{0}(s)$ is analogous. Recall that $H(s, \widehat{v},1)$ is differentiable in $\widehat{v}$, since $\widehat{v}$ only appears as the endpoint of an integral in the expression defining it, and in particular Lemma \ref{lemm:secondperiodDD} shows that the slope increases at a rate less than 1. On the other hand, the threshold is defined implicitly as the solution to the equation $-c^{L}=H(s, \widehat{v},1) - \widehat{v}$. Since the derivative of the right hand side with respect to $\widehat{v}$ is non-zero, the implicit function theorem applies for any range of $s$ with $\widehat{v}^{L}_{1}(s) \in (v_{min},v_{max})$. Noting that we can never have $\widehat{v}^{L}_{1}(s)=v_{min}$ (since the payoff from nondisclosure is strictly positive in any threshold equilibrium, since $p <1$), we point out that if in fact $\widehat{v}^{L}_{1}(s)=v_{max}$, since this is constant at any $s$ above which this holds, we still have continuity at any possible $s$. That is, the implicit function still applies in this case, so that $\widehat{v}^{L}_{1}(s)$ is continuous at the lowest $s$ such that $\widehat{v}^{L}_{1}(s)=v_{max}$; and, since it is constant above this $s$, we have it is continuous at all $s$ as well. Thus, $\widehat{v}^{L}_{1}(s)$ is continuous in $s$, as claimed. 
\end{proof}

\begin{lemma}  \label{lemm:thresholdcompstat}
The thresholds $\widehat{v}^{L}_{1}(s)$ and $\widehat{v}^{L}_{0}(s)$ are increasing in $s$ at a rate less than 1.   Furthermore,  $\widehat{v}^{L}_{1}(s) - \widehat{v}^{L}_{0}(s)$ is decreasing in $s$.  
\end{lemma}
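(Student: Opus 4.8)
The plan is to handle the two assertions separately, in both cases exploiting the implicit characterization of the auxiliary-game threshold,
\[
\widehat{v}^{L}_{\theta}(s) - c^{L} = H(s, \widehat{v}^{L}_{\theta}(s), \theta),
\]
together with the slope facts already established for $H$ and for the auxiliary-game price function.

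For the first assertion (increasing in $s$ at a rate below $1$), I would first record that $H(s,\widehat{v},\theta)$ is strictly increasing in $s$ with $\partial_{s}H = \Pr(\phi=V\mid s, v\le\widehat{v}, \theta)\in(0,1)$ (since in the auxiliary game the veracity-contingent disclosure rate $\theta$ does not depend on $s$, the weight $\pi_V$ on $s$ is independent of $s$), while $H(s,\widehat{v},\theta)-\widehat{v}$ is strictly decreasing in $\widehat{v}$ by Lemma \ref{lemm:secondperiodDD}, i.e. $\partial_{\widehat{v}}H<1$. Applying the implicit function theorem to $\widehat{v} - c^{L} - H(s,\widehat{v},\theta)=0$ (exactly as in Lemma \ref{lemm:condthreshold}) gives
\[
\frac{d}{ds}\widehat{v}^{L}_{\theta}(s) = \frac{\partial_{s}H}{1-\partial_{\widehat{v}}H} > 0,
\]
so the threshold is increasing. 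The strict upper bound of $1$ on this derivative requires the finer estimate $\partial_{s}H < 1-\partial_{\widehat{v}}H$, which is precisely the content of the $\widetilde{\Delta}<\Delta$ computation in the proof of Lemma \ref{lemm:PriceFunc}; I would therefore cite that estimate rather than reprove it.

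For the ``furthermore'' part, the key observation is that at the late threshold the indifferent manager is exactly compensated, so $\widehat{v}^{L}_{\theta}(s) = c^{L} + P_{\theta}(s)$, where $P_{\theta}(s):=H(s,\widehat{v}^{L}_{\theta}(s),\theta)$ is the auxiliary-game equilibrium nondisclosure price written as a function of the signal. Subtracting the two identities,
\[
\widehat{v}^{L}_{1}(s) - \widehat{v}^{L}_{0}(s) = P_{1}(s) - P_{0}(s),
\]
which reduces the claim to showing that the price gap between the $\theta=1$ and $\theta=0$ games is decreasing in $s$. This is immediate from Lemma \ref{lemm:slopelemma}: the slope of $P_{\theta}(s)$ is decreasing in $\theta$, hence $P_{1}'(s)\le P_{0}'(s)$ and $(P_{1}-P_{0})'(s)\le 0$. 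For orientation, the two price curves coincide at the Dye-type signal $s^{B}$ of Lemma \ref{lemm:dyesignal} (where the nondisclosure expectation is independent of the belief about veracity), so the difference is positive below $s^{B}$ and negative above it, consistent with a decreasing function crossing zero at $s^{B}$.

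The only delicate point I anticipate is bookkeeping at corner signals where $\widehat{v}^{L}_{\theta}(s)=v_{max}$: there the threshold is locally constant and has slope $0$, so I would note that the slope comparison of Lemma \ref{lemm:slopelemma} persists weakly (treating the threshold adjustment $\widetilde{\Delta}$ as $0$, just as in the proof of Lemma \ref{lemm:PriceFunc}), which is enough to keep $P_{1}-P_{0}$ weakly decreasing across the interior/corner transition. No new estimate is needed; the entire content is inherited from the slope lemmas for the auxiliary-game price, and the main step is simply recognizing that the threshold difference equals the price difference so that Lemma \ref{lemm:slopelemma} applies verbatim.
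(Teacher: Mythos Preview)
Your argument is correct and close in spirit to the paper's, but the packaging of the ``furthermore'' part is genuinely different. For the first assertion you do exactly what the paper does implicitly: cite the $\widetilde\Delta<\Delta$ estimate from Lemma~\ref{lemm:PriceFunc}, which is the right move since the indifference condition makes the threshold slope equal to the equilibrium-price slope. For the second assertion, the paper works directly with the partial derivatives of $H$, recording the two inequalities $\partial_s H(\cdot,\cdot,0)>\partial_s H(\cdot,\cdot,1)$ and $\partial_{\widehat v} H(\cdot,\cdot,1)>\partial_{\widehat v} H(\cdot,\cdot,0)$ and arguing from the implicit condition that $\widehat v^L_1$ must move less than $\widehat v^L_0$ when $s$ changes. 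You instead observe the identity $\widehat v^L_\theta(s)=c^L+P_\theta(s)$, which collapses the threshold gap to the equilibrium-price gap and lets you invoke Lemma~\ref{lemm:slopelemma} verbatim. Your route is arguably cleaner, because it makes explicit that the ``furthermore'' claim is the same fact as Lemma~\ref{lemm:slopelemma} restricted to $\theta\in\{0,1\}$; the paper's direct partial-derivative comparison has the minor awkwardness that the two displayed inequalities push the implicit-function ratio $\partial_s H/(1-\partial_{\widehat v}H)$ in opposite directions, so one still needs the full computation behind Lemma~\ref{lemm:PriceFunc}/\ref{lemm:slopelemma} to close the argument. The one point worth tightening in your write-up is to be explicit that Lemma~\ref{lemm:slopelemma}, as proved by inspecting (\ref{eq:ratio}), is a statement about the equilibrium price in the auxiliary game (with the $\theta$-dependent threshold built in), so that your $P_\theta$ is exactly the object whose slope that lemma controls.
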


\begin{proof}[Proof of Lemma \ref{lemm:thresholdcompstat}]
Note that: 

\begin{equation*} \frac{ \partial}{\partial s}H(s,\widehat{v},0)  >  \frac{ \partial}{\partial s}H(s,\widehat{v},1)  ~~~~~ \frac{ \partial}{\partial \widehat{v}}H(s,\widehat{v},1) > \frac{ \partial}{\partial \widehat{v}}H(s,\widehat{v},0) .
\end{equation*} 

\noindent Indeed,  if $s$ is never disclosed when it is veracious (i.e., $\theta=0$),  then upon seeing nondisclosure, the probability it is the true value is higher.  Thus,  an increase in $s$ has more of an impact on the expectation, whereas an increase in $\widehat{v}$ has less of an impact.  Thus, for a fixed change in $s$,  in order to ensure that $H(s,\widehat{v},\theta) - \widehat{v} =c^{L}$,  the change in the left hand side is smaller when $\theta=1$ than when $\theta=0$,  and furthermore that $\widehat{v}$ does not need to increase as much in order to ensure equality holds.  Thus,  $\widehat{v}^{L}_{1}(s)$ increases by less than $\widehat{v}^{L}_{0}(s)$. 
\end{proof}

\subsection{Uniqueness}
We now put the above arguments together to present our results on uniqueness, starting with the late disclosure decision. Recall that we denote a candidate late-disclosure threshold function by $v^{L}(s)$.

\begin{lemma}  \label{lemm:uniquethresh}
The unique second period equilibrium strategy, conditional on the manager disclosing early above $v^{E}$, involves threshold $\widehat{v}^{L}_{1}(s)$ for $s > s^B$ and $\widehat{v}^{L}_{0}(s)$ for $s < s^B$.
\end{lemma}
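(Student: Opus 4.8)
The plan is to determine, separately for each signal $s$, the unique belief $\theta$ (the probability that a veracious signal is disclosed by an informed manager) that is consistent with best responses, and then to read the threshold off the auxiliary game. The natural entry point is Lemma \ref{lemm:interiorvs}, which already confines any candidate equilibrium threshold $v^{L}(s)$ to the interval with endpoints $\widehat{v}^{L}_{0}(s)$ and $\widehat{v}^{L}_{1}(s)$. It therefore suffices to show that in equilibrium $v^{L}(s)$ must be the \emph{smaller} of the two, $v^{L}(s)=\min\{\widehat{v}^{L}_{0}(s),\widehat{v}^{L}_{1}(s)\}$, and then to invoke the crossing of these two auxiliary thresholds at $s^{B}$.

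First I would set up the consistency condition for $\theta$. Conditional on the signal being veracious, the informed manager's value is exactly $s$, so the market conjecture $\theta$ must coincide with the indicator of whether this single value-$s$ type discloses. Since late disclosure occurs iff $v-c^{L}>P(s,\emptyset)$ while the threshold is defined by $\widehat{v}^{L}_{\theta}(s)-c^{L}=P(s,\emptyset)$, the rescheduling cost cancels in the comparison and the value-$s$ type discloses iff $s>\widehat{v}^{L}_{\theta}(s)$. Hence $\theta=1$ is self-consistent iff $s>\widehat{v}^{L}_{1}(s)$, and $\theta=0$ is self-consistent iff $s\le\widehat{v}^{L}_{0}(s)$; an interior $\theta\in(0,1)$ requires the value-$s$ type to be exactly indifferent, $s=\widehat{v}^{L}_{\theta}(s)$.

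Next I would show that these consistency regions partition the signal line at $s^{B}$ with neither overlap nor gap, using three facts. By Lemma \ref{lemm:dyesignal}, the nondisclosure expectation at $s^{B}$ is independent of the veracity belief, so $\widehat{v}^{L}_{0}(s^{B})=\widehat{v}^{L}_{1}(s^{B})$ and the value-$s^{B}$ type is exactly indifferent there; this is the signal at which the veracious realization equals the nondisclosure expectation, so resolving veracity does not move the posterior mean. By Lemma \ref{lemm:thresholdcompstat}, each $\widehat{v}^{L}_{\theta}$ increases at a rate strictly below $1$, so the relative disclosure incentive $s-\widehat{v}^{L}_{\theta}(s)$ is strictly increasing and changes sign exactly once, at $s^{B}$; and the same lemma gives that $\widehat{v}^{L}_{1}(s)-\widehat{v}^{L}_{0}(s)$ is decreasing, hence $\widehat{v}^{L}_{1}(s)<\widehat{v}^{L}_{0}(s)$ for $s>s^{B}$ and the reverse for $s<s^{B}$. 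Combining these, for $s>s^{B}$ we get $s>\widehat{v}^{L}_{0}(s)>\widehat{v}^{L}_{1}(s)$, so the value-$s$ type strictly prefers disclosure under either conjecture: $\theta=0$ is inconsistent and only $\theta=1$ survives, giving $v^{L}(s)=\widehat{v}^{L}_{1}(s)=\min$. Symmetrically, for $s<s^{B}$ the value-$s$ type strictly withholds under either conjecture, only $\theta=0$ survives, and $v^{L}(s)=\widehat{v}^{L}_{0}(s)=\min$; at $s=s^{B}$ the two thresholds coincide, so the claim is immediate.

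I expect the main obstacle to be precisely the coordination of the two ``extreme'' thresholds at $s^{B}$. The delicate step is verifying that the sign change of $s-\widehat{v}^{L}_{\theta}(s)$ lands exactly at the Dye signal, equivalently that the common value $\widehat{v}^{L}_{0}(s^{B})=\widehat{v}^{L}_{1}(s^{B})$ renders the value-$s^{B}$ type indifferent; this is what forces the switch to occur at $s^{B}$ rather than at two distinct signals. The danger being ruled out is that for signals near $s^{B}$ the point $s$ could fall strictly inside the open interval $(\widehat{v}^{L}_{1}(s),\widehat{v}^{L}_{0}(s))$, in which case \emph{both} $\theta=0$ and $\theta=1$ would be self-consistent and uniqueness would fail; the strict rate-below-one monotonicity from Lemma \ref{lemm:thresholdcompstat} is exactly the ingredient that makes the incentive cross zero only once and thereby eliminates any such window. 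Finally I would note that interior $\theta$ can occur only at the single indifference signal $s=s^{B}$, so it contributes no additional equilibria, completing the identification $v^{L}(s)=\min\{\widehat{v}^{L}_{0}(s),\widehat{v}^{L}_{1}(s)\}$ and hence the stated switch at $s^{B}$.
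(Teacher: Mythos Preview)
Your proposal is correct and follows essentially the same route as the paper: you invoke Lemma~\ref{lemm:interiorvs} to confine $v^{L}(s)$ between the auxiliary thresholds, use Lemma~\ref{lemm:dyesignal} for their coincidence at $s^{B}$ and Lemma~\ref{lemm:thresholdcompstat} for the slope-below-one property and the ordering, and then do the same case split on $s\gtrless s^{B}$ to pin down $\theta$. Your framing via the self-consistency of $\theta$ and the identification $v^{L}(s)=\min\{\widehat{v}^{L}_{0}(s),\widehat{v}^{L}_{1}(s)\}$ is slightly more explicit than the paper's, but the underlying argument is the same.
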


\begin{proof}[Proof of Lemma \ref{lemm:uniquethresh}]
The proof essentially follows from putting together Lemmas \ref{lemm:interiorvs} and \ref{lemm:thresholdcompstat}. First, note that Lemma \ref{lemm:interiorvs} and the definition of $v^B$ immediately implies that $\widehat{v}^{L}(s^B)=v^B$. We consider two cases separately:

\begin{itemize} 
\item Suppose that $s > s^B$. In this case, we note that by Lemma \ref{lemm:thresholdcompstat}, we have that $s > \widehat{v}^{L}_{0}(s) > \widehat{v}^{L}_{1}(s)$, and by Lemma \ref{lemm:interiorvs} the only candidate values for $v^{L}(s)$ satisfy $\widehat{v}^{L}_{0}(s)  \geq v^{L}(s) \geq  \widehat{v}^{L}_{1}(s)$. From inspection, we therefore see that, for any possible conjecture of the market regarding the manager's behavior following an informative signal, the signal $s$ is above the market threshold. It follows that the only possible outcome in equilibrium is that the manager discloses informative signals. As a result, the unique equlibrium threshold is $v^{L}(s) =  \widehat{v}^{L}_{1}(s)$. 

\item Suppose that $s < s^B$. Again by Lemma \ref{lemm:thresholdcompstat}, we have that $s < \widehat{v}^{L}_{0}(s) < \widehat{v}^{L}_{1}(s)$. Again from inspection, we see that for any possible conjecture of the market regarding the manager's behavior following an informative signal, the signal $s$ is below the disclosure threshold. It follows that the only possible outcome in equilibrium is that the manager does not disclose informative signals. As a result, the unique equilibrium threshold is $v^{L}(s) =  \widehat{v}^{L}_{0}(s)$.
\end{itemize}

\noindent Putting these observations together completes the proof. \end{proof}

At this point, we have shown that we can restrict to equilibria where the first-period disclosure decision is characterized by a threshold, and furthermore, where the second-period disclosure threshold (function) is unique conditional on the first-period disclosure threshold. Lemma \ref{lemm:condthreshold} implies that this threshold is continuous in the external signal. 

We now show that the nondisclosure payoff increases in the threshold at a rate less than 1, the last ingredient in our uniqueness argument. 

\begin{lemma} \label{lemm:finalpart} 
Suppose the market assumes that, when the manager is indifferent between disclosure decisions, disclosure occurs with probability $\widetilde{\theta}$. There exists at most one value $v^{E}$ such that the manager is indifferent between disclosing early and not when the value is $v^{E}$,  provided that $v^L(s)$ is itself an equilibrium disclosure threshold given the corresponding threshold $v^{E}$. 
\end{lemma}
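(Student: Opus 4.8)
The plan is to show that the net benefit of remaining silent at date 2, evaluated along the ``diagonal'' where the manager's value coincides with the market's conjectured early threshold, is strictly decreasing; since an indifferent early type is exactly a zero of this function, strict monotonicity yields at most one $v^{E}$. Concretely, let $W(v;\widehat{v}^{E})$ denote the manager's expected continuation payoff from not disclosing early when her value is $v$ and the market conjectures early threshold $\widehat{v}^{E}$ (with the induced late threshold from Lemma \ref{lemm:uniquethresh}), so that $W(v;\widehat{v}^{E}) = q\max\{v-c^{L},P(v,\emptyset)\} + (1-q)\,\mathbb{E}_{x}[\max\{v-c^{L},P(x,\emptyset)\}]$, and define $\Phi(v^{E}) := (v^{E}-c^{E}) - W(v^{E};v^{E})$. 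An indifferent early type is a zero of $\Phi$, so it suffices to prove $\frac{d}{dv^{E}} W(v^{E};v^{E}) < 1$, since disclosing early pays $v-c^{E}$ with slope exactly $1$.

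The first step is to record the two roles $v^{E}$ plays and split the total rate into a direct effect (the manager's own value) and an indirect effect (the conjecture, which enters only through the price function). For the direct effect, Lemma \ref{lemm:dyesignalsfirst}, via the price slope bound of Lemma \ref{lemm:PriceFunc}, already gives that $W(\cdot;\widehat{v}^{E})$ increases in $v$ at a rate strictly below $1$ for a fixed conjecture. The second step is to control the indirect effect, where I would partition the signal realizations into those for which late disclosure still occurs in the pool ($v^{L}(s) < v^{E}$) and those for which it does not ($v^{L}(s) \ge v^{E}$). For the former the relevant truncation is $v^{L}(s)$, and since the auxiliary-game thresholds $\widehat{v}^{L}_{\theta}(s)$ do not depend on $v^{E}$, the only channel through which $v^{E}$ enters is the switching point $s^{B}$; because the price is continuous in $s$ there (by Lemma \ref{lemm:dyesignal} and the coincidence $\widehat{v}^{L}_{1}(s^{B})=\widehat{v}^{L}_{0}(s^{B})=v^{B}$ noted in the proof of Lemma \ref{lemm:uniquethresh}, i.e.\ the ``twist through $s^{B}$'' of Lemma \ref{lemm:slopelemma}), perturbing $s^{B}$ shifts a kink but not a jump and contributes nothing to first order. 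For the latter signals the effective truncation is $v^{E}$ itself, so the price does rise with $v^{E}$, but only at the rate of the conditional mean $\mathbb{E}[v\mid v\le v^{E}]$, which Lemma \ref{lemm:BB} bounds strictly below $1$.

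The third step is to assemble the bounds while tracking veracity. Since veracity is independent of the manager's value, the payoff is a fixed convex combination $q\,(\text{veracious}) + (1-q)\,(\text{non-veracious})$. In the non-veracious block the payoff is $\mathbb{E}_{x}[\max\{v-c^{L},P(x,\emptyset)\}]$, and combining the slope-in-$s$ bound, the continuity at $s^{B}$, and the truncated-mean bound shows it grows along the diagonal at a rate strictly below $1$. In the veracious block ($s=v$) the payoff is $\max\{v-c^{L},P(v,\emptyset)\}$; using the dichotomy of Proposition \ref{lm4} (that $v^{L}(s)>s$ when $s<v^{B}$ and $v^{L}(s)<s$ when $s>v^{B}$) this equals $v-c^{L}$ exactly when late disclosure is optimal (slope $1$) and otherwise equals a price of slope below $1$, so its rate is at most $1$. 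Because the non-veracious block carries strictly positive weight $1-q>0$, the convex combination has rate strictly below $1$, giving $\frac{d}{dv^{E}}W(v^{E};v^{E})<1$ and hence $\Phi'>0$, so $\Phi$ has at most one zero. I would phrase the rates via one-sided finite differences so that the kinks from the $\max$ operators cause no trouble and no differentiability is assumed.

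The main obstacle is the indirect effect. Because the equilibrium price is itself a function of the conjectured threshold, moving $v^{E}$ perturbs the manager's value and the whole price function simultaneously, and a crude bound would permit the combined rate to exceed $1$. The delicate case is precisely the block of signals with no late disclosure, where raising $v^{E}$ genuinely raises the nondisclosure price; the entire argument hinges on Lemma \ref{lemm:BB} to certify that this increase occurs at a rate below $1$, together with the continuity at $s^{B}$ to neutralize the regime-switching channel. Pinning down these two facts carefully, rather than the final assembly, is where the real work lies.
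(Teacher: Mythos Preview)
Your approach is essentially the same as the paper's: both establish uniqueness by showing that the continuation payoff from not disclosing early, evaluated along the diagonal $v=v^{E}$, has slope strictly less than $1$, and both obtain the strict inequality by splitting into the veracious and non-veracious components, with the non-veracious piece carrying the strictness via the option-value argument (positive probability of no late disclosure). The organizational difference is that the paper first establishes directly, via the implicit condition for $\widehat{v}^{L}_{\theta}(s)$, how the late thresholds (and hence the price function) move in $v^{E}$ --- zero effect when $\widehat{v}^{L}_{\theta}(s)<v^{E}$, rate below $1$ otherwise --- and then reads off the veracious bound from that; you instead introduce the direct/indirect decomposition and recombine by veracity. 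Your partition into $v^{L}(s)<v^{E}$ versus $v^{L}(s)\geq v^{E}$, together with the continuity-at-$s^{B}$ observation, is exactly the content of the paper's two-case analysis, just phrased differently.

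One point to tighten: in your veracious block you write ``otherwise equals a price of slope below $1$,'' appealing to the slope-in-$s$ bound of Lemma~\ref{lemm:PriceFunc}. But along the diagonal both the signal ($s=v^{E}$) and the conjecture move together, so the relevant rate is the \emph{total} derivative of $P(v^{E},\emptyset;v^{E})$, not just $\partial P/\partial s$. Bounding the direct and indirect effects each by $<1$ and adding is exactly the ``crude bound'' you warn against. The paper avoids this by arguing that both one-sided limits of $P(v,\emptyset)$ at $v^{E}$ have rate less than $1$ in $v^{E}$, invoking its prior two-case threshold analysis (which bundles both channels). Your step-2 indirect-effect analysis is the right ingredient for this, but you need to invoke it \emph{here} in place of the slope-in-$s$ bound; otherwise that sentence does not deliver the $\leq 1$ you claim.
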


Note that this shows the equilibrium is unique up to the choice of $\widetilde{\theta}$; this completes the proof of the Proposition since we take equilibrium to involve $\widetilde{\theta}=0$. See the discussion in the main text in Section \ref{freqd} for a discussion of this assumption.

\begin{proof}[Proof of Lemma \ref{lemm:finalpart}] 
We first consider how the equilibrium price function changes in the \emph{early} disclosure threshold.  Using our assumption that $v^L(s)$ is itself an equilibrium disclosure decision,  we can rewrite equation (\ref{eq:ratio}) with $v^{E}$ explicitly written in. Let 
\begin{equation*} 
r(s,\widehat{v}^{L}_{0}(s),v^{E},\widetilde{\theta})=\mathbbm{1}_{s >\max\{\widehat{v}^{L}_{0}(s),v^{E}\}}+\widetilde{\theta}\mathbbm{1}_{s=\max\{\widehat{v}^{L}_{0}(s),v^{E}\}},
\end{equation*}
refer to the probability that an informative signal would be disclosed. This yields:

\begin{equation} 
0= \widehat{v}^{L}_{0}(s)- c^{L} -  \frac{ qs(1-p\cdot r(s,\widehat{v}^{L}_{0}(s),v^{E},\widetilde{\theta})) +(1-q) \left((1-p)\int_{v_{min}}^{v_{max}} v g(v)dv + p \int_{v_{min}}^{\max\{\widehat{v}^{L}_{0}(s),v^{E}\}} v g(v) dv \right)}{q(1-r(s,\widehat{v}^{L}_{0}(s),v^{E},\widetilde{\theta})) +(1-q) \left((1-p)\int_{v_{min}}^{v_{max}} g(v)dv + p \int_{v_{min}}^{\max\{v^{E},\widehat{v}^{L}_{0}(s)\}} g(v) dv \right)} .  
\end{equation}

\noindent Indeed, recall that this derivation held for an arbitrary first period strategy, and therefore holds assume the first-period strategy is used in the first period (which, unlike when this expression was first introduced, we have now shown must be the case).  We consider two-cases separately:

\begin{itemize} 
\item If $\widehat{v}^{L}_{0}(s)$ is such that $v^{E} >\widehat{v}^{L}_{0}(s)$, an increase in $v^{E}$ has no impact on $\widehat{v}^{L}_{0}(s)$; 

\item If $\widehat{v}^{L}_{0}(s)$ is such that $v^{E} <\widehat{v}^{L}_{0}(s)$, then the increase in $v^{E}$ increases $\widehat{v}^{L}_{0}(s)$ at a rate less than 1; the argument is identical to the proof from Lemma \ref{lemm:thresholdcompstat}, since $v^{E}$ takes the same role as $\widehat{v}^{L}_{0}(s)$ in the ratio in equation (\ref{eq:ratio}). 
\end{itemize}

\noindent  Note that an identical argument applies to $\widehat{v}^{L}_{1}(s)$, meaning that this will also increase at a rate less than 1.  So consider the equilibrium price function.  Define $v^B$ precisely as it was defined previously,  namely as the intersection point of the thresholds in the two auxiliary games.    If $\min\{\widehat{v}^{L}_{1}(s), \widehat{v}^{L}_{0}(s)\} < v^{E}$,  then it is exactly the same as if $v^{E}=v_{max}$.  If $\min\{\widehat{v}^{L}_{1}(s), \widehat{v}^{L}_{0}(s)\}  > v^{E}$, then it is equal to $\mathbb{E}[v \lvert v \leq v^{E} ]$ (in other words, if the intersection point is less than $v^{E}$, then $v^{E}$ does not matter and we can set it equal to $v_{max}$; if it is less than it, the price function is in a range where it is increasing at a rate less than 1, since the argument is identical to the late disclosure case).

With this in mind, we note the following: 

\begin{itemize} 
\item Consider the manager's expected payoff from not disclosing early if the second-period signal is informative. Note that, given $\widetilde{\theta}$, the payoff of the manager in this event is a convex combination, with weight $\widetilde{\theta}$, between $\lim_{v \rightarrow^{-} v^{E}} P(v, \emptyset)$ and $\lim_{v \rightarrow^{+} v^{E}} P(v, \emptyset)$; but since both of these increase at a rate less than 1 (by the previous argument in this proof), the overall rate of change is less than 1.

\item On the other hand, suppose the signal is uninformative.  In this case,  the fact that the manager's payoff increases at a rate less than 1 follows an identical argument from Lemma \ref{lemm:dyesignalsfirst};  the manager's payoff given a signal $s$ is $\max\{v^{E}-c^{E}, P(s, \emptyset)\}$;  as long as $v^{E}-c^{E}$ is less than or equal to  $P(v_{max}, \emptyset)\}$. 

On the other hand, suppose that $v^{E}-c^{E} \geq P(v_{max}, \emptyset)$.   Then in this case, since $s < v_{max}$ with probability 1, the manager would strictly prefer to disclose.  Therefore,  the indifference condition can only be satisfied if $v^{E}$ is such that $v^{E} -c^{E} < P(v_{max}, \emptyset)$.

Putting these observations together,  consider $\int_{v_{min}}^{v_{max}} \max\{v^{E}-c^{E}, P(s, \emptyset)\} g(s)ds$.  Since there is positive probability that the signal is such that $v^{E}-c^{E} < P(s, \emptyset)$,  this cannot increase at a rate faster than 1. 

\end{itemize}

Now crucially, the probability that the signal is informative is constant and independent of $v^{E}$.  Therefore,  if $r_{a}(v^{E})$ is the rate of change when the signal is veracious and $r_{i}(v^{E})$ is the rate of change when the signal is non-veracious,  the total rate of change is $q r_{a}(v^{E}) + (1-q) r_{i}(v^{E})$.  Since $r_{a}(v^{E}) \leq 1$ and $r_{i}(v^{E}) < 1$ by the above arguments,  provided that $v^{E}$ is in a range such that indifference can possibly be satisfied (i.e., $v^{E} - c^{L} < P(v_{max}, \emptyset)$. Therefore,  the total payoff from nondisclosure cannot increase at a rate greater than or equal to 1 as $v^{E}$ increases.   Since the payoff in the event that the manager discloses early increases at a rate \emph{equal} to 1,   we have that there can only be one threshold $v^{E}$ such that the manager is indifferent between disclosing early and not. \end{proof}

\subsection{Additional Discussion}  \label{app:summary}

\noindent We conclude with a discussion of several properties that are of further interest, beyond the scope of the results at hand: 

\begin{itemize}
\item[1.] The results in Section \ref{sect:early} show that \emph{no} equilibrium---mixed or otherwise---can involve early disclosure below $v^{B}$. 

\item[2.] For early disclosure, there exist multiple threshold equilibria that can arise by assuming different tiebreaking rules of the manager in the case of indifference. As discussed in the main text, this corresponds to the expected price intersecting the 45 degree line at different points between the left and right limit of the expected price function at the discontinuity. Our analysis implies all equilibria (including mixed) are of this form if disclosure is only early. This follows from Lemma \ref{lemm:dyesignalsfirst}, which leaves open the possibility of this form of multiplicity.

\item[3.] In the knife-edge cases where disclosure is only late and disclosure costs are zero, uniqueness can be obtained without log-concavity; the usual argument for uniqueness applies to the case of late disclosure. Log-concavity is primarily invoked to ensure that $H(s, \overline{v}, \theta)-\overline{v}$ is decreasing in $\overline{v}$, which in turn is used to obtain certain natural properties of $v^{L}(s)$ (for instance, that it is increasing, and continuous by the implicit function theorem). 

\item[4.] In our setting, disclosure dynamics can be generated by disclosure costs, a feature which in itself may be of independent interest. That is, for costs $c^{L}> c^{E}$, disclosure may occur both early and late. We highlight some technical issues relevant more generally to future work on dynamic costly disclosure with this in mind.

To obtain unique late-disclosure thresholds with dynamic disclosure, we require log-concavity of $G$ \emph{conditional} on non-(early) disclosure. While vacuous if there is no early disclosure at all---or, for that matter, if late disclosure is not allowed either---in the general case this step is non-trivial and our proof requires the restriction to continuous late-disclosure thresholds (since continuity implies that the slope is less than 1). In other words, given a threshold $v^{E}$, our proof shows uniqueness of a disclosure threshold $v^{L}(s)$ that is continuous and increasing in $s$---and visa versa. That said, we do not rule out exotic early-disclosure strategies which induce non-log-concave conditional distributions, supported by the manager switching between different $v^{L}(s)$ functions depending on the signal realization (with these switches generating discontinuities). While this cannot occur if disclosure is only early or only late, we also conjecture that this possibility cannot emerge even more generally. 
\end{itemize}

\newpage

\singlespacing

\end{document}